\documentclass[11pt,reqno,letterpaper]{amsart}

\usepackage{booktabs}
\usepackage{color}
\usepackage{amsmath,amssymb,amsthm}
\usepackage{mathrsfs}
\usepackage{mathtools}
\usepackage{algorithm,algpseudocode}
\usepackage{graphicx}
\usepackage{tikz}
\usepackage[T1]{fontenc}
\usepackage[utf8]{inputenc}
\usepackage[margin=1in]{geometry}
\usepackage[shortlabels]{enumitem}
\usepackage{microtype}
\usepackage{aliascnt}
\usepackage[colorlinks=true,allcolors=blue,pagebackref=true]{hyperref}
\usepackage[noabbrev,capitalize,nameinlink]{cleveref}

\crefname{equation}{}{}
\Crefname{equation}{}{}
\crefname{ALG@line}{Line}{Lines}
\Crefname{ALG@line}{Line}{Lines}

\numberwithin{equation}{section}
\newtheorem{theorem}{Theorem}[section]
\crefname{theorem}{Theorem}{Theorems}
\Crefname{theorem}{Theorem}{Theorems}
\newaliascnt{proposition}{theorem}

\aliascntresetthe{proposition}
\crefname{proposition}{Proposition}{Propositions}
\Crefname{proposition}{Proposition}{Propositions}
\newaliascnt{lemma}{theorem}
\newtheorem{lemma}[lemma]{Lemma}
\aliascntresetthe{lemma}
\crefname{lemma}{Lemma}{Lemmas}
\Crefname{lemma}{Lemma}{Lemmas}
\newaliascnt{observation}{theorem}

\aliascntresetthe{observation}
\crefname{observation}{Observation}{Observations}
\Crefname{observation}{Observation}{Observations}
\newaliascnt{fact}{theorem}

\crefname{fact}{Fact}{Facts}
\Crefname{fact}{Fact}{Facts}
\newaliascnt{claim}{theorem}
\newtheorem{claim}[claim]{Claim}
\aliascntresetthe{claim}
\crefname{claim}{Claim}{Claims}
\Crefname{claim}{Claim}{Claims}
\newaliascnt{corollary}{theorem}

\aliascntresetthe{corollary}
\crefname{corollary}{Corollary}{Corollaries}
\Crefname{corollary}{Corollary}{Corollaries}

\theoremstyle{definition}
\newaliascnt{definition}{theorem}
\newtheorem{definition}[definition]{Definition}
\aliascntresetthe{definition}
\crefname{definition}{Definition}{Definitions}
\Crefname{definition}{Definition}{Definitions}

\theoremstyle{remark}
\newaliascnt{remark}{theorem}
\newtheorem{remark}[remark]{Remark}
\aliascntresetthe{remark}
\crefname{remark}{Remark}{Remarks}
\Crefname{remark}{Remark}{Remarks}

\newcommand{\mb}{\mathbb}

\newcommand{\mc}{\mathcal}

\newcommand{\on}{\operatorname}
\newcommand{\wh}{\widehat}

\newcommand{\R}{\mathbb R}
\newcommand{\Q}{\mathbb Q}
\newcommand{\C}{\mathbb C}
\newcommand{\E}{\mathbb E}
\newcommand{\PP}{\mathbb P}
\newcommand{\one}{\mathbf 1}
\newcommand{\cM}{\mathcal M}

\newcommand{\cD}{\mathcal D}

\newcommand{\norm}[1]{\lVert #1\rVert}

\newcommand{\Var}{\operatorname{Var}}

\algrenewcommand{\algorithmicrequire}{\textbf{Input:}}
\algrenewcommand{\algorithmicensure}{\textbf{Output:}}

\DeclareMathOperator{\per}{per}
\DeclareMathOperator{\diag}{diag}
\DeclareMathOperator{\Cov}{Cov}

\newenvironment{proof*}[1][\proofname]{
  
  \begin{proof}[#1]}{\end{proof}}

\allowdisplaybreaks
\title{A deterministic $(1+\varepsilon)^n$ approximation for the permanent of a nonnegative matrix}

\author{Dingding Dong}
\address{Department of Mathematics, California Institute of Technology, Pasadena, CA 91125, USA}
\email{ddong124@caltech.edu}

\author{Vishesh Jain}
\address{Department of Mathematics, Statistics, and Computer Science,
University of Illinois Chicago, Chicago, IL 60607, USA}
\email{visheshj@uic.edu}

\begin{document}
\begin{abstract}
For every fixed $0<\varepsilon\le1$, we give a deterministic
strongly polynomial algorithm that, given a nonnegative matrix
$A\in\R_{\ge0}^{n\times n}$, returns $Q$ satisfying
$\per A\le Q\le(1+\varepsilon)^n\per A$.
\end{abstract}

\maketitle

\section{Introduction}

Let $A=(A_{ij})\in\R_{\ge0}^{n\times n}$. Its permanent is
\[
 \per A=\sum_{\sigma\in S_n}\prod_{i=1}^n A_{i,\sigma(i)},
\]
where $S_n$ denotes the symmetric group on $n$ elements.
The permanent counts perfect matchings in the bipartite support graph of $A$ when $A$ is a $0$--$1$
matrix. For general nonnegative $A$, it is the total weight of perfect
matchings, where a matching
$M$ has weight $\prod_{ij\in M}A_{ij}$.

Valiant~\cite{Valiant} proved that computing the permanent is
$\#\mathrm P$-complete even for $0$--$1$ matrices. On the algorithmic side, Jerrum, Sinclair, and Vigoda~\cite{JSV} gave
a fully polynomial randomized approximation scheme for
nonnegative matrices. The existence of a deterministic (fully) polynomial
approximation scheme remains a major open problem.

The known deterministic polynomial-time algorithms for general
nonnegative matrices have approximation factors exponential in $n$. Throughout, we measure multiplicative guarantees by the width of a certified interval,~i.e.,~a factor $C \geq 1$ approximation consists of polynomial-time computable
quantities $L(A)$ and $U(A)$ satisfying
\[
  L(A)\leq\on{per}(A)\leq U(A)
  \qquad\text{and}\qquad
  U(A)\leq C L(A).
\]
Linial, Samorodnitsky, and Wigderson~\cite{LSW} obtained a factor
$e^n$ by combining matrix scaling with the van der Waerden inequality.
Subsequent improvements used the Bethe permanent, which can be
computed by convex optimization~\cite{Vontobel}.
Gurvits~\cite{Gurvits} proved that the Bethe permanent is
a lower bound for the permanent, and Gurvits and
Samorodnitsky~\cite{GS} proved the reverse inequality with a
factor $2^n$. Their analysis also yields an approximation
factor of approximately $1.9022^n$, as noted
in~\cite[Footnote~3]{AR}. 
Anari and Rezaei~\cite{AR} proved that the Bethe permanent
approximates the permanent within a factor $2^{n/2}$,
and that this bound is sharp.
Thus a better general approximation requires more than the
Bethe permanent alone.
Recently, Anari~\cite{Anari} and Narang and Perkins~\cite{NP}
obtained approximation factors $(\sqrt{2}-\delta)^n$ for a small absolute constant
$\delta > 0$ by adding corrections to the Bethe approximation.

We show that the base of the exponential factor can be taken to be any constant greater than one.

\begin{theorem}\label{thm:main}
For every fixed $0<\varepsilon\le1$, there is a deterministic
algorithm that, given $A\in\R_{\ge0}^{n\times n}$, returns $Q\in\R_{\ge0}$ satisfying
\begin{equation}\label{eq:main}
 \per A\le Q\le(1+\varepsilon)^n\per A.
\end{equation}
For each fixed $\varepsilon$, the algorithm is strongly polynomial,
using $n^{O(\varepsilon^{-2}\log^3(2/\varepsilon))}$ arithmetic
operations and comparisons.
\end{theorem}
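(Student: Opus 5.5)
We first reduce to the case where $A$ is doubly stochastic (after Sinkhorn scaling, which is strongly polynomial up to the usual approximate-scaling issues; we handle the approximation by scaling to a matrix whose row and column sums lie in $[1-\varepsilon/n^2,1]$). The runtime $n^{O(\varepsilon^{-2}\log^3(2/\varepsilon))}$ suggests an exhaustive enumeration over structures of size $k=O(\varepsilon^{-2}\log^3(2/\varepsilon))$, combined with a Bethe/van der Waerden-type estimate on the remainder.

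The plan is a decomposition of the doubly stochastic matrix into \emph{heavy} and \emph{light} parts. Fix a threshold $\tau=\tau(\varepsilon)$ and call an entry heavy if $A_{ij}\ge\tau$. Each row has at most $1/\tau$ heavy entries. The intuition is that on light entries, the matrix behaves like a ``spread-out'' doubly stochastic matrix. For such matrices, a Bregman/Schrijver-type upper bound and the Gurvits lower bound $\per A\ge\prod_{i,j}(1-A_{ij})^{1-A_{ij}}$ differ by only $e^{O(\tau)n}$. Indeed, the Bethe ratio loss per row is governed by $\sum_j A_{ij}^2\lesssim\tau$. Choosing $\tau\approx\varepsilon$ makes the Bethe permanent a $(1+\varepsilon)^{n}$-approximation on the light part.

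The heavy entries form a bipartite graph $H$ of maximum degree at most $1/\tau$. Heavy entries carry the correlation that breaks Bethe; Anari--Rezaei's extremal examples are built from $2\times2$ blocks of heavy entries. We write
\[
\per A=\sum_{M\subseteq H\text{ matching}}\Bigl(\prod_{e\in M}A_e\Bigr)\,\per(\text{light part of }A\text{ on unmatched rows and columns}),
\]
and the task is to estimate this sum. We would handle it by a cluster-expansion or decay-of-correlations argument. Components of $H$ of size at most $k$ can be summed by brute force. Large components need truncation, which we would justify by the tree-recursion / correlation decay bounds of the Bethe approximation: errors from conditioning at distance $\ge\log(1/\varepsilon)/\varepsilon$ decay geometrically. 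The algorithm would then enumerate, for each row, local configurations in a radius-$r$ neighborhood with $r=O(\varepsilon^{-1}\log(2/\varepsilon))$. Neighborhoods of bounded-degree graphs of this radius have size $(1/\tau)^r$, which is too big, so instead we would enumerate subsets of size $k=O(\varepsilon^{-2}\log^3(2/\varepsilon))$ of rows/columns. That gives the $n^{O(k)}$ count. For each enumerated set we solve a Bethe-type convex program, and we output a product/telescoping estimate $Q=\prod_i \per(A_{\le i})/\per(A_{\le i-1})$-style ratios, each computed to within a factor $1+O(\varepsilon)$.

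The main obstacle is establishing that the per-row ratios (marginals) are approximated to $1+O(\varepsilon)$ after conditioning on only $O(k)$ rows. This is a quantitative decay-of-correlations statement for doubly stochastic matrices with bounded-degree heavy part. To prove it, we would first try an entropy/sub-additivity argument, in the style of Schrijver's proof: show that conditioning on a random set of $k$ rows reduces the average mutual information between a row's choice and the rest to $O(H/k)$ by a chain rule (pinning lemma). Pinsker then converts this into $O(\sqrt{\log(1/\varepsilon)/k})$ error per row. Making this $\le\varepsilon/\log(1/\varepsilon)$ forces $k\approx\varepsilon^{-2}\log^3$, matching the stated exponent. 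Finally, to make the upper bound $Q\ge\per A$ rigorous, we would use the Bethe upper-bound form on conditioned instances (which can only overestimate) and absorb the $(1+\varepsilon)$ slack into a rescaled $\varepsilon$.
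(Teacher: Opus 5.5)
Your proposal is a heuristic outline, not a proof, and its decisive step fails. That step is the claim that after conditioning on $k$ rows each telescoping ratio can be computed to within $1+O(\varepsilon)$. This is a decay-of-correlations statement for the perfect-matching measure, and the route you sketch cannot establish it.

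First, for a random perfect matching $\sigma$, the column $\sigma(i)$ is determined by $(\sigma(j))_{j\ne i}$. Hence $I(\sigma(i);\sigma_{-i}\mid\sigma_S)=H(\sigma(i)\mid\sigma_S)$ for every pinned set $S\not\ni i$. So pinning $k\ll n$ rows does not make the information between a row and the rest small; for $J_n/n$ it stays $\log(n-k)$.

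Second, the chain-rule pinning lemma only bounds an \emph{average pairwise} mutual information, by (average marginal entropy)$/k$. For the full row variable this entropy can be $\log n$ (e.g.\ $J_n/n$), which forces $k\gtrsim\log n$. If you pin only the coarse ``which heavy entry, if any'' variable, small pairwise correlations still do not make the conditioned permanent factor row by row. The light part couples all rows through the hard column constraint. The argument that makes pinning effective for dense soft-constraint models, where the energy is a sum of bounded pairwise terms, does not transfer.

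Third, Pinsker yields additive errors, not the multiplicative $1+O(\varepsilon)$ accuracy the telescoping product needs. Evaluating the conditioned marginals is itself again a permanent problem. The exponent $\varepsilon^{-2}\log^3(2/\varepsilon)$ is back-solved from the statement, not derived.

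The supporting steps also have gaps.
\begin{itemize}
\item The Bethe permanent is a \emph{lower} bound (Gurvits), and upper bounds in terms of it lose up to $2^{n/2}$ (Anari--Rezaei). Nothing in your plan certifies $Q\ge\per A$.
\item An (unproved) Bethe-accuracy bound for doubly stochastic matrices with small entries would not apply to the light submatrices on the unmatched rows and columns, since these are not doubly stochastic.
\item The sum over matchings of the heavy graph is exponential. Its truncation is justified only by appeal to the very correlation decay that is missing.
\end{itemize}

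The missing idea is a relaxation that \emph{creates} a regime where low-degree approximation is provably accurate and still certifies an upper bound. The paper never approximates the perfect-matching measure. It introduces vertex fields, $w_A(\theta)_{uv}=A_{uv}e^{\theta_u+\theta_v}$, and the monomer--dimer measure in which every vertex is matched with probability $t=1-\delta$. The resulting value $F_A(t)=\min_\theta\{\log Z_G(w_A(\theta))-t\sum_v\theta_v\}$ satisfies $t\log\per A+nh(t)\le F_A(t)\le t\log\per A+2nh(t)$. So $e^{F_A(t)/t}$ overestimates $\per A$ by at most $e^{2nh(t)/t}$.

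At the minimizer, the Heilmann--Lieb inequality bounds every weighted degree by $t/(1-t)^2$. With degree bound $K$, the path-tree resolvent confines the zeros of $s\mapsto Z_G(sw)$ to $(-\infty,-1/(4K)]$. This yields a Barvinok-type Taylor polynomial of degree $m=O(K\log(2K/\eta))$, computable by enumerating matchings with at most $m$ edges. Adding $\beta_m\sum_e w_e$ makes it convex in $\theta$ while keeping it an upper bound for $\log Z_G$.

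The exponent comes from $K\asymp\delta^{-2}$ with $\delta\asymp\varepsilon/\log(2/\varepsilon)$, giving $m=O(\varepsilon^{-2}\log^3(2/\varepsilon))$. Certification of $Q\ge\per A$ comes from weak duality together with the lower bound $F_A(t)\ge t\log\per A$.
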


\begin{remark}
We use the computational model of Linial, Samorodnitsky, and
Wigderson~\cite{LSW}, described in \cref{sec:computation}.
The running time can be improved to
$n^{O(\varepsilon^{-1}\log^2(2/\varepsilon))}$
by replacing the map used in \cref{sec:convex-approximation}
with the rational map in \cite[Lemma~2.3]{BarvinokReal}.
We do not pursue this refinement here.
\end{remark}

\subsection{Computational model}\label{sec:computation}

We use the arithmetic model of~\cite{LSW}, in which additions,
subtractions, multiplications, divisions, and exact comparisons
of real numbers have unit cost.
An algorithm is strongly polynomial if its operation count is
polynomial in the number of input entries, independently of
their magnitudes and encoding lengths, and if, on rational
inputs, all numbers computed have bit lengths polynomial in
the total input length.
We apply this definition with $\varepsilon$ fixed.

The bit length of a rational number $r$, denoted as $\ell(r)$, is the sum of the binary
lengths of its numerator and denominator in lowest terms.
Elementary functions are evaluated by rational approximation,
with the cost included in the operation count.

\subsection{Related work}\label{sec:related-work}

Gamarnik and Katz~\cite{GK} obtained a $(1+\varepsilon)^n$
approximation for $0$--$1$ matrices in polynomial time under
the assumption that the support graph has bounded maximum
degree and vertex expansion bounded away from zero.
For arbitrary $0$--$1$ matrices, their algorithm achieves
the same approximation factor in time
$\exp(O(\varepsilon^{-1/2}n^{2/3}\log^3 n))$. Their algorithm uses the approximation scheme of Bayati,
Gamarnik, Katz, Nair, and Tetali~\cite{BGKNT} for the matching
partition function $\sum_M\lambda^{|M|}$, where the sum is over
all matchings. For fixed activity $\lambda$ and bounded maximum
degree, these authors gave a deterministic fully polynomial-time
approximation scheme based on correlation decay.
Gamarnik and Katz use vertex expansion to bound the contribution of
nonperfect matchings when $\lambda$ is sufficiently large.

For matrices satisfying $a\le A_{ij}\le1$ for all $i,j$,
where $a>0$ is fixed, matrix scaling gives a polynomial-factor
approximation in polynomial time~\cite{LSW,BS}.
Barvinok obtained more accurate approximations in
quasipolynomial time, first for real or complex matrices
whose entries are sufficiently close to $1$~\cite{BarvinokComplex},
and subsequently for all real matrices satisfying
$a\le A_{ij}\le1$~\cite{BarvinokPermanents}.
In the latter setting, his algorithm approximates $\log\per A$
to additive error $\zeta\in(0,1)$ in time
$n^{O_a(\log n+\log(1/\zeta))}$. Recently, Yi~\cite{Yi2026} gave an FPTAS for matrices
with entries in $\{0\}\cup[a,1]$, where $a>0$ is fixed,
provided every row and column has at least $(1/2+\gamma)n$ nonzero
entries for a fixed $\gamma>0$. 

Barvinok's algorithms use his interpolation method, which
approximates the logarithm of a polynomial from its first few
coefficients, using bounds on the locations of its zeros.
For the polynomial $Z_G(z)=\sum_M z^{|M|}$, where the sum
is over all matchings of $G$, the Heilmann--Lieb
theorem~\cite{HL} places all zeros on the negative real axis.
Barvinok~\cite{BarvinokReal} used this to approximate the total
number of matchings from the first few coefficients of $Z_G$. Patel and Regts~\cite{PR} showed how to compute the required
coefficients efficiently on bounded-degree graphs, obtaining
polynomial-time approximation algorithms from suitable bounds
on the zeros. We refer to~\cite{Barvinok} for a general account
of this method.

The Bethe permanent is computable by convex
optimization~\cite{Vontobel} and is a lower bound for the
permanent~\cite{Gurvits}.
The authors~\cite{DJ} proved that its approximation factor
improves from $2^{n/2}$ to $2^{2n/g}$ if the support graph has
girth at least $g$, for even $g\ge4$.
This comparison is sharp and holds for arbitrary nonnegative
edge weights.

\medskip

\paragraph{\bf Independent work} After finishing this paper and right before submitting it, we learned of independent work of Kudria, Luo, and Majid~\cite{KLM} which provides an $\exp(o(n))$-factor approximation of the permanent in deterministic polynomial time. We believe that by replacing the exact Taylor polynomial in \cref{sec:convex-approximation} by a suitable approximation, our algorithm can also be modified to achieve this guarantee.  

\subsection{Notation and conventions}\label{sec:notation}

All logarithms are natural, and $0\log0=0$.
An empty product equals one.

For a finite graph $G$, let $\cM(G)$ denote the set of its
matchings, including the empty matching.
For edge weights $w=(w_e)_{e\in E(G)}$, write
\[
 w^M=\prod_{e\in M}w_e,\qquad
 Z_G(w)=\sum_{M\in\cM(G)}w^M.
\]
For nonnegative weights, the associated Gibbs measure is
\[
 \mu_w(M)=\frac{w^M}{Z_G(w)}.
\]
For an induced subgraph $H$ of $G$, $Z_H(w)$ uses the
restrictions of the same weights.
We write $G-v$ for the graph obtained by deleting $v$,
and similarly for several vertices.

For a nonnegative matrix $A$, its bipartite support graph $G_A$
has a left vertex for each row, a right vertex for each column,
and an edge $ij$ whenever $A_{ij}>0$.
We assign this edge weight $A_{ij}$, write $A_e=A_{ij}$ when
$e=ij$, and use $A^M$ with the convention above.

For a finite set $\Omega$, let $\Delta(\Omega)$ denote the set
of probability distributions on $\Omega$.
For $\nu\in\Delta(\Omega)$, its entropy is
\[
 H(\nu)=-\sum_{x\in\Omega}\nu(x)\log\nu(x).
\]
We also write $H(X)$ for the entropy of the distribution of
a random variable $X$, and
\[
 h(t)=-t\log t-(1-t)\log(1-t),\qquad t\in[0,1],
\]
for the binary entropy function.
For $\nu,\mu\in\Delta(\Omega)$, their relative entropy is
\[
 D(\nu\|\mu)=\sum_{x\in\Omega}\nu(x)\log\frac{\nu(x)}{\mu(x)},
\]
with zero summands when $\nu(x)=0$ and value $+\infty$
if $\nu(x)>0=\mu(x)$ for some $x$.

We write $x_+=\max\{x,0\}$ and $\one_{\mathcal A}$ for the
indicator of an event $\mathcal A$.
The vector $\one$ has all entries equal to one, with dimension
determined by context.
Exponentials and logarithms of vectors are taken coordinatewise.
For vectors, $\norm{\cdot}_2$ and $\norm{\cdot}_\infty$ denote
the Euclidean and maximum norms; for matrices,
$\norm{\cdot}_2$ denotes the Euclidean operator norm.
We write $B_2(x,r)$ for the Euclidean ball of radius $r$
centered at $x$.
For real symmetric matrices, $B\preceq C$ means that $C-B$
is positive semidefinite.

\subsection{Proof overview}\label{sec:overview}
Consider the partition function at activity $\lambda > 0$, given by
\[
 Z_G(\lambda A)
 =\sum_{M\in\cM(G)}\lambda^{|M|}A^M
 =\lambda^n\per A
  +\sum_{\substack{M\in\cM(G)\\ |M|<n}}
    \lambda^{|M|}A^M.
\]
For sufficiently large $\lambda$, the contribution of perfect
matchings dominates, so $Z_G(\lambda A)/\lambda^n$ approximates
$\per A$. Gamarnik and Katz~\cite{GK} use this observation to
approximate the permanent of $0$--$1$ matrices. When the support
graph is a bounded-degree expander, they show that the contribution
of nonperfect matchings is sufficiently small for a choice of
$\lambda$ depending only on $\varepsilon$, the expansion constant,
and the degree bound. In particular, $\lambda$ is independent of $n$.
The correlation-decay algorithm of~\cite{BGKNT} approximates
$Z_G(\lambda A)$ in polynomial time for fixed $\lambda$ and
bounded-degree graphs, so this gives a polynomial-time algorithm
for bounded-degree expanders. For general $0$--$1$ matrices,
they use an expander decomposition and obtain a
subexponential-time algorithm. Note that for arbitrary nonnegative weights, the required value of $\lambda$
can be arbitrarily large, even for a graph consisting of two disjoint edges. Indeed, consider $A=\operatorname{diag}(L,L^{-1})$. Then, $\on {per}A = 1$ whereas
\[
 \frac{Z_G(\lambda A)}{\lambda^2}
 =1+\frac{L+L^{-1}}{\lambda}+\frac{1}{\lambda^2},
\]
so that in order for $Z_G(\lambda A)/\lambda^2$ to approximate
$\per A$ within a fixed factor, $\lambda$ must grow at least
proportionally to $L$.

Instead of using a uniform activity $\lambda$, we rescale the rows and columns of $A$
separately. Specifically, we assign an activity $e^{\theta_v}$ to each vertex
and replace the edge weight $A_{uv}$ by
\[
 w_A(\theta)_{uv}=A_{uv}e^{\theta_u+\theta_v}.
\]
Since every perfect matching uses each vertex exactly once, 
\[
 \per w_A(\theta)
 =e^{\sum_v\theta_v}\per A.
\]
On the other hand, the weight of a nonperfect matching is multiplied only by the
activities at the vertices it covers. 

We show that there exists $\theta$ for which
\[
 \per w_A(\theta)
 \le Z_G(w_A(\theta))
 \le e^{\varepsilon n}\per w_A(\theta),
\]
while the total edge weight incident to each vertex is
$O_\varepsilon(1)$. It therefore remains to find such $\theta$
and approximate $\log Z_G(w_A(\theta))$ efficiently. The function $\theta\mapsto\log Z_G(w_A(\theta))$ is convex, but it is unclear how to evaluate it without 
enumerating all
matchings. We overcome this by showing that when the total edge weight at each
vertex is $O_\varepsilon(1)$, the function $\log Z_G(w_A(\theta))$ has an efficiently
computable approximation that remains convex in $\theta$. This allows us to find a suitable $\theta$ by using deterministic convex optimization.

\medskip

\paragraph{\bf Existence of $\theta$} We adapt the approach of
Csikv\'ari~\cite{CsikvariLowerMatching,CsikvariVertexTransitive}. We assume that $\per A>0$ and that every edge in the support of $A$ belongs
to a perfect matching; removing edges violating this can be accomplished in polynomial time and does not change $\on{per}A$. Consider the distribution on matchings corresponding to the weights $w_A(\theta)$,
\[
 \mu_\theta(M) = \mu_{w_A(\theta)}(M)
 =\frac{w_A(\theta)^M}{Z_G(w_A(\theta))}.
\]
The probability of a perfect matching under this distribution
is $\per w_A(\theta)/Z_G(w_A(\theta))$. 
Thus, the approximation we desire is equivalent to requiring that the probability of a matching being perfect is at least $e^{-\varepsilon n}$. In particular, this probability may be exponentially small
in $n$. We will use this flexibility to allow each vertex
to be unmatched with a small probability depending only on
$\varepsilon$, rather than on $n$. 

In Csikv\'ari's vertex-transitive setting~\cite{CsikvariVertexTransitive}, symmetry ensures
that every vertex has the same probability of being matched. Our main idea is to enforce this balance for general weighted graphs by choosing $\theta$ so that
\begin{equation}
\label{eq:matching-prob}
 \PP_{\mu_\theta}(v\text{ is matched})=t
 \qquad\text{for every }v\in V,
\end{equation}
where $t<1$ is chosen close to one. This is somewhat analogous to
matrix scaling~\cite{LSW}, where the row and column factors
are chosen so that the total edge weight incident to each
vertex is one. Note that $t=1$ is impossible for finite $\theta$, since the
empty matching has positive probability under $\mu_\theta$.
Since $\log\per A$ is the maximum of
$H(\nu)+\E_\nu\log A^M$ over distributions on perfect matchings,
it is natural to consider
\[
 F_A(t)=\max_\nu\left\{H(\nu)+\E_\nu\log A^M\right\},
\]
where the maximum is over distributions on matchings which
match every vertex with probability $t$. For $0<t<1$, a standard Lagrange multiplier argument shows
that the maximizing distribution has the form $\mu_\theta$
for some finite $\theta$; see~\cite{SinghVishnoi}.

In \cref{sec:entropy}, we prove the following entropy inequality, which may be of independent interest
\[
 nh(t)\le F_A(t)-t\log\per A\le2nh(t).
\]
Since $h(t) \to 0$ as $t \to 1$, this gives the required approximation by choosing $t<1$
sufficiently close to one in terms of $\varepsilon$.

Finally, as in Csikv\'ari~\cite{CsikvariVertexTransitive}, a
correlation inequality of Heilmann--Lieb~\cite{HL} implies that,
for any $\theta$ satisfying \cref{eq:matching-prob},
\begin{equation}
\label{eq:incident-weight-bound}
 \sum_{e\ni v}w_A(\theta)_e
 \le \frac{t}{(1-t)^2}
 \qquad\text{for every }v\in V.
\end{equation}
Thus, the choice of $t$ above also bounds the total edge
weight at each vertex by $O_\varepsilon(1)$.

\medskip

\paragraph{\bf Convex approximation}
To find $\theta$, we use the dual formulation
\[
 F_A(t)=\min_{\theta\in\R^V}
 \left\{\log Z_G(w_A(\theta))-t\sum_v\theta_v\right\}.
\]
By \eqref{eq:incident-weight-bound}, we can restrict the minimum to
parameters for which the total edge weight at each vertex
is at most $K=O_\varepsilon(1)$. We need an approximation
to $\log Z_G(w_A(\theta))$ that is both efficiently computable
and convex in $\theta$, so that we can minimize the
resulting objective.

Barvinok's interpolation method~\cite{Barvinok,BarvinokReal}
constructs a polynomial approximation $T_m(w)$ to
$\log Z_G(w)$ from the total weights of matchings with at
most $m$ edges. These quantities can be computed by
enumeration, so $T_m$ can be evaluated in $n^{O(m)}$ time. However, the polynomial obtained this way is generally not convex in
$\theta$. We therefore add a correction with sufficiently large Hessian, following a standard
regularization principle; see, for example,
Nesterov~\cite[Theorem~1]{NesterovTensor}.

Concretely, we consider the following regularized version of the Taylor polynomial
\[
 U(w)=T_m(w)+\beta_m\sum_e w_e.
\]
By \eqref{eq:incident-weight-bound}, the regularizer changes $T_m(w)$ by at most $\beta_m n K$ on the region we are optimizing over, whereas a direct computation shows that $\nabla_\theta^2 \sum_e w_e \succeq 0$. The remaining task is therefore to choose $\beta_m$ large enough to compensate for the negative curvature of $T_m$, but small enough to preserve the required accuracy. 

In \cref{sec:convex-approximation}, we prove that
\[
 \nabla_\theta^2 T_m(w_A(\theta))
 \succeq
 -\gamma_m\nabla_\theta^2
 \left(\sum_e w_A(\theta)_e\right),
\]
where $\gamma_m$ depends only on $K$ and $m$ and decreases
exponentially with $m$ for fixed $K$. Taking $\beta_m \geq \gamma_m$ and $m=O(K\log(2K/\eta))$, we get that $U(w_A(\theta))$ is convex and satisfies
\[
 0\le U(w)-\log Z_G(w)\le\eta n
\]
throughout the region we optimize over. 

Finally, to obtain a strongly polynomial algorithm, we use
the preprocessing procedure of Linial, Samorodnitsky, and
Wigderson~\cite{LSW}, followed by rounding, to reduce to a
matrix whose entries have bit length polynomial in $n$ and
$\log(2/\varepsilon)$. This makes the number of operations
required for the convex minimization independent of the
magnitudes of the original entries; see \cref{sec:algorithm}.

\subsection{Acknowledgments} V.J.~is supported by NSF grant DMS-2237646.

\medskip

\paragraph{\bf Statement on AI use} The authors asked GPT 6 Astra to combine the ingredients in the proof of the $(\sqrt{2}-\delta)^n$ approximation \cite{Anari, NP} with their work on sharp girth-dependent bounds for the Bethe permanent~\cite{DJ} in order to obtain a significant improvement over $\sqrt{2}^n$, with a view towards reaching $(1+\varepsilon)^n$. While this attempt was unsuccessful and devolved into rather onerous combinatorial case analysis, it became clear to us after additional interaction that a previous version of \cite{DJ}, which contained a completely different proof, and in particular, a version of~\cref{thm:entropy}, would be useful to overcome these issues. We then supplied Astra with a draft of this old, unpublished version of \cite{DJ}, at which point it was able to prove  \cref{thm:main} without the strongly polynomial guarantee. The authors subsequently strengthened the theorem to achieve this guarantee and substantially simplified and refined various arguments. The authors used Codex for assistance with preparing the manuscript. The mathematical content, the final text, and any errors are the responsibility of the authors.  

\section{An interpolation for the logarithm of the permanent}
\label{sec:entropy}

Let $n\geq1$ and suppose that $A\in\R_{\geq0}^{n\times n}$ satisfies
$\per A>0$. Let $G=(L\sqcup R,E)$ denote the bipartite support graph
of $A$. Deleting support edges that do not belong to any perfect
matching does not change $\per A$, so we henceforth assume that every
$e\in E$ belongs to some perfect matching of $G$.  This property is
used in \cref{eq:edge-coercivity}.

For $0<t\leq1$, let
\[
  \mathcal V_t
  :=
  \left\{
    \nu\in\Delta(\mathcal M(G)):
    \PP_\nu(v\text{ is matched})=t
    \text{ for every }v\in V(G)
  \right\},
\]
and define
\begin{equation*}
  F_A(t)
  :=
  \max_{\nu\in\mathcal V_t}
  \left\{
    H(\nu)+\E_\nu\log A^M
  \right\}.
\end{equation*}
Since $\per A>0$, the graph $G$ has a perfect matching. Retaining
each edge of any fixed perfect matching independently with probability
$t$ gives an element of $\mathcal V_t$, so $\mathcal V_t$ is nonempty.
Moreover, $\mathcal V_t$ is a closed subset of the finite-dimensional
probability simplex $\Delta(\mathcal M(G))$, and hence is compact.
The objective defining $F_A(t)$ is continuous, so the maximum is
attained.

 Let $\pi_A$ denote the probability measure on perfect
matchings given by
\begin{equation*}
  \pi_A(M)=\frac{A^M}{\per A}.
\end{equation*}
For every $\nu\in\mathcal V_1$, since
\[
  D(\nu\|\pi_A)
  =-H(\nu)-\E_\nu\log A^M+\log\per A,
\]
we have
\[
  H(\nu)+\E_\nu\log A^M
  =
  \log\per A-D(\nu\|\pi_A), \qquad \nu \in \mc V_1.
\]
Since $D(\nu\|\pi_A)\geq0$, with equality at $\nu=\pi_A$, it follows that
\[
  F_A(1)=\log\per A.
\]
Thus $F_A(t)$ may be viewed as an interpolation for $\log \per A$ through 
matching distributions with prescribed vertex marginals. The following theorem quantifies this interpolation.

\begin{theorem}\label{thm:entropy}
For every $A\in\R_{\geq0}^{n\times n}$ with $\per A>0$ and every
$0<t<1$,
\begin{equation*}
  t\log\per A+nh(t)
  \leq F_A(t)
  \leq t\log\per A+2nh(t).
\end{equation*}
\end{theorem}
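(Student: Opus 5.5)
The plan is to reduce both inequalities to one comparison between $\log\per A$ and the expected log-permanent of a random principal-type submatrix. For a vertex set $S\subseteq V(G)$ containing equally many left and right vertices, write $A_S$ for the submatrix of $A$ with the rows and columns indexed by $S$. Set $\per A_\emptyset=1$ and $\log 0=-\infty$.

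\emph{Lower bound.} I would exhibit an explicit $\nu\in\mathcal V_t$. Draw $M\sim\pi_A$ and keep each edge of $M$ independently with probability $t$. Call the kept matching $N$, with $k=|N|$. Every vertex is then matched with probability $t$, so $\nu\in\mathcal V_t$. Let $\bar S(N)$ be the set of vertices left uncovered by $N$. Then
\[
\nu(N)=t^{k}(1-t)^{n-k}\,\frac{A^N\per A_{\bar S(N)}}{\per A},
\]
because summing $\pi_A(M)$ over perfect matchings $M\supseteq N$ gives $A^N\per A_{\bar S(N)}/\per A$. Substituting into $H(\nu)+\E_\nu\log A^N=-\E_\nu\log(\nu(N)/A^N)$, and using that $k$ is $\mathrm{Bin}(n,t)$, yields
\[
H(\nu)+\E_\nu\log A^N = nh(t)+\log\per A-\E\log\per A_{\bar S(N)}.
\]
So the lower bound follows from the key inequality
\[
\E\log\per A_{\bar S(N)}\le(1-t)\log\per A .
\]

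\emph{Upper bound.} Take any $\nu\in\mathcal V_t$, let $N\sim\nu$, and let $S=S(N)$ be the set of vertices covered by $N$. The chain rule gives $H(N)= H(S)+H(N\mid S)$. Subadditivity over the $2n$ coverage indicators, each with mean $t$, gives $H(S)\le 2nh(t)$. Conditionally on $S$, the matching $N$ is a perfect matching of $G[S]$. The Gibbs variational identity used above for $F_A(1)$, applied to $A_S$, gives
\[
H(N\mid S=s)+\E[\log A^N\mid S=s]\le\log\per A_s .
\]
Hence $F_A(t)\le 2nh(t)+\sup_\nu\E\log\per A_{S(N)}$. It then suffices to show
\[
\E\log\per A_{S}\le t\log\per A
\]
for every random balanced set $S$ whose vertex marginals all equal $t$. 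This is the same key inequality, now with marginal $t$ instead of $1-t$.

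\emph{Key inequality (the main obstacle).} I would first try a scaling reduction. Replacing $A$ by $D_1AD_2$ multiplies $\per A_S$ by $\prod_{v\in S}d_v$, so $\E\log\per A_S$ shifts by $t\sum_v\log d_v$. This is exactly $t$ times the shift of $\log\per A$, so both sides move together and the inequality is scaling invariant. One can therefore try to pass, via Sinkhorn scaling and a limiting argument, to a convenient normalization. The second tool is superadditivity: $\per A\ge\per A_S\per A_{S^c}$, since permutations respecting the block split form a subset of all permutations. For exchangeable splittings of $M$ into $q$ random classes, superadditivity should give the inequality for rational $t=p/q$, with continuity in $t$ handling the rest.

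I expect the genuine difficulty to be proving the inequality for \emph{arbitrary} $\nu\in\mathcal V_t$ in the upper bound, where $S$ need not come from splitting a perfect matching. There I would try to write $S$ as the vertex set covered by a sub-matching of a perfect matching. One way is to extend $N$ by the conditional measure $\pi_{A_{S^c}}$, which is possible because every edge lies in a perfect matching. Afterwards I would compare against the superadditive split and absorb any loss into the available slack of $nh(t)$. If the inequality fails in that generality, this is where it would fail.
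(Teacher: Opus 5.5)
Your two reductions are correct. The formula for $\nu(N)$ gives $H(\nu)+\E_\nu\log A^N=nh(t)+\log\per A-\E\log\per A_{\bar S(N)}$. The chain rule, $H(S)\le 2nh(t)$, and the Gibbs variational principle on $A_s$ give $F_A(t)\le 2nh(t)+\sup_\nu\E\log\per A_{S(N)}$. The gap is that the key inequality is never proved, and the tools you propose cannot prove it.

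\textbf{Lower bound.} Compare your identity with the decomposition $H(\nu)=nh(t)+\E_I H(\sigma_I)$, where $\sigma\sim\pi_A$ and $I\subseteq L$ is the set of rows whose edge is retained. This gives
\[
  \E\log\per A_{\bar S}-(1-t)\log\per A=tH(\sigma)-\E_I H(\sigma_I),
\]
so what you need is exactly the Shearer-type inequality $\E_I H(\sigma_I)\ge tH(\sigma)$. That is the ingredient the paper uses. Superadditivity does not replace it.
\begin{itemize}
\item Splitting $M$ into $q$ exchangeable classes gives $q\,\E\log\per A_{V(C_1)}\le\log\per A$, which is only the case $1-t=1/q$.
\item For $1-t=p/q$ with $p\ge2$ you need an \emph{upper} bound on the permanent of a union of classes. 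But $\per A_{X\sqcup Y}\ge\per A_X\per A_Y$ bounds unions only from below.
\item Concretely, with $M_s$ the $s$-thinning of $M\sim\pi_A$, superadditivity only makes $g(s):=\E\log\per A_{V(M_s)}$ superadditive. That yields $g(1/q)\le g(1)/q$, not $g(s)\le s\,g(1)$.
\item The values $t=1-1/q$ accumulate only at $1$, so continuity cannot fill in the remaining $t$.
\end{itemize}

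\textbf{Upper bound.} Here the gap is more serious. Set $\lambda_S=\PP(S)/t$. Then requiring $\E\log\per A_S\le t\log\per A$ for all $t\in(0,1)$ and all $\nu\in\mathcal V_t$ is equivalent to \emph{fractional} superadditivity: $\sum_S\lambda_S\log\per A_S\le\log\per A$ for every fractional partition $\lambda$ of $V$ into balanced sets with $\per A_S>0$. This is far stronger than superadditivity over genuine partitions.

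For example, take $t\ge1-1/n$ and a fixed perfect matching $M$. Let $S=V$ with probability $1-n(1-t)$, and $S=V\setminus\{u,v\}$ with probability $1-t$ for each edge $uv\in M$. Then the key inequality requires
\[
  \prod_{uv\in M}\per A_{V\setminus\{u,v\}}\le(\per A)^{n-1},
\]
that is, $\pi_A(M)\ge\prod_{e\in M}\pi_A(e\in\sigma)$. This is a positive-correlation property of $\pi_A$ that superadditivity does not give. Its pairwise analogue $\pi_A(e,f\in\sigma)\ge\pi_A(e\in\sigma)\pi_A(f\in\sigma)$ already fails: take $A=\left(\begin{smallmatrix}1&1&0\\1&\epsilon&1\\0&1&1\end{smallmatrix}\right)$ with small $\epsilon>0$ and $e,f$ the entries $(1,1),(3,3)$.

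Your fallback does not help either. Extending $N$ via $\pi_{A_{S^c}}$ and absorbing losses into ``slack of $nh(t)$'' has nothing to draw on, because $H(S)\le 2nh(t)$ already uses the entire $2nh(t)$ budget.

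The paper never compares $\per A_S$ with $\per A$. It writes $F_A(t)=\min_\theta\Phi_{A,t}(\theta)$ and uses the Heilmann--Lieb inequality at the balanced minimizer to get $\log Z_G(w_A(\theta_t))\le-2n\log(1-t)$. It then integrates the envelope inequality $F_A(t)/t-F_A(s)/s\le(1/t-1/s)\log Z_G(w_A(\theta_s))$ from $t$ to $1$.
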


\begin{remark}\label{rem:entropy-sharpness}
Both constants in \cref{thm:entropy} are best possible. The lower
bound is attained for the identity matrix $I_n$, while the constant
$2$ in the upper bound is asymptotically sharp for the all-ones
matrix $J_n$, for example, with $n=r^2$ and $t=1-r^{-1}$ as $r\to\infty$. 
\end{remark}

The lower bound is obtained by thinning a perfect matching sampled
from $\pi_A$.

\begin{proof}[Proof of the lower bound in \cref{thm:entropy}]
Let $\sigma\sim\pi_A$. Then
\[
  H(\sigma)+\E\log A^\sigma=\log\per A.
\]
Retain each edge of $\sigma$ independently with probability $t$, and
let $M$ denote the resulting matching. The law of $M$ belongs to
$\mathcal V_t$, and, by linearity of expectation,
\[
  \E\log A^M=t\,\E\log A^\sigma.
\]

View $\sigma$ as a bijection from
$L=\{v_1,\ldots,v_n\}$ to $R$, and let $S\subseteq L$ be the set of
vertices whose incident edge in $\sigma$ is retained. For
$I\subseteq L$, write $\sigma_I$ for the restriction of $\sigma$ to
$I$. The matching $M$ is in bijection with the pair $(S,\sigma_S)$.
Therefore,
\begin{align*}
  H(M)
  &=H(S)+H(\sigma_S\mid S)\\
  &=H(S)+\sum_{I\subseteq L}
    \PP(S=I)H(\sigma_I\mid S=I)\\
  &=nh(t)+\sum_{I\subseteq L}
    \PP(S=I)H(\sigma_I),
\end{align*}
where the last equality uses the independence of $S$ and $\sigma$.
Equivalently,
\[
  H(M)=nh(t)+\E_S H(\sigma_S).
\]

By the fractional-cover form of Shearer's entropy inequality (cf.~\cite{MadimanTetali}),
\[
  \E_S H(\sigma_S)\geq tH(\sigma).
\]
For completeness, we include the short proof. Suppose $L=\{v_1,\dots,v_n\}$. For every fixed
$I=\{v_{i_1},\dots,v_{i_k}\}\subseteq L$ with $i_1<\cdots<i_k$, the chain rule and the fact that conditioning reduces
entropy give
\[
  H(\sigma_I)=\sum_{j=1}^kH\!\left(
    \sigma(v_{i_j})
    \,\middle|\,
    \sigma(v_{i_1}),\ldots,\sigma(v_{i_{j-1}})
  \right)
  \geq
  \sum_{i=1}^n
  \one_{\{v_i\in I\}}
  H\!\left(
    \sigma(v_i)
    \,\middle|\,
    \sigma(v_1),\ldots,\sigma(v_{i-1})
  \right).
\]
Taking expectations over $S$ and using
$\PP(v_i\in S)=t$ for every $i$ yields
\[
  \E_S H(\sigma_S)\geq tH(\sigma).
\]
Thus,
\[
  F_A(t)
  \geq H(M)+\E\log A^M
  \geq nh(t)+t\bigl(H(\sigma)+\E\log A^\sigma\bigr)
  =nh(t)+t\log\per A.
  \qedhere
\]
\end{proof}

The remainder of this section is devoted to the proof of the upper bound in \cref{thm:entropy}.

\medskip

\subsection{The maximizing distribution} We now express $F_A(t)$ as a convex minimization problem.
This uses standard entropy duality; see Wainwright and
Jordan~\cite{WainwrightJordan} for background and Singh and
Vishnoi~\cite{SinghVishnoi} for its use in counting algorithms.
We give the proof in our setting below.
Fix $0<t<1$. For $\theta\in\R^{L\sqcup R}$, define
\[
  w_A(\theta)_{uv}=A_{uv}e^{\theta_u+\theta_v},
  \qquad uv\in E,
\]
and let $\mu_\theta:=\mu_{w_A(\theta)}$, where recall that $\mu_w(M)=w^M/Z_G(w)$ is the Gibbs measure
associated with the edge weights $w$. 
Define
\begin{equation*}
  \Phi_{A,t}(\theta)
  :=\log Z_G(w_A(\theta))-t\sum_v\theta_v.
\end{equation*}
For every vertex $v$ and matching $M$, let $\chi_v(M)$ denote the indicator that $v$ is matched in $M$. Then we have
\[
\mu_\theta(M)=\frac{A^M\prod_v e^{\theta_v\chi_v(M)}}{Z_G(w_A(\theta))}.
\]
For every $\nu\in\mathcal V_t$,
\begin{align*}
  D(\nu\|\mu_\theta)
  &=
  -H(\nu)-\E_\nu\log\mu_\theta(M)=
  -H(\nu)-\E_\nu\left(\log A^M+\sum_v\theta_v\chi_v(M)-\log Z_G(w_A(\theta))\right)\\
  &=
  -H(\nu)-\E_\nu\log A^M
  -t\sum_v\theta_v
  +\log Z_G(w_A(\theta)),
\end{align*}
where the last equality uses
$\PP_\nu(v\text{ is matched})=t$ for every $v$. Rearranging gives
\begin{equation}\label{eq:gibbs-variational}
  H(\nu)+\E_\nu\log A^M
  =
  \Phi_{A,t}(\theta)-D(\nu\|\mu_\theta)
  \leq
  \Phi_{A,t}(\theta).
\end{equation}
Thus
\[
  F_A(t)\leq\Phi_{A,t}(\theta)
  \qquad\text{for every }\theta\in\R^{L\sqcup R},
\]
with equality whenever $\mu_\theta\in\mathcal V_t$. We next show that $\Phi_{A,t}$ attains its infimum and that a
minimizer $\theta$ satisfies $\mu_\theta\in\mathcal V_t$.
\begin{lemma}\label{lem:balanced-gibbs}
For every $0<t<1$, the infimum of $\Phi_{A,t}$ over
$\theta\in \R^{L\sqcup R}$ is attained. If $\theta_t$ is any minimizer, then
$\mu_{\theta_t}\in\mathcal V_t$ and
\begin{equation}\label{eq:balanced-dual}
  F_A(t)
  =
  \Phi_{A,t}(\theta_t)
  =
  \min_{\theta\in\R^{L\sqcup R}}\Phi_{A,t}(\theta).
\end{equation}
\end{lemma}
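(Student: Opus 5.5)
The plan is to prove coercivity of $\Phi_{A,t}$, then use the first-order condition at a minimizer, and finally combine with \eqref{eq:gibbs-variational}. Convexity and smoothness are standard: $\log Z_G(w_A(\theta))$ is a log-sum-exp of the affine functions $\theta\mapsto\log A^M+\sum_v\theta_v\chi_v(M)$, so $\Phi_{A,t}$ is convex and smooth on $\R^{L\sqcup R}$. Its gradient is
\[
  \partial_{\theta_v}\Phi_{A,t}(\theta)=\PP_{\mu_\theta}(v\text{ is matched})-t .
\]
Hence any minimizer $\theta_t$ satisfies $\mu_{\theta_t}\in\mathcal V_t$. Equality then holds in \eqref{eq:gibbs-variational} with $\nu=\mu_{\theta_t}$, giving $F_A(t)\ge\Phi_{A,t}(\theta_t)$. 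Combined with $F_A(t)\le\Phi_{A,t}(\theta)$ for all $\theta$, this yields \eqref{eq:balanced-dual}. The remaining task, and the main obstacle, is to show the infimum is attained.

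To get attainment, I would show that $\Phi_{A,t}(\theta)\to\infty$ as $\norm{\theta}_2\to\infty$. Equivalently, for every nonzero direction $d\in\R^{L\sqcup R}$ the recession slope $\lim_{s\to\infty}\Phi_{A,t}(sd)/s$ should be strictly positive. For a convex function, positivity of all recession slopes implies bounded sublevel sets, and hence a minimizer exists. The recession slope of $\log Z_G(w_A(\theta))$ in direction $d$ is $\max_{M\in\mathcal M(G)}\sum_{v\in V(M)}d_v$, so the quantity to control is
\[
  \max_{M}\sum_{v\in V(M)}d_v-t\sum_v d_v .
\]
I would split into two cases according to the sign of $\sum_v d_v$.

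\emph{Case $\sum_v d_v\le 0$.} Take $M=\emptyset$, which gives slope at least $-t\sum_v d_v\ge0$. This is strictly positive unless $\sum_v d_v=0$. If $\sum_v d_v=0$, we need some matching $M$ with $\sum_{v\in V(M)}d_v>0$. Since $d\ne0$ and $\sum_v d_v=0$, some vertex has $d_u>0$. If some neighbour $w$ of $u$ has $d_w>-d_u$, a single edge works. Otherwise I would use a perfect matching $P$ together with the hypothesis that every edge lies in a perfect matching; this is where \eqref{eq:edge-coercivity} is invoked. Writing $0=\sum_v d_v=\sum_{e\in P}(d_{e^-}+d_{e^+})$, either some edge of $P$ has positive endpoint sum, which gives a one-edge matching, or every edge of every perfect matching has endpoint sum exactly $0$. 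In the latter case every support edge $uw$ satisfies $d_u+d_w=0$, since each edge lies in a perfect matching with all edge sums $\le0$ summing to $0$. Then I would use $d\ne0$ to argue that $d$ lies along the direction $(\one_L,-\one_R)$ on connected components.

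\emph{Case $\sum_v d_v>0$.} Take $M=P$ perfect, which gives slope $(1-t)\sum_v d_v>0$.

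The degenerate direction where $d_u=c$ on $L$ and $d_u=-c$ on $R$ within a component leaves $\Phi_{A,t}$ invariant, because $w_A$ is unchanged. So I would quotient by these invariant directions, one per connected component of $G$, and prove coercivity on the orthogonal complement. On that complement the above argument shows every recession slope is positive, so a minimizer exists there, and therefore globally. Making the case analysis in the degenerate case clean, using the fact that every edge lies in a perfect matching, is the step I expect to require the most care.
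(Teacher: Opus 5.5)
Your proof is correct. Its second half (the first-order condition at a minimizer, then equality in \eqref{eq:gibbs-variational} with $\nu=\mu_{\theta_t}$) is exactly the paper's. The attainment step takes a genuinely different route.

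\textbf{Your route.} You argue qualitatively in four steps:
\begin{itemize}
\item restrict to the orthogonal complement of $N=\{d: d_u+d_w=0 \text{ for all } uw\in E\}$;
\item compute the recession function $\max_M\sum_{v\in V(M)}d_v-t\sum_v d_v$;
\item show it is positive for every nonzero $d\in N^\perp$ by a sign case analysis;
\item invoke the convex-analysis fact that a closed convex function with positive recession function has bounded sublevel sets.
\end{itemize}

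\textbf{The paper's route.} It proves one quantitative inequality. It picks the edge $e_0$ maximizing $|\theta_u+\theta_v|$ and a perfect matching $P\ni e_0$. Comparing $Z_G$ with the single matching $P_+=\{uv\in P:\theta_u+\theta_v>0\}$ gives
$\Phi_{A,t}(\theta)\ge\min\{t,1-t\}\max_{uv\in E}|\theta_u+\theta_v|-c_A$, which is \eqref{eq:edge-coercivity}. It then passes to the image $S$ of $\theta\mapsto(\theta_u+\theta_v)_{uv\in E}$, whose kernel is your $N$; there the sublevel sets are visibly compact.

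\textbf{Comparison.} The paper's bound handles all your sign cases at once. Your delicate $\sum_v d_v=0$ case, where you use a perfect matching through a given edge, is absorbed by the $\min\{t,1-t\}$ factor. The paper's argument also needs neither convexity nor the recession-cone theorem. Most importantly for the paper, \eqref{eq:edge-coercivity} is reused in \cref{lem:field-bound} to place a minimizer in an explicit box of radius $R_0$, which the algorithm needs for its iteration count and bit-length bounds. Your approach identifies the degenerate directions conceptually, but it yields existence only, with no a priori bound on the minimizer.

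\textbf{A small point.} Invariance of $\Phi_{A,t}$ along $N$ needs $\sum_v d_v=0$, not only that $w_A$ is unchanged. This holds because $\sum_v d_v=\sum_{uw\in P}(d_u+d_w)=0$ for any perfect matching $P$, equivalently because each component has $|L'|=|R'|$.
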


\begin{proof}
We first prove that the infimum is attained. 
Fix $\theta\in\R^{L\sqcup R}$, choose an edge $e_0$ maximizing
$|\theta_u+\theta_v|$, and let $P$ be a perfect matching containing
$e_0$. (Note that such a perfect matching exists by the support reduction at the start of the section.) Let
\[
  P_+:=\{uv\in P:\theta_u+\theta_v>0\}, \qquad c_A:=\max_{M\in\cM(G)}|\log A^M|.
\]
Since $P_+$ is a matching,
\[
  \log Z_G(w_A(\theta))
  \geq \log w_A(\theta)^{P_+}=\log A^{P_+}+\sum_{uv\in P_+}(\theta_u+\theta_v)
  \geq -c_A+\sum_{uv\in P}(\theta_u+\theta_v)_+.
\]
Moreover, since $P$ is a perfect matching,
\begin{equation*}
  \sum_v\theta_v=\sum_{uv\in P}(\theta_u+\theta_v).
\end{equation*}
Therefore,
\begin{align}
  \Phi_{A,t}(\theta)
  &=
  \log Z_G(w_A(\theta))
  -t\sum_{uv\in P}(\theta_u+\theta_v) \nonumber\\
  &\geq
  \sum_{uv\in P}(\theta_u+\theta_v)_+
  -t\sum_{uv\in P}(\theta_u+\theta_v)
  -c_A \nonumber\\
  &\geq \min\{t,1-t\}
  \sum_{uv\in P}|\theta_u+\theta_v|
  -c_A\nonumber\\
  &\geq \min\{t,1-t\}
  \max_{uv\in E}|\theta_u+\theta_v|
  -c_A,
  \label{eq:edge-coercivity}
\end{align}
where the last step follows from that $e_0\in P$.
Note from above that $\Phi_{A,t}(\theta)$ depends on $\theta$ only through the
vector
$
  (\theta_u+\theta_v)_{uv\in E}.
$
Thus, letting $S:=\{(\theta_u+\theta_v)_{uv\in E}: \theta\in \mb{R}^{L \sqcup R}\}$, it makes sense to define the map $\Psi:S\to\R$ by
\[
\Psi(b)=\Phi_{A,t}(\theta)\qquad  \text{for any $\theta\in \R^{L\sqcup R}$ with $b=(\theta_u+\theta_v)_{uv\in E}$.}
\]
 Note that $S$ is a linear subspace of $\R^{E}$, and hence is closed.

Since $\Phi_{A,t}$ is continuous, so is $\Psi$.
By \cref{eq:edge-coercivity},
for every $C\in\R$, the set $\{b \in S:\Psi(b)\leq C\}\subseteq S$ is closed and bounded in $S$. Since $S$ itself is closed in $\R^{E}$, we get that $\{b \in S:\Psi(b)\leq C\}$ is compact.

Thus $\Psi$ attains its infimum on the compact nonempty sublevel set
$\{b\in S:\Psi(b)\leq\Psi(0)\}$, so the infimum of $\Phi_{A,t}$ is attained.

Finally, since 
\begin{align}\label{eq:vertex-marginal}
\frac{\partial}{\partial\theta_v}w_A(\theta)^M&=w_A(\theta)^M\chi_v(M),\notag\\
    \frac{\partial}{\partial\theta_v}
  \log Z_G(w_A(\theta))
  &=
  \frac{1}{Z_G(w_A(\theta))}
  \sum_{M\in\cM(G)}
  w_A(\theta)^M
  \chi_v(M)
  =
  \PP_{\mu_\theta}(v\text{ is matched}),
\end{align}
we have
\begin{equation*}
  \frac{\partial\Phi_{A,t}}{\partial\theta_v}
  =
  \PP_{\mu_\theta}(v\text{ is matched})-t.
\end{equation*}
Hence, at any minimizer $\theta_t$,
\[
  \PP_{\mu_{\theta_t}}(v\text{ is matched})=t
  \qquad\text{for every }v,
\]
so $\mu_{\theta_t}\in\mathcal V_t$. Taking
$\nu=\mu_{\theta_t}$ in \cref{eq:gibbs-variational} gives
\[
  F_A(t)=\Phi_{A,t}(\theta_t),
\]
which proves \cref{eq:balanced-dual}.
\end{proof}
\medskip

\subsection{A bound on the partition function}
We next bound the partition function at a minimizer of
$\Phi_{A,t}$.

\begin{lemma}\label{lem:partition-bound}
For every $0<t<1$ and every minimizer $\theta_t$ of
$\Phi_{A,t}$,
\begin{equation}\label{eq:logZ-bound}
  0
  \leq
  \log Z_G(w_A(\theta_t))
  \leq
  -2n\log(1-t).
\end{equation}
\end{lemma}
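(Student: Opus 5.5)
The lower bound is immediate: the empty matching contributes $w^{\emptyset}=1$ to $Z_G(w_A(\theta_t))$, so $Z_G\geq 1$. For the upper bound, I will write $1/Z_G(w)$ as the probability of the empty matching and bound that probability from below by a product of $2n$ vertex-unmatched probabilities, each at least $1-t$.

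Write $w=w_A(\theta_t)$ and enumerate $V(G)=\{v_1,\dots,v_{2n}\}$ in any order. Set $G_i=G-\{v_1,\dots,v_i\}$, so $G_0=G$ and $G_{2n}$ is empty with $Z_{G_{2n}}=1$. Telescoping gives
\[
  \frac{1}{Z_G(w)}=\prod_{i=1}^{2n}\frac{Z_{G_{i}}(w)}{Z_{G_{i-1}}(w)}
  =\prod_{i=1}^{2n}\PP_{\mu^{G_{i-1}}_w}(v_i\text{ is unmatched}),
\]
where $\mu^{H}_w$ is the Gibbs measure on $\mathcal M(H)$. This uses the fact that the matchings of $H$ avoiding $v$ are exactly $\mathcal M(H-v)$. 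By \cref{lem:balanced-gibbs}, $\mu_{\theta_t}\in\mathcal V_t$, so in the full graph $\PP_{\mu_w^{G}}(v\text{ unmatched})=1-t$ for every $v$. It therefore suffices to show that deleting vertices can only increase the probability that a remaining vertex is unmatched:
\begin{equation*}
  \PP_{\mu^{H-u}_w}(v\text{ unmatched})\geq \PP_{\mu^{H}_w}(v\text{ unmatched})
  \quad\text{for every graph }H\text{ and distinct }u,v\in V(H).
\end{equation*}
Iterating this along $G\supseteq G_1\supseteq\cdots\supseteq G_{i-1}$ gives $\PP_{\mu^{G_{i-1}}_w}(v_i\text{ unmatched})\geq 1-t$. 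Hence $1/Z_G\geq(1-t)^{2n}$, which is the claim.

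The displayed inequality is equivalent to $Z_{H-u}Z_{H-v}\leq Z_H Z_{H-u-v}$, which is the Heilmann--Lieb correlation inequality~\cite{HL}. As a sanity check of the direction, take a single edge $uv$: the inequality reads $1\leq 1+w_{uv}$. This is the main step. I would cite~\cite{HL} for it, or prove it via the Heilmann--Lieb path identity
\[
  Z_H Z_{H-u-v}-Z_{H-u}Z_{H-v}=\sum_{P}w^P\,Z_{H-V(P)}^2,
\]
where $P$ ranges over simple $u$--$v$ paths in $H$ and $w^P$ is the product of the edge weights along $P$. I would verify this identity by induction on the number of vertices, expanding $Z_H$ and $Z_{H-v}$ according to how $v$ is matched. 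With nonnegative weights the right-hand side is nonnegative, which gives the inequality.
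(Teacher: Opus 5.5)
\paragraph{A genuine gap.} The lower bound and the telescoping identity are fine. The gap is the monotonicity claim: it is false in the generality you state, and so is your path identity, whose sign depends on the parity of the path. Converting the Heilmann--Lieb identity for the signed matching polynomial to the partition function at positive weights gives
\[
 Z_HZ_{H-u-v}-Z_{H-u}Z_{H-v}=\sum_P(-1)^{|P|+1}w^P Z_{H-V(P)}^2,
\]
where $|P|$ is the number of edges of $P$. The right-hand side is guaranteed nonnegative only when every $u$--$v$ path has odd length, for example when $u\in L$ and $v\in R$ in a bipartite graph. That is exactly the restriction in the paper's claim; its switching proof uses bipartiteness precisely to rule out the alternating path from $u$ ending at $v$. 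Your single-edge sanity check only tests an odd path.

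For $u,v$ on the same side the inequality reverses. On the path with edges $ux,xv$ and unit weights, $Z_HZ_{H-u-v}=3<4=Z_{H-u}Z_{H-v}$: deleting $u$ removes a competitor for $x$, so $v$ becomes more likely to be matched. For $n\ge 2$, two of $v_1,v_2,v_3$ lie on the same side, so every ordering deletes some vertex before a later vertex on its own side. The per-step bound then genuinely fails. For $A=J_2$, by symmetry the minimizer gives all four edges of $G=K_{2,2}$ a common weight $x>0$, so $1-t=(1+2x)/(1+4x+2x^2)$. With the two row vertices $a_1,a_2$ listed first,
\[
 \PP_{\mu^{G-a_1}_w}(a_2\text{ unmatched})=\frac{1}{1+2x}<\frac{1+2x}{1+4x+2x^2}=1-t .
\]
The final inequality $1/Z_G\ge(1-t)^{2n}$ is still true, but your sequential-deletion argument does not establish it.

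\paragraph{How to repair it.} Apply the correlation inequality only to the two endpoints of an edge $e=uv$, which automatically lie on opposite sides. Then
\[
\PP_{\mu_w}(e\in M)=w_e\,\PP_{\mu_w}(u,v\text{ both unmatched})\ge w_e(1-t)^2 .
\]
Summing over $e\ni v$ gives $\sum_{e\ni v}w_e\le t/(1-t)^2$. Dropping the matching constraint for the left vertices then gives
\[
Z_G(w)\le\prod_{v\in L}\bigl(1+\sum_{e\ni v}w_e\bigr)\le\bigl(1+t/(1-t)^2\bigr)^n\le(1-t)^{-2n}.
\]
This is the paper's argument. Its intermediate degree bound is also what the paper uses later to restrict the minimization to parameters with bounded total weight at each vertex.
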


\begin{proof}
Fix a minimizer $\theta_t$ and for convenience of notation, write
$
  w:=w_A(\theta_t).
$
By \cref{lem:balanced-gibbs}, every vertex is matched with probability
$t$ under $\mu_w$. 

We will need the following correlation inequality of  Heilmann and Lieb~\cite{HL}. For the reader's convenience, we give a proof here. 

\begin{claim}
    If $u\in L$ and $v\in R$, then
\begin{equation}\label{eq:monomer-ineq}
  Z_G(w)Z_{G-u-v}(w)
  \geq
  Z_{G-u}(w)Z_{G-v}(w).
\end{equation}
\end{claim}
\begin{proof*}
    Consider the map
\[
  \cM(G-u)\times\cM(G-v)
  \longrightarrow
  \cM(G)\times\cM(G-u-v)
\]
defined as follows. Given a red matching $R\in\cM(G-u)$ and a blue
matching $B\in\cM(G-v)$, regard any common edge as two parallel
colored copies. 

If $B$ also avoids $u$, leave the pair unchanged.
Otherwise, the component of $R\cup B$ containing $u$ is an
alternating path beginning with $u$ and a blue edge incident to it. We observe that this path cannot contain
$v$. Indeed, since $B$ does not contain $v$, $v$ cannot be an internal vertex of this path. Furthermore, if $v$ is the other endpoint, then the last edge must be red; in particular, the path must have even length. But $u$ and $v$ lie in different parts of the bipartite graph, so the path must have odd length, which is a contradiction. Hence $v$ does not lie on this path. Swap the two colors along this path. After the swap, the blue matching avoids both $u$ and $v$, while the
red edges still form a matching of $G$. 

We now show that the map is injective. In the image, there is a red edge incident to
$u$ if and only if a swap occurred. In this case, the component
containing $u$ is the same alternating path, and swapping its colors
again recovers the original pair. If there is no red edge at $u$, the
map was the identity. The map preserves the multiset of edges, and hence preserves the
product weight $w^R w^B$. Summing over all input pairs gives
\cref{eq:monomer-ineq}.
\end{proof*}

Now let $e=uv$ and note that
\[
  \PP_{\mu_w}(e\in M)
  =\frac{1}{Z_G(w)}\sum_{M'\in\cM(G-u-v)}w_ew^{M'}=
  w_e\frac{Z_{G-u-v}(w)}{Z_G(w)}
  =
  w_e\PP_{\mu_w}(u,v\text{ both unmatched}).
\]
Dividing both sides of \cref{eq:monomer-ineq} by $Z_G(w)^2$, we have
\[
  \PP_{\mu_w}(u,v\text{ both unmatched})
  \geq
  \PP_{\mu_w}(u\text{ unmatched})
  \PP_{\mu_w}(v\text{ unmatched})
  =(1-t)^2.
\]
Hence, for every edge $e$,
\[
  \PP_{\mu_w}(e\in M)\geq w_e(1-t)^2.
\]
Summing over the edges incident to a vertex gives
\begin{equation}\label{eq:degree-bound}
  \sum_{e\ni v}w_e
  \leq
  \frac{\sum_{e \ni v} \mb{P}_{\mu_w}(e \in M)}{(1-t)^2} = \frac{\mb{P}_{\mu_w}(v \text{ is matched})}{(1-t)^2} = \frac{t}{(1-t)^2} .
\end{equation}

Finally, every matching is obtained by letting each vertex in $L$
choose either no edge or one incident edge, subject to the constraint that the chosen edges form a matching. Dropping the matching constraint gives the upper bound
\[
  Z_G(w)
  \leq
  \prod_{v\in L}
  \left(1+\sum_{e\ni v}w_e\right)
  \leq
  \left(1+\frac{t}{(1-t)^2}\right)^n
  \leq
  (1-t)^{-2n}.
\]
The empty matching gives $Z_G(w)\geq1$, completing the proof.
\end{proof}

\begin{remark}
    The bound on the edge weights in~\eqref{eq:degree-bound} is
related to a result of Kahn and Kayll~\cite{KahnKayll}. They showed that if
\[
 \bigl(\PP_{\mu_w}(e\in M)\bigr)_{e\in E}
 \in(1-\sigma)\mathcal P(G),
\]
where $\mathcal P(G)$ is the matching polytope and
$0<\sigma<1$, then each edge weight $w_e$ is bounded
by a constant depending only on $\sigma$.

In our setting, write $x_e=\PP_{\mu_w}(e\in M)$. Since
$\sum_{e\ni v}x_e=t$ for every vertex and $G$ is bipartite,
$x/t$ belongs to the perfect matching polytope. Thus
$x\in t\mathcal P(G)$, and their result applies with
$\sigma=1-t$. The argument above gives the explicit bound
in~\eqref{eq:degree-bound} on the sum of the edge weights
at each vertex, which we will use in constructing the
polynomial approximation.
\end{remark}

\medskip

\subsection{Completing the upper bound}
We now complete the proof of the upper bound. We work with finite differences, avoiding the need to establish
differentiability of $F_A(t)$ or continuity of $\theta_t$.

\begin{proof}[Proof of the upper bound in \cref{thm:entropy}]
Fix $0<t<s<1$, and let $\theta_s$ be a minimizer of
$\Phi_{A,s}$. By \cref{lem:balanced-gibbs},
\[
  F_A(s)
  =\Phi_{A,s}(\theta_s)
  =\log Z_G(w_A(\theta_s))
  -s\sum_v(\theta_s)_v,
\]
whereas the variational bound at $t$ gives
\[
  F_A(t)=\min_{\theta\in\R^{L\sqcup R}}\Phi_{A,t}(\theta)\leq
  \log Z_G(w_A(\theta_s))
  -t\sum_v(\theta_s)_v.
\]
Dividing by $s$ and $t$, respectively, and subtracting gives
\begin{equation}\label{eq:envelope}
  \frac{F_A(t)}{t}-\frac{F_A(s)}{s}
  \leq
  \left(\frac1t-\frac1s\right)
  \log Z_G(w_A(\theta_s)).
\end{equation}

Fix $t<b<1$ and an equipartition
\[
  t=t_0<t_1<\cdots<t_k=b.
\]
Applying \cref{eq:envelope} to each pair
$(t_{j-1},t_j)$ and using \cref{eq:logZ-bound} gives
\[
  \frac{F_A(t)}t-\frac{F_A(b)}b
  \leq
  -2n\sum_{j=1}^k
  \left(\frac1{t_{j-1}}-\frac1{t_j}\right)
  \log(1-t_j).
\]
Letting $k\to\infty$ gives
\begin{equation}\label{eq:integrated-envelope}
  \frac{F_A(t)}t-\frac{F_A(b)}b
  \leq
  -2n\int_t^b\frac{\log(1-u)}{u^2}\,du.
\end{equation}

Next, we show the endpoint bound
\begin{equation}\label{eq:entropy-endpoint}
  \limsup_{s\uparrow1}F_A(s)\leq\log\per A.
\end{equation}
Choose $s_k\uparrow1$ such that
$F_A(s_k)\to\limsup_{s\uparrow1}F_A(s)$, and let
$\nu_k\in\mathcal V_{s_k}$ attain $F_A(s_k)$. Since
$\Delta(\mathcal M(G))$ is compact, after passing to a subsequence,
$\nu_k$ converges to a probability measure $\nu_*$ on $\mathcal M(G)$.
For every vertex $v$, continuity gives
$\PP_{\nu_*}(v\text{ is matched})=\lim_k s_k=1$, so $\nu_*\in\mathcal V_1$.
Since $\mathcal M(G)$ is finite, both $H(\nu)$ and $\E_{\nu}\log A^M$ are continuous functions of $\nu$. Hence
\[
  \lim_{k\to\infty}F_A(s_k)
  =
  \lim_{k\to\infty}
  \left(
    H(\nu_k)+\E_{\nu_k}\log A^M
  \right)
  =
  H(\nu_*)+\E_{\nu_*}\log A^M \leq F_A(1) = \log \per A.
\]
Finally, letting $b\uparrow1$ in \cref{eq:integrated-envelope} and using \cref{eq:entropy-endpoint}, we obtain
\[
  \frac{F_A(t)}t-\log\per A
  \leq
  -2n\int_t^1\frac{\log(1-u)}{u^2}\,du
  =
  \frac{2nh(t)}t.
\]
Multiplying by $t$ proves the upper bound.
\end{proof}

\section{A convex polynomial approximation}\label{sec:convex-approximation}

Let $G=(V,E)$ be a finite simple graph with $N=|V|\ge2$, and let
$w=(w_e)_{e\in E}$ be positive edge weights. We will approximate
$\log Z_G(w)$ by a low-degree polynomial $U(w)$. In the application,
$w_{uv}=A_{uv}e^{\theta_u+\theta_v}$, so the logarithmic edge weights
$y_{uv}=\log A_{uv}+\theta_u+\theta_v$ are affine functions of the
vertex fields. Since we will later optimize $U(e^y)$ over the vertex fields $\theta_v$, we seek an approximation for which
$y\mapsto U(e^y)$ is convex.

Fix a rational $K\ge1$, and define the convex set
\[
 \cD_K=\left\{y\in\R^E:
       \sum_{e\ni v}e^{y_e}\le K\text{ for every }v\in V\right\}.
\]
Throughout this section, derivatives
are taken with respect to the logarithmic weights $y$, whereas
polynomial degrees refer to the weights $w$. In particular, with
$w=e^y$, we write
\[
 \partial_e:=\frac{\partial}{\partial y_e},\qquad
 \partial_e[F(e^y)]=w_e\frac{\partial F}{\partial w_e}(w).
\]
\begin{theorem}\label{thm:surrogate}
For rational $K\ge1$ and $0<\eta\le1$, there is an explicit polynomial
$U(w)$ with rational coefficients and degree at most
\[
 m=O\!\left(K\log\frac{2K}{\eta}\right)
\]
such that $y\mapsto U(e^y)$ is convex on $\cD_K$ and
\[
 0\le U(e^y)-\log Z_G(e^y)\le\eta N/2
 \qquad\text{for every }y\in\cD_K.
\]
The polynomial can be constructed using $N^{O(m)}$ arithmetic
operations. Given $w=e^y$, its value and any fixed-order derivatives
of $y\mapsto U(e^y)$ can also be evaluated using $N^{O(m)}$
arithmetic operations. 
\end{theorem}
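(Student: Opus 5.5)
We would build $U$ from Barvinok's interpolation along the ray $z\mapsto Z_G(zw)$, corrected by the linear regularizer $\beta\sum_e w_e$ as sketched in \cref{sec:overview}.

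\textbf{Step 1: Taylor truncation and its accuracy.} Set $p(z)=Z_G(zw)=\sum_k c_k(w)z^k$, where $c_k(w)=\sum_{|M|=k}w^M$. By Heilmann--Lieb, the roots of $p$ are $-1/\lambda_j$ with $\lambda_j>0$, so
\[
\log p(z)=\sum_j\log(1+\lambda_j z),\qquad \log Z_G(w)=\log p(1).
\]
Define
\[
T_m(w)=\sum_{k=1}^m\frac{(-1)^{k-1}}{k}\,p_k(w),\qquad p_k=\sum_j\lambda_j^k .
\]
The power sums $p_k$ are polynomials of degree $k$ in $w$, obtained from $c_1,\dots,c_k$ by Newton's identities. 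Each $c_k$ costs $N^{O(k)}$ by enumerating matchings of size at most $m$.

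The truncation error is controlled by $\lambda_{\max}$. Since $\sum_j\lambda_j=c_1=\sum_e w_e\le KN/2$ and the $\lambda_j$ are positive, we only get $\lambda_j\le KN/2$, which is too weak. We therefore use the standard Heilmann--Lieb bound $\lambda_j\le 4K'$, where $K'$ bounds the maximum weighted degree; on $\cD_K$ this gives $\lambda_j\le 4K$.

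With $\lambda_j$ of order $K$, the raw series at $z=1$ diverges. So, as in \cite{BarvinokReal}, we first compose with a polynomial map $\phi$ that sends $[0,1]$ into a neighborhood of $[0,\infty)$ avoiding $(-\infty,-1/(4K)]$, with $\phi(0)=0$ and $\phi(1)=1$. Then $g(z)=\log p(\phi(z))$ is analytic on a disk of radius $\rho>1$ with $\rho-1\gtrsim 1/K$. We take $T_m$ to be the degree-$m$ Taylor polynomial of $g$ evaluated at $z=1$; it is still a polynomial in $w$ of degree $O(m\deg\phi)$, built from the $c_k$.

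A Cauchy estimate, applied root by root, gives
\[
|T_m(w)-\log Z_G(w)|\le C\,N\,\rho^{-m}\cdot\mathrm{poly}(K).
\]
Choosing $m=O(K\log(2K/\eta))$ makes this at most $\eta N/8$.

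\textbf{Step 2: Curvature bound (the main obstacle).} We need to show
\[
\nabla_y^2 T_m(e^y)\succeq-\gamma\,\nabla_y^2\Bigl(\sum_e e^{y_e}\Bigr)=-\gamma\,\diag(w).
\]
The identity here holds because $\nabla_y^2\sum_e e^{y_e}=\diag(w)$. Since $\log Z_G(e^y)$ is convex, it suffices to show
\[
\bigl|\xi^\top\nabla^2(T_m-\log Z_G)\,\xi\bigr|\le\gamma\sum_e w_e\xi_e^2 .
\]
Our approach is to view the error $R=\log Z_G-T_m$ as the tail of an expansion in which each term of degree $k$ is a sum over connected edge clusters (a cluster expansion of $\log Z_G$, whose coefficients are supported on connected multisets of edges). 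Then:
\begin{itemize}
\item each second derivative $\partial_e\partial_f$ of a degree-$k$ term is nonzero only for clusters containing both $e$ and $f$;
\item the number of connected clusters of size $k$ through a given edge, weighted by $w$, is at most $(CK)^{k}$ times the analytic decay factor $\rho^{-k}$.
\end{itemize}
An alternative route is to apply the same Cauchy/interpolation estimate to the second directional derivative $\frac{d^2}{ds^2}\log Z_G(e^{y+s\xi})$. This function is a sum of local terms, and we would expect a tail bound of the form
\[
\gamma\le \mathrm{poly}(K)\,\rho^{-m}\le \frac{\eta}{4K}.
\]
Making this locality bound rigorous, via Schur's test or Gershgorin on the matrix $\partial_e\partial_f R$ with a degree-$K$ weighted row-sum bound, is where we expect the real work to lie.

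\textbf{Step 3: Assemble, and rationality and complexity.} Set
\[
U(w)=T_m(w)+\beta\sum_e w_e+\eta N/8,\qquad \beta=\gamma .
\]
Convexity: $\nabla^2 U(e^y)=\nabla^2\log Z_G-\nabla^2 R+\beta\diag(w)$. Here $\nabla^2\log Z_G\succeq0$, and by Step 2 the remaining two terms together are positive semidefinite, so $U(e^y)$ is convex on $\cD_K$.

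Accuracy: $U-\log Z_G\ge -\eta N/8+\eta N/8=0$. For the upper bound, the regularizer contributes at most $\beta KN/2\le\eta N/8$, so
\[
U-\log Z_G\le\eta N/8+\eta N/8+\eta N/8<\eta N/2 .
\]

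Rationality and complexity: rational coefficients follow by choosing $\phi$ with rational coefficients and rounding $\beta$ up to a rational. The value of $U$ and any fixed-order $y$-derivatives are sums over matchings of size at most $O(m)$, since $\partial_e w^M=\one_{e\in M}w^M$. Each therefore costs $N^{O(m)}$ arithmetic operations.
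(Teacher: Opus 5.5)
There is a genuine gap, and it is where you flag it: Step~2, the curvature bound, is the heart of the theorem, and the proposal does not prove it.

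The cluster-expansion route cannot work as described. On $\cD_K$ the power series of $\log Z_G(sw)$ in $s$ can have radius of convergence of order $1/K$. So absolute term-by-term bounds over connected clusters grow like $(CK)^k$, while the change of variables only buys decay $q^k$ with $q=1-\Theta(1/K)$. The needed decay comes from cancellation, which no absolute cluster count sees.

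Your second route, Cauchy estimates in $z$ for second $y$-derivatives, is the right idea and is what the paper does. But it needs exactly the ingredient you leave open. Writing $F_y(z)=\log Z_G(\tau(z)e^y)$, you need a bound of the form $\sum_f|\partial_f\partial_eF_y(z)|\le CKe^{y_e}$, uniform on the contour $|z|=R$ and at \emph{complex} activities $\tau(z)$. The root factorization gives no handle on this, because the roots move with $y$ in an uncontrolled way. The factor $e^{y_e}=w_e$ is essential: $\nabla_y^2W=\diag(w)$ and individual $w_e$ can be arbitrarily small, so a bound like $\gamma\norm{\xi}_2^2$ would not let $\beta W$ absorb the negative curvature.

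The paper supplies this bound in four steps.
\begin{enumerate}
\item Write $\partial_eF_y=\tau(z)w_eR_{G,u}R_{G-u,v}$ and use the path-tree resolvent identity $R_{H,v}(sw)=[(I+sB^2)^{-1}]_{rr}$ with $\norm{B}_2\le2\sqrt K$. Since $\operatorname{Re}\tau(z)>-1/(8K)$ on the unit disk, this gives $|R_{H,v}|\le2$.
\item Differentiate the resolvent in a signed direction $D_\sigma=\sum_f\sigma_f\partial_f$. Using $\norm{D_\sigma B}_2\le\sqrt K$, this gives $|D_\sigma R_{H,v}|\le36K$, hence $|D_\sigma\partial_eF_y|\le400Kw_e$.
\item Choose $\sigma$ to match the signs of row $e$ of the Hessian; this turns the directional bound into an $\ell^1$ row-sum bound.
\item Cauchy's formula passes the row-sum bound to the tail $\sum_{\ell>m}$, and Schur's test gives $|a^{\mathsf T}\nabla^2\mathcal E_m a|\le\beta_m\sum_ew_ea_e^2$.
\end{enumerate}

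Two smaller points. First, the map in Step~1 must send the whole disk of radius $\rho$, not just $[0,1]$, into $\C\setminus(-\infty,-1/(4K)]$. Barvinok's polynomial maps into a thin neighbourhood of $[0,1]$ only give $\rho-1=e^{-\Theta(K)}$. You need a map exploiting the half-plane, such as the paper's M\"obius map $\tau(z)=z/(4K(1-z))$. Second, since $\phi(0)=0$, $T_m$ has degree at most $m$ in $w$, not $O(m\deg\phi)$.

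Your root-by-root value estimate with the constant shift $\eta N/8$ is a valid alternative to the paper's bound $|\mathcal E_m|\le\beta_mW$. The paper obtains that bound instead by integrating the gradient bound along $s\mapsto sw$.
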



\subsection{The polynomial approximation}

We first construct a polynomial $T_m(w)$ approximating $\log Z_G(w)$.
To ensure convexity in the logarithmic weights, we will also bound
the Hessian of the approximation error and add a small multiple of
\[
 W(w)=\sum_{e\in E}w_e.
\]
The same correction will make the approximation an upper bound.

One could try expanding $\log Z_G(sw)$ at $s=0$ and evaluating the
Taylor polynomial at $s=1$. However, zeros of $Z_G(sw)$ may lie too
close to the origin for the Taylor series to converge at $s=1$.
We instead make the change of variables
\begin{equation}\label{eq:interpolation-map}
 \rho=\frac1{4K},\qquad
 \tau(z)=\frac{\rho z}{1-z},\qquad
 z_\star=\frac{4K}{4K+1}.
\end{equation}
Then $\tau(0)=0$ and $\tau(z_\star)=1$. This change of variables
is a rescaled version of Barvinok's transformation
\cite[Lemma~2.4]{BarvinokReal}. We will show below that, for
$y\in\cD_K$, the function $z\mapsto\log Z_G(\tau(z)e^y)$
is analytic on $|z|<1$, with the logarithm chosen to vanish
at $z=0$.

For $m\ge0$, define
\begin{equation*}
 T_m(w)=\sum_{\ell=0}^m z_\star^\ell
       [z^\ell]\bigl(\log Z_G(\tau(z)w)\bigr),
 \qquad
 \mathcal E_m(w)=\log Z_G(w)-T_m(w),
\end{equation*}
where $[z^\ell]f(z)$ denotes the coefficient of $z^\ell$ in the
Taylor expansion of $f$ at zero. Thus $T_m$ is the Taylor polynomial in $z$,
evaluated at the point $z_\star$ corresponding to the original weights.
The next lemma shows that $T_m$ can be computed by enumerating
matchings with at most $m$ edges.

\begin{lemma}\label{lem:taylor-construction}
For $m\ge1$, the function $T_m(w)$ is a polynomial with rational
coefficients and degree at most $m$. It can be constructed using
$N^{O(m)}$ arithmetic operations. Given $w=e^y$, its value and any
fixed-order derivatives of $y\mapsto T_m(e^y)$ can be evaluated using $N^{O(m)}$ arithmetic operations.
\end{lemma}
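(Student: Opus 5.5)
We write $Z_G(\tau(z)w)=\sum_{k\ge0}p_k(w)\tau(z)^k$, where
\[
 p_k(w)=\sum_{M\in\cM(G),\,|M|=k}w^M
\]
is homogeneous of degree $k$ in $w$, with $p_0=1$. Since $\tau(z)=\rho z/(1-z)=\rho\sum_{j\ge1}z^j$ has zero constant term, $\tau(z)^k=O(z^k)$. Hence the coefficient of $z^\ell$ in $\log Z_G(\tau(z)w)$ depends only on $p_1,\dots,p_\ell$.

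Concretely, the coefficients of $\tau(z)^k$ are rational:
\[
 [z^j]\tau(z)^k=\rho^k\binom{j-1}{k-1}\quad\text{for } j\ge k\ge1,
\]
and $\rho=1/(4K)$ is rational. So the truncation
\[
 Q(z)=1+\sum_{j=1}^m q_j z^j,\qquad q_j=\sum_{k=1}^{j}\rho^k\binom{j-1}{k-1}p_k(w),
\]
agrees with $Z_G(\tau(z)w)$ modulo $z^{m+1}$. Each $q_j$ is a polynomial in $w$ with rational coefficients, built from homogeneous pieces of degree at most $j$.

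Next we take the logarithm of this power series. Writing $\log Q(z)=\sum_\ell c_\ell z^\ell$, the standard recurrence (from $Q'=Q\cdot(\log Q)'$) gives
\[
 c_\ell=q_\ell-\frac1\ell\sum_{i=1}^{\ell-1}i\,c_i\,q_{\ell-i}.
\]
By induction on $\ell$, $c_\ell$ is a polynomial in $p_1,\dots,p_\ell$ with rational coefficients. Every monomial $\prod p_{k_r}$ appearing in $c_\ell$ has $\sum k_r\le\ell$, since each $q_j$ involves only $p_k$ with $k\le j$ and the recurrence preserves this total index bound. Therefore $c_\ell$ has degree at most $\ell$ in $w$. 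It follows that
\[
 T_m=\sum_{\ell=0}^m z_\star^\ell c_\ell,\qquad c_0=0,
\]
has degree at most $m$, and its coefficients are rational because $z_\star=4K/(4K+1)$ is rational.

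For the operation count, there are at most $N^{2k}$ matchings of size $k\le m$ (each is a set of $k$ edges), so $p_1,\dots,p_m$ are obtained by enumerating $N^{O(m)}$ edge subsets and checking which are matchings. Explicitly constructing $T_m$ as a polynomial in $w$ works the same way. Each $p_k$ is stored as a list of at most $N^{O(m)}$ monomials. The recurrence uses $O(m^2)$ multiplications of such polynomials, and any product of total degree at most $m$ in $|E|\le N^2$ variables has at most $N^{O(m)}$ monomials. So each step costs $N^{O(m)}$, and all coefficients are rational.

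For evaluation at a given $w=e^y$, we substitute the numerical values $p_k(w)$ and run the recurrence on numbers, which costs $N^{O(m)}$. For derivatives, note that $\partial_e w^M=\one_{e\in M}w^M$. Hence any fixed-order derivative $\partial_{e_1}\cdots\partial_{e_r}$ of $p_k(e^y)$ equals the sum of $w^M$ over matchings $M$ of size $k$ containing $e_1,\dots,e_r$. This is again computed by enumeration.

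We then propagate derivatives through the recurrence by the Leibniz rule: we track all partial derivatives up to order $r$ of each $q_j$ and $c_\ell$. There are at most $|E|^r=N^{O(1)}$ of them for fixed $r$, so the total cost stays $N^{O(m)}$. Alternatively, we differentiate the explicit monomial representation term by term, which is immediate since $\partial_e$ acts on each monomial $w^\alpha$ by multiplication by $\alpha_e$.

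I expect the main obstacle to be the degree bookkeeping. One must confirm that composing with $\tau$ and taking the logarithm does not raise the $w$-degree of the $z^\ell$ coefficient above $\ell$. This holds because $w$ enters only through $\tau(z)w$, and $\tau(z)$ is divisible by $z$. Equivalently, we can attach a grading in which $w_e$ has weight $1$ and $z$ has weight $-1$ (more precisely, every power of $w$ carries at least the same power of $z$), so truncation at $z^m$ truncates $w$-degree at $m$.

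All other steps are routine power-series manipulation; note that they use no analyticity. The analyticity is needed only later, for the error estimate on $\mathcal E_m$.
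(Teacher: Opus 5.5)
Your proposal is correct and follows essentially the same route as the paper: you truncate $Z_G(\tau(z)w)$ using the matching sums $p_k$ (the paper's $M_j$) with coefficients $\rho^k\binom{j-1}{k-1}$, and you apply the recurrence $(\log Q)'Q=Q'$ to get rational coefficients with degree at most $\ell$. You then bound the cost by counting monomials of degree at most $m$ in at most $N^2$ variables, and you handle derivatives through $\partial_e w^\gamma=\gamma_e w^\gamma$, exactly as the paper does; your Leibniz-propagation alternative is also valid.
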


\begin{proof}
Let
\[
 M_j(w)=\sum_{\substack{M\in\cM(G)\\|M|=j}}w^M
\]
be the total weight of the matchings with $j$ edges. Since $\tau(0)=0$,
the coefficients through degree $m$ in
\[
 Z_G(\tau(z)w)=1+\sum_{j\ge1}M_j(w)\tau(z)^j
\]
depend only on $M_1(w),\ldots,M_m(w)$. Write
\[
 Z_G(\tau(z)w)=1+\sum_{\ell\ge1}s_\ell(w)z^\ell,
 \qquad
 \log Z_G(\tau(z)w)=\sum_{\ell\ge1}\lambda_\ell(w)z^\ell.
\]
Expanding $\tau(z)^j=\rho^jz^j(1-z)^{-j}$ gives
\[
 s_\ell(w)=
 \sum_{j=1}^{\min\{\ell,\lfloor N/2\rfloor\}}
 \rho^j\binom{\ell-1}{j-1}M_j(w).
\]
Differentiating the two series with respect to $z$ and using
$(\log Z)'Z=Z'$ gives the recurrence
\begin{equation}\label{eq:log-coefficient-recurrence}
 \lambda_\ell(w)
 =s_\ell(w)-\frac1\ell
   \sum_{j=1}^{\ell-1}j\lambda_j(w)s_{\ell-j}(w).
\end{equation}
It follows inductively that $\lambda_\ell(w)$ has rational
coefficients and degree at most $\ell$. Hence
$T_m(w)=\sum_{\ell=1}^m z_\star^\ell\lambda_\ell(w)$ has rational
coefficients and degree at most $m$.

The matchings with at most $m$ edges can be enumerated in
$N^{2m+O(1)}\operatorname{poly}(m)$ operations. The recurrence then
constructs $T_m$ in $N^{O(m)}$ operations since all intermediate
polynomials have degree at most $m$ in at most $N^2$ variables,
and therefore have $N^{O(m)}$ monomials. Finally, for a monomial
$w^\gamma=\prod_e w_e^{\gamma_e}$,
\[
 \partial_e w^\gamma=\gamma_e w^\gamma.
\]
Thus evaluation and any fixed-order derivatives can also be
computed in $N^{O(m)}$ operations.
\end{proof}

We next state the error bounds needed to prove \cref{thm:surrogate}.
Let
\begin{equation*}
 q=\frac{8K}{8K+1},\qquad
 \beta_m=\frac{400Kq^{m+1}}{1-q}.
\end{equation*}

\begin{lemma}\label{lem:taylor-error-bounds}
For every $m\ge0$ and $y\in\cD_K$,
\begin{align}
 |\mathcal E_m(e^y)|
 &\le \beta_mW(e^y),\label{eq:taylor-value-error}\\
 \left|a^{\mathsf T}\nabla_y^2\mathcal E_m(e^y)a\right|
 &\le\beta_m\sum_e e^{y_e}a_e^2
 \qquad(a\in\R^E).\notag
\end{align}
Moreover, for every edge $e$,
\begin{equation}\label{eq:taylor-gradient-error}
 |\partial_e\mathcal E_m(e^y)|\le \beta_m e^{y_e}.
\end{equation}
\end{lemma}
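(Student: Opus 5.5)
The plan is to bound everything through the structure of the zeros of $\lambda\mapsto Z_G(\lambda w)$, using \eqref{eq:interpolation-map} to keep $z_\star$ well inside a disk on which $z\mapsto \log Z_G(\tau(z)w)$ is analytic.

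\textbf{Step 1: the value bound \eqref{eq:taylor-value-error}.} By Heilmann--Lieb, all roots of $\lambda\mapsto Z_G(\lambda w)$ are negative reals $-r_1,\dots,-r_k$. For $y\in\cD_K$ the standard root bound gives $r_i\ge 1/(4K)=\rho$ (maximum weighted degree at most $K$). Writing $Z_G(\lambda w)=\prod_i(1+\lambda/r_i)$, I will expand each factor separately:
\[
 \log\Bigl(1+\tfrac{\tau(z)}{r}\Bigr)=\log\bigl(1-(1-\rho/r)z\bigr)-\log(1-z).
\]
The coefficient of $z^\ell$ in this expression is $\bigl(1-(1-\rho/r)^\ell\bigr)/\ell\in[0,\rho/r]$. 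Hence the tail beyond degree $m$, evaluated at $z_\star$, lies in $\bigl[0,\tfrac{\rho}{r}\,z_\star^{m+1}/(1-z_\star)\bigr]$. Summing over the roots and using $\sum_i 1/r_i=M_1(w)=W(w)$ gives
\[
 0\le\mathcal E_m(w)\le \rho\,\frac{z_\star^{m+1}}{1-z_\star}\,W(w).
\]
Since $z_\star<q$, this is well within $\beta_mW(w)$. This step also shows that $\mathcal E_m\ge0$, which will be useful later.

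\textbf{Step 2: derivatives via complexified weights and Cauchy estimates.} I will fix $y\in\cD_K$ and $a\in\R^E$, normalised so that $\max_e|a_e|\le1$. Put $w'_e=e^{y_e+\zeta a_e}$ for complex $\zeta$ with $|\zeta|\le c$, for a small absolute constant $c$. Then $\sum_{e\ni v}|w'_e|\le e^{c}K\le 2K$. The aim is to prove an analogue of Step~1 for such complex weights: the function $z\mapsto\log Z_G(\tau(z)w')$ is analytic on a disk of radius $1/q'$ with $z_\star<q<q'$, and it is bounded there by $O(K)\sum_e|w'_e|$ (more precisely, by a quantity linear in the weights). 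The factor $q=8K/(8K+1)$ is exactly the one that arises from replacing $K$ by $2K$ in $z_\star$.

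The Cauchy coefficient estimate then gives $|\mathcal E_m(w')|\le C\,q^{m+1}(1-q)^{-1}K\sum_e|w'_e|$. The gradient and Hessian bounds are then obtained by Cauchy's formula in $\zeta$. For \eqref{eq:taylor-gradient-error} I take $a=\one_e$; the contribution linear in $w_e$ can be isolated because $\mathcal E_m$ is a power series in the weights in which each matching monomial is divisible by $w_e$ whenever it involves $e$. For the Hessian I apply the same argument along a general direction. To obtain the weighted form $\sum_e e^{y_e}a_e^2$ rather than $W\cdot\max_e a_e^2$, I will either use a per-edge rescaling of the perturbation or polarization from diagonal terms. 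The constant $400$ should absorb these Cauchy losses.

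\textbf{Main obstacle: zero-freeness for complex weights.} The hard step is the zero-freeness needed in Step~2, namely that $Z_G(\lambda w')\ne0$ for $\lambda$ in $\tau$ of the enlarged disk and $w'$ complex with $|\arg w'_e|\le c$ and weighted degrees at most $2K$. The real-rooted factorisation of Step~1 is no longer available. I would first try the ratio recursion
\[
 R_v=\frac{Z_{G}}{Z_{G-v}}=1+\sum_{u\sim v}\lambda w'_{uv}\,\frac{Z_{G-u-v}}{Z_{G-u}},
\]
and show by induction that all ratios $Z_{H-u}/Z_H$, over induced subgraphs $H$, lie in a fixed invariant domain bounded away from $0$. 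The natural choice is a sector around the positive axis intersected with a half-plane. The image of the disk under $\tau$ is a half-plane $\operatorname{Re}\lambda>-\rho'/2$, which is what should keep the recursion contracting, in the spirit of Barvinok's sector arguments. If this invariant-domain argument becomes too delicate, the fallback is perturbative: apply Hurwitz/Rouch\'e to the real case of Step~1, using a quantitative gap between the zeros and the disk. This should suffice because the complex perturbation $|\zeta|\le c$ is small.
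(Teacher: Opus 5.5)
Your Step 1 is correct, and it reaches the value bound by a genuinely different route. Factoring $Z_G(\lambda w)=\prod_i(1+\lambda/r_i)$ with $r_i\ge\rho$ and expanding $\log\frac{1-(1-\rho/r)z}{1-z}$ termwise gives $0\le\mathcal E_m(w)\le\frac{4K+1}{4K}z_\star^{m+1}W(w)$. This is sharper than the value bound requires and also yields $\mathcal E_m\ge0$. The paper instead derives the value bound last, by integrating the gradient bound along $s\mapsto sw$.

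The gap is Step~2. That is where the gradient and Hessian bounds must come from, and those are what matter for convexity, but it is a plan rather than a proof. Apart from leaving zero-freeness for complex weights open, it has two concrete problems.

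First, your perturbative fallback does not permit an absolute constant $c$. On $|z|\le R$, $|\tau(z)|$ reaches $9/4$ and $\norm{B^2}_2$ reaches $4K$. A relative perturbation of size $c$ in the weights therefore changes $\tau(z)B^2$ by order $cK$, so the Neumann or Rouch\'e argument only guarantees invertibility of $I+\tau(z)B'^2$ for $c=O(1/K)$. A second-order Cauchy estimate in $\zeta$ then costs a factor of order $K^2$, which the fixed $\beta_m=400Kq^{m+1}/(1-q)$ cannot absorb.

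Second, and more fundamentally, a bound on $|\mathcal E_m(w')|$ that is linear in the total weight only gives derivative bounds of order $W(e^y)$. It does not give $e^{y_e}$ or $\sum_e e^{y_e}a_e^2$. Perturbing only $w_e$ does not help, since $\mathcal E_m(w')-\mathcal E_m(e^y)$ is still only controlled by $W$. Polarization from diagonal entries cannot bound the off-diagonal entries of the indefinite matrix $\nabla_y^2\mathcal E_m(e^y)$. What is missing is a per-edge quantity whose size is intrinsically $O(e^{y_e})$.

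The paper supplies exactly this, and it never complexifies the weights. It writes $\partial_eF_y(z)=\tau(z)w_eR_{G,u}R_{G-u,v}$, with $R_{H,v}=[(I+\tau(z)B^2)^{-1}]_{rr}$ coming from the path tree. Zero-freeness is then needed only for the complex scalar $\tau(z)$ with real weights. It follows from $\operatorname{spec}(B^2)\subseteq[0,4K]$ and $\operatorname{Re}\tau(z)>-1/(8K)$, which give $\norm{(I+\tau(z)B^2)^{-1}}_2\le2$.

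Differentiating the resolvent along the real direction $D_\sigma=\sum_f\sigma_f\partial_f$, using $\norm{D_\sigma B}_2\le\sqrt K$, gives $|\partial_eF_y(z)|\le9e^{y_e}$ and $|D_\sigma\partial_eF_y(z)|\le400Ke^{y_e}$ on $|z|\le R$. Cauchy's formula in $z$ alone then bounds the Taylor tails. Choosing $\sigma_f$ to be the sign of the Hessian entry $H_{ef}$ turns the signed bound into the row-sum bound $\sum_f|H_{ef}|\le\beta_me^{y_e}$. The inequality $2|a_ea_f|\le a_e^2+a_f^2$ then gives the weighted quadratic-form bound.
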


We first deduce the theorem from this lemma. The value and Hessian
bounds suffice for this deduction; the gradient bound will also be
used in the numerical minimization.

\begin{proof}[Proof of \cref{thm:surrogate}]
If $E=\varnothing$, take $U=0$. Otherwise, choose
\begin{equation}\label{eq:surrogate-degree}
 m=\min\left\{j\in\mathbb N:
       2\beta_jK\le\eta\right\},
\end{equation}
and set
\[
 U(w)=T_m(w)+\beta_mW(w).
\]
Since $\beta_j=400K(8K+1)q^{j+1}$ and
$q^j\le e^{-j/(8K+1)}$, we have
\[
 m=O\!\left(K\log\frac{2K}{\eta}\right).
\]
By \cref{lem:taylor-construction}, $U$ has rational coefficients,
degree at most $m$, and the claimed construction and evaluation costs.

To prove convexity, write $L(y)=\log Z_G(e^y)$, and let $X_e(M)$
indicate that $e\in M$. Differentiating the finite sum defining
$Z_G$ gives
\[
 \partial_e L(y)=\E_{\mu_{e^y}}X_e,
 \qquad
 \partial_f\partial_e L(y)=\Cov_{\mu_{e^y}}(X_e,X_f).
\]
Consequently, for every $a\in\R^E$,
\[
 a^{\mathsf T}\nabla^2L(y)a
 =\Var_{\mu_{e^y}}\!\left(\sum_e a_eX_e\right)\ge0.
\]
Also, $\nabla_y^2W(e^y)=\diag(e^{y_e}:e\in E)$. Since
$U(e^y)=L(y)-\mathcal E_m(e^y)+\beta_mW(e^y)$, the Hessian bound
in \cref{lem:taylor-error-bounds} gives
\begin{align*}
 a^{\mathsf T}\nabla_y^2[U(e^y)]a
 &=a^{\mathsf T}\nabla^2L(y)a
   -a^{\mathsf T}\nabla_y^2\mathcal E_m(e^y)a
   +\beta_m\sum_e e^{y_e}a_e^2
 \ge0.
\end{align*}
Thus $y\mapsto U(e^y)$ is convex on the convex set $\cD_K$.

For the error bound, use
$U(e^y)-L(y)=\beta_mW(e^y)-\mathcal E_m(e^y)$
and \cref{eq:taylor-value-error} to obtain
\[
 0\le U(e^y)-L(y)\le2\beta_mW(e^y).
\]
Since
\[
 2W(e^y)=\sum_{v\in V}\sum_{e\ni v}e^{y_e}\le NK,
\]
it follows that
\[
 0\le U(e^y)-L(y)\le\beta_mNK\le\eta N/2.
\]
This proves the theorem.
\end{proof}

\subsection{Path trees}

We now turn to the proof of \cref{lem:taylor-error-bounds}. We will bound
derivatives of $\log Z_G(\tau(z)e^y)$ by expressing them through
ratios of partition functions.

For an induced subgraph $H$ of $G$, a vertex $v\in V(H)$, and
$s\in\C$, define 
\[
 R_{H,v}(sw)=\frac{Z_{H-v}(sw)}{Z_H(sw)}
\]
whenever the denominator is nonzero. For $e=uv$, removing $e$ from
a matching containing it leaves a matching of $G-u-v$. Thus
\[
 \partial_e Z_G(sw)
 =\sum_{\substack{M\in\cM(G)\\e\in M}}(sw)^M
 =sw_eZ_{G-u-v}(sw),
\]
and hence
\begin{equation*}
 \partial_e\log Z_G(sw)
 =sw_e\frac{Z_{G-u-v}(sw)}{Z_G(sw)}
 =sw_eR_{G,u}(sw)R_{G-u,v}(sw).
\end{equation*}
This identity holds wherever the expressions are defined.
To bound these ratios, we represent them using a matrix associated
with a tree.

For an induced subgraph $H$ and $v\in V(H)$, the {path tree}
$\mathcal T(H,v)$ has as its vertices the simple paths in $H$
starting at $v$. Two paths are adjacent if one is obtained from
the other by adding one edge at the end. The root is $r=(v)$,
the path consisting only of $v$. Every other path has a unique
parent, obtained by deleting its last edge, so this graph is
indeed a tree. See \cref{fig:path-tree}.

\begin{figure}[htbp]
\centering
\begin{tikzpicture}[x=1cm,y=1cm,scale=.84]
  \coordinate (v) at (-3.25,.65);
  \coordinate (u) at (-4.25,-.85);
  \coordinate (x) at (-2.25,-.85);
  \draw[thick] (v)--node[left,pos=.52] {$a$} (u);
  \draw[thick] (u)--node[below] {$b$} (x);
  \draw[thick] (x)--node[right,pos=.52] {$c$} (v);
  \fill (v) circle (2pt) node[above] {$v$};
  \fill (u) circle (2pt) node[below left] {$u$};
  \fill (x) circle (2pt) node[below right] {$x$};

  \draw[->,thick] (-1.65,.05)--(-.15,.05);

  \coordinate (r) at (2.45,.72);
  \coordinate (vu) at (1.30,-.20);
  \coordinate (vx) at (3.60,-.20);
  \coordinate (vux) at (1.30,-1.35);
  \coordinate (vxu) at (3.60,-1.35);
  \draw[thick] (r)--node[above left] {$a$} (vu);
  \draw[thick] (r)--node[above right] {$c$} (vx);
  \draw[thick] (vu)--node[left] {$b$} (vux);
  \draw[thick] (vx)--node[right] {$b$} (vxu);
  \fill (r) circle (2pt) node[above] {$r=(v)$};
  \fill (vu) circle (2pt) node[left] {$(v,u)$};
  \fill (vx) circle (2pt) node[right] {$(v,x)$};
  \fill (vux) circle (2pt) node[below] {$(v,u,x)$};
  \fill (vxu) circle (2pt) node[below] {$(v,x,u)$};

\end{tikzpicture}
\caption{The graph $H$ (left) and its path tree $\mathcal T(H,v)$
rooted at $r=(v)$ (right). Edge labels indicate the corresponding
edges of $H$.}
\label{fig:path-tree}
\end{figure}

Let $B$ be the real symmetric matrix indexed by the vertices of
$\mathcal T(H,v)$, with
\[
 B_{pp'}=
 \begin{cases}
  \sqrt{w_e},&\text{if one of $p,p'$ extends the other by the edge $e$},\\
  0,&\text{otherwise}.
 \end{cases}
\]

The following lemma expresses $R_{H,v}$ in terms of $B$ and shows
that $\norm B_2\le2\sqrt K$ whenever $y\in\cD_K$.

\begin{lemma}\label{lem:path-tree-resolvent}
For every induced subgraph $H$ of $G$ and $v\in V(H)$, let $B$ and
$r$ be as above. Then
\begin{equation}\label{eq:path-tree-resolvent}
 R_{H,v}(sw)=\bigl[(I+sB^2)^{-1}\bigr]_{rr}
\end{equation}
whenever both sides are defined. If $w=e^y$ with $y\in\cD_K$, then
\begin{equation*}
 \norm{B}_2\le2\sqrt K.
\end{equation*}
\end{lemma}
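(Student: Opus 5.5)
We prove the two parts separately. The identity \eqref{eq:path-tree-resolvent} comes from Godsil's path-tree construction together with the standard recursion for matching ratios. The norm bound comes from a weighted Schur test on the tree.

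\textbf{Step 1: the tree recursion for $R_{H,v}$.} Splitting matchings of $H$ according to whether $v$ is unmatched or matched via an edge $vu$ gives
\[
 Z_H(sw)=Z_{H-v}(sw)+\sum_{u\sim v}sw_{vu}Z_{H-v-u}(sw).
\]
Dividing by $Z_{H-v}(sw)$ yields, whenever all quantities are defined,
\[
 R_{H,v}(sw)=\Bigl(1+s\sum_{u\sim v}w_{vu}R_{H-v,u}(sw)\Bigr)^{-1}.
\]

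\textbf{Step 2: the same recursion for the resolvent.} Let $B$ be the weighted adjacency matrix of the tree $\mathcal T(H,v)$. Since $\mathcal T$ is bipartite, $(B^2)_{rr}=\sum_{u}w_{vu}$, and $B^2$ restricted to the even levels is the natural weighted operator on grandchildren. We want to show that $x=[(I+sB^2)^{-1}]_{rr}$ satisfies the same recursion. The cleanest route is the block/Schur complement on the tree. Write $(I+sB^2)^{-1}$ in terms of $(I+\sqrt s\,iB)^{-1}$, using the factorization
\[
 I+sB^2=(I+i\sqrt sB)(I-i\sqrt sB).
\]
By bipartiteness, the diagonal entries of $(I\pm i\sqrt sB)^{-1}$ equal $[(I+sB^2)^{-1}]_{rr}$. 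Then apply the Schur complement at the root of $I+i\sqrt sB$. Removing $r$ leaves a forest whose components are the subtrees rooted at the children $(v,u)$. The key combinatorial fact is that the subtree of $\mathcal T(H,v)$ rooted at $(v,u)$ is isomorphic, with weights, to $\mathcal T(H-v,u)$: a simple path from $v$ through $u$ is the same as a simple path from $u$ in $H-v$. This gives
\[
 x^{-1}=1+s\sum_u w_{vu}\,x_u,
\]
where $x_u$ is the corresponding root entry for $\mathcal T(H-v,u)$.

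\textbf{Step 3: induction.} Induct on $|V(H)|$. The base case is a single vertex, where both sides equal $1$. Since the two quantities satisfy identical recursions with identical base cases, they agree wherever defined. The analytic caveat, that the intermediate denominators might vanish for complex $s$, is handled by proving the identity first for small $|s|$, or for $s>0$ where everything is defined and positive. Both sides are rational in $s$, so the identity then extends by analytic continuation wherever both sides are defined. I expect this bookkeeping, together with the careful use of the $\pm i\sqrt s$ factorization, to be the main technical obstacle.

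\textbf{Step 4: the norm bound.} Apply the weighted Schur test with weights $f(p)=K^{\mathrm{depth}(p)/2}$ or similar. The simpler route: $B$ is symmetric, so $\norm B_2\le\max_p\sum_{p'}|B_{pp'}|\,g(p')/g(p)$ for any positive $g$. Take $g(p)=\sqrt{w_{e(p)}}$, the square root of the weight of the last edge of $p$, with $g(r)=\sqrt K$.

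For a non-root $p$ ending at vertex $x$ via edge $e$, the sum has two parts. The parent contributes
\[
 \sqrt{w_e}\,g(\text{parent})/\sqrt{w_e}\le\sqrt K,
\]
since every edge weight is at most $K$ on $\cD_K$ and $g(\text{parent})\le\sqrt K$. The children contribute
\[
 \sum_{f}\sqrt{w_f}\sqrt{w_f}/\sqrt{w_e}=\sum_f w_f/\sqrt{w_e}.
\]
This last expression is not obviously bounded, so instead use the symmetric choice $g\equiv1$ on alternate scalings, or more simply bound $\norm{B}_2^2=\norm{B^2}_2$. Row sums of $|B^2|$ in the form $\sum_{p''}\sum_{p'}B_{pp'}B_{p'p''}$ are not bounded by $4K$ directly.

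The cleanest option is the Schur test with $g(p)=K^{1/4}$ times... In practice, I would first try $g(p)=\sqrt{w_{e(p)}}$ with $g(r)=\sqrt K$ and check that the parent term is at most $\sqrt K$ while the children term is at most $K/\sqrt{w_e}$. If this fails, I would switch to the standard fact that for a tree, $\norm{B}_2\le\sqrt{\max_p\sum_{\text{children}}w}+\sqrt{\max_p w_{\text{parent edge}}}$. This follows by writing $B=D+D^{\mathsf T}$, where $D$ maps each vertex to its parent, and noting that $\norm{D}_2^2=\norm{DD^{\mathsf T}}$, which is diagonal with entries given by the children-weight sums, each at most $K$. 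Then
\[
 \norm B_2\le 2\norm D_2\le2\sqrt K.
\]
I will commit to this last argument.
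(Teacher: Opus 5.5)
Your proposal is correct and is essentially the paper's proof. It uses the same vertex recursion for $R_{H,v}$, and the same Schur complement at the root, applied to $I\pm i\sqrt s\,B$ (checked for $s>0$, where everything is invertible) rather than to the paper's $I\pm\zeta B$ with small $\zeta$, together with the bipartite sign symmetry and rational continuation. For the norm, your splitting $B=D+D^{\mathsf T}$, with $DD^{\mathsf T}$ diagonal and at most $K$, is exactly the paper's $P+P^{\mathsf T}$.

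One slip should be deleted: the intermediate ``standard fact'' $\norm{B}_2\le\sqrt{\max(\text{children weight})}+\sqrt{\max(\text{parent weight})}$ is false in general. A deep complete $k$-ary tree with unit weights has norm close to $2\sqrt k>\sqrt k+1$. The argument you actually commit to does not use this claim and correctly gives $\norm{B}_2\le2\norm{D}_2\le2\sqrt K$.
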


\begin{proof}
We prove the weighted resolvent form of Godsil's
path-tree identity~\cite{Godsil} directly for completeness.
Every matching of $H$ either leaves $v$ unmatched or contains
exactly one edge $vu$. Thus
\[
 Z_H(sw)=Z_{H-v}(sw)
       +s\sum_{u\sim v}w_{vu}Z_{H-v-u}(sw),
\]
and hence
\[
 R_{H,v}(sw)
 =\left(1+s\sum_{u\sim v}w_{vu}R_{H-v,u}(sw)\right)^{-1}.
\]
We now show that the diagonal entry of $(I-\zeta B_{H,v})^{-1}$
corresponding to the root satisfies the same recurrence,
with $s=-\zeta^2$.

Write $B_{H,v}$ for the weighted adjacency matrix of
$\mathcal T(H,v)$, and set
\[
 Q_{H,v}(\zeta)
 =\bigl[(I-\zeta B_{H,v})^{-1}\bigr]_{rr}.
\]
For now, take $\zeta$ sufficiently close to zero that all
inverses below exist.
Deleting the root of $\mathcal T(H,v)$ leaves one component
for each neighbor $u$ of $v$.
Removing the initial vertex $v$ from each path identifies
this component with $\mathcal T(H-v,u)$. Order the matrix indices with the root first, followed by
these components. Then
\[
 I-\zeta B_{H,v}
 =\begin{pmatrix}
    1&-\zeta b^{\mathsf T}\\
    -\zeta b&M
   \end{pmatrix},
\]
where $M$ is block diagonal with blocks $I-\zeta B_{H-v,u}$
for $u\sim v$, and $b_{(v,u)}=\sqrt{w_{vu}}$, with all other
entries of $b$ zero.
Taking the Schur complement of $M$ gives
\[
 Q_{H,v}(\zeta)
 =\left(1-\zeta^2b^{\mathsf T}M^{-1}b\right)^{-1}.
\]
By the block structure of $M$ and the definition of $b$,
\[
 b^{\mathsf T}M^{-1}b
 =\sum_{u\sim v}w_{vu}Q_{H-v,u}(\zeta),
\]
so
\[
 Q_{H,v}(\zeta)
 =\left(1-\zeta^2\sum_{u\sim v}
                   w_{vu}Q_{H-v,u}(\zeta)\right)^{-1}.
\]
Observe that this is also the recurrence satisfied by
$R_{H,v}(-\zeta^2w)$.
Since $Q_{H,v}(\zeta)=R_{H,v}(-\zeta^2w)=1$ when $v$ is isolated,
induction on $|V(H)|$ gives
\[
 Q_{H,v}(\zeta)=R_{H,v}(-\zeta^2w).
\]

To express this in terms of $B^2$, let $D$ be diagonal,
with entries $1$ and $-1$ on the two bipartition classes
of the path tree.
Since $DBD=-B$ and $D^2=I$,
\[
 (I+\zeta B)^{-1}=D(I-\zeta B)^{-1}D.
\]
The two inverses therefore have the same $rr$ entry.
Averaging gives
\begin{align*}
 R_{H,v}(-\zeta^2w)
 &=\frac12\bigl[(I-\zeta B)^{-1}
                  +(I+\zeta B)^{-1}\bigr]_{rr}\\
 &=\bigl[(I-\zeta^2B^2)^{-1}\bigr]_{rr}.
\end{align*}
Setting $s=-\zeta^2$ proves \cref{eq:path-tree-resolvent}
near $s=0$. Both sides are rational functions of $s$,
so the identity holds wherever both sides are defined.

For the norm bound, write $B=P+P^{\mathsf T}$, where
\[
 P_{pp'}=
 \begin{cases}
  B_{pp'},&\text{if $p$ is the parent of $p'$},\\
  0,&\text{otherwise}.
 \end{cases}
\]
Each vertex has at most one parent, so distinct rows of $P$
have disjoint supports. Thus $PP^{\mathsf T}$ is diagonal.
If a path $p$ ends at $u$, its children extend it along
distinct edges incident to $u$, giving
\[
 (PP^{\mathsf T})_{pp}
 =\sum_{p'\text{ child of }p}B_{pp'}^2
 \le\sum_{\substack{e\in E(H)\\e\ni u}}w_e
 \le K,
\]
where the last inequality uses $w=e^y$ and $y\in\cD_K$.
Hence $PP^{\mathsf T}\preceq KI$ and $\norm P_2\le\sqrt K$.
It follows that
\[
 \norm B_2
 \le\norm P_2+\norm{P^{\mathsf T}}_2
 \le2\sqrt K.
 \qedhere
\]
\end{proof}

The path tree is used only in the proof; the algorithm computes
$T_m$ by the matching enumeration in
\cref{lem:taylor-construction}. Since $B$ is real symmetric,
the norm bound in \cref{lem:path-tree-resolvent} gives
\[
 \operatorname{spec}(B^2)\subseteq[0,4K]
\]
for every induced subgraph $H$ and root $v\in V(H)$, where
$\operatorname{spec}(B^2)$ denotes the spectrum of $B^2$.
We will use this inclusion to bound $(I+\tau(z)B^2)^{-1}$ for
complex $z$.

\subsection{The approximation estimates}
Fix $y\in\cD_K$ and write $w=e^y$. We will estimate the Taylor
coefficients on a circle of radius
\[
 R=\frac{1+z_\star}{2},
\]
so that $z_\star<R<1$. From the definitions of $\tau$ and $z_\star$,
we have
\begin{equation}\label{eq:interpolation-contour-bounds}
 \begin{aligned}
  \frac{z_\star}{R}&=\frac{8K}{8K+1}=q,\\
  |\tau(z)|&\le\frac{\rho R}{1-R}
             =2+\frac1{4K}\le\frac94
             \qquad(|z|\le R).
 \end{aligned}
\end{equation}
The change of variables is illustrated in \cref{fig:interpolation}.

\begin{figure}[htbp]
\centering
\begin{tikzpicture}[x=1cm,y=1cm]
  \begin{scope}[shift={(0,2.45)}]
    \fill[gray!8] (0,0) circle (1.42);
    \draw[thick] (0,0) circle (1.42);
    \draw[dashed] (0,0) circle (1.10);
    \draw[->] (-1.72,0)--(1.82,0)
       node[right] {$\operatorname{Re}z$};
    \draw[->] (0,-1.62)--(0,1.70)
       node[above] {$\operatorname{Im}z$};
    \fill (0,0) circle (1.8pt) node[below right] {$0$};
    \fill (.91,0) circle (2pt) node[above] {$z_\star$};
    \fill (1.42,0) circle (2pt) node[below right] {$1$};
    \node[fill=white,inner sep=1.5pt] at (-2,.78) {$|z|=R$};
  \end{scope}

  \draw[->,thick] (2.35,.72)--(2.35,-.14)
       node[midway,right] {$s=\tau(z)$};
  \node[anchor=east] at (2.16,.28) {$z_\star\mapsto1$};

  \begin{scope}[shift={(0,-1.75)}]
    \fill[gray!8] (-.48,-1.18) rectangle (2.80,1.18);
    \draw[thick] (-.48,-1.22)--(-.48,1.22);
    \draw[->] (-3.05,0)--(3.10,0)
       node[right] {$\operatorname{Re}s$};
    \draw[->] (0,-1.40)--(0,1.48)
       node[above] {$\operatorname{Im}s$};
    \draw[very thick] (-2.85,0)--(-1.38,0);
    \fill (-1.38,0) circle (2pt);
    \fill (0,0) circle (1.8pt) node[below right] {$0$};
    \fill (1.48,0) circle (2pt) node[below] {$1$};
    \node[font=\small] at (1.40,.65)
       {$\operatorname{Re}s>-1/(8K)$};
\node[anchor=east,font=\scriptsize] at (-1.60,.48)
   {possible singularities};
    \node[anchor=north,font=\scriptsize] at (-1.38,-.12)
       {$-1/(4K)$};
  \end{scope}
\end{tikzpicture}
\caption{The map $s=\tau(z)$ sends the unit disk to the half-plane
$\operatorname{Re}s>-1/(8K)$, whereas $I+sB^2$ can be singular only
for real $s\le-1/(4K)$. The Taylor series is evaluated at $z_\star$
and estimated on the larger circle $|z|=R$. The figure is schematic.}
\label{fig:interpolation}
\end{figure}

To control the Hessian of the approximation error, we will also
need bounds on certain second derivatives. For
$\sigma\in\{-1,1\}^E$, define
\[
 D_\sigma=\sum_{f\in E}\sigma_f\partial_f.
\]
We will use bounds on $D_\sigma\partial_eF_y(z)$ to bound the sum
of the absolute values of the entries in each row of
$\nabla_y^2\mathcal E_m(e^y)$.

\begin{lemma}\label{lem:analytic-derivatives}
For every induced subgraph $H$ of $G$,
$Z_H(\tau(z)w)\ne0$ for $|z|<1$. In particular, there is a
holomorphic logarithm
\[
 F_y(z)=\log Z_G(\tau(z)e^y)
 \qquad(|z|<1)
\]
with $F_y(0)=0$. For every edge $e$, every
$\sigma\in\{-1,1\}^E$, and $|z|\le R$,
\begin{align}
 |\partial_eF_y(z)|&\le9e^{y_e},
 \label{eq:analytic-gradient-bound}\\
 |D_\sigma\partial_eF_y(z)|&\le400K e^{y_e}.
 \label{eq:analytic-signed-hessian-bound}
\end{align}
\end{lemma}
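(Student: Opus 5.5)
The plan is to combine the resolvent representation of \cref{lem:path-tree-resolvent} with the spectral inclusion $\operatorname{spec}(B^2)\subseteq[0,4K]$, and then to handle derivatives in $y$ by differentiating the resolvent. We proceed in three steps: nonvanishing, gradient bound, signed Hessian bound.

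\textbf{Nonvanishing and the logarithm.} For $|z|<1$ we have $\operatorname{Re}\tau(z)>-\rho/2=-1/(8K)$. This holds because $z\mapsto z/(1-z)$ maps the unit disk onto $\operatorname{Re}>-1/2$. For an eigenvalue $\lambda\in[0,4K]$ of $B^2$ we then get $\operatorname{Re}(1+\tau\lambda)\ge 1-\lambda/(8K)\ge 1/2$, so $I+\tau(z)B^2$ is invertible and $\|(I+\tau B^2)^{-1}\|_2\le 2$. This uses that $B^2$ is symmetric PSD, so $(I+\tau B^2)^{-1}$ is normal and its norm is the maximum over eigenvalues of $|1+\tau\lambda|^{-1}$.

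To get $Z_H(\tau(z)w)\ne0$, we induct on $|V(H)|$ and telescope
\[
Z_H=\prod_j Z_{H_{j-1}}/Z_{H_j}, \qquad H_j=H-v_1-\cdots-v_j .
\]
Each factor is $1/R_{H_{j-1},v_j}$. By \eqref{eq:path-tree-resolvent} this equals $1/[(I+\tau B^2)^{-1}]_{rr}$ wherever defined, and it is a rational function of $s$. The point to be careful about is that $[(I+sB^2)^{-1}]_{rr}$ could vanish. Note that $1/[(I+sB^2)^{-1}]_{rr}$ equals $1+s\sum_u w_{vu}R_{H-v,u}$ by the Schur complement recurrence. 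Inductively this is finite on the region, so $Z_{H-v}/Z_H$ is finite, and $Z_H/Z_{H-v}=1+s\sum w R$ is finite. Hence $Z_H=Z_{H-v}\cdot(\text{finite})$. Nonvanishing of $Z_H$ comes from $R_{H,v}=[(I+sB^2)^{-1}]_{rr}$ being finite (the resolvent exists). I would argue this via analytic continuation along the connected region $\operatorname{Re}s>-1/(8K)$ starting from $s=0$. The holomorphic logarithm then exists on the simply connected disk.

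\textbf{Gradient bound.} Use
\[
\partial_eF_y=\tau w_eR_{G,u}R_{G-u,v}.
\]
Each $R$ is a diagonal entry of a resolvent with norm at most $2$, and $|\tau|\le 9/4$ on $|z|\le R$. This gives $|\partial_eF_y|\le \tfrac94\cdot 4\,e^{y_e}=9e^{y_e}$.

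\textbf{Signed Hessian bound (main obstacle).} We need
\[
D_\sigma\partial_eF_y=\tau w_e\bigl(R_{G,u}R_{G-u,v}\bigr)\bigl(\sigma_e+D_\sigma\log R_{G,u}+D_\sigma\log R_{G-u,v}\bigr),
\]
so it suffices to bound $|D_\sigma R|\lesssim K$ for each ratio, while also keeping $|R|$ bounded below. The lower bound on $|R|$ is the delicate part.

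The plan is to differentiate the resolvent directly. Since $\partial_f\sqrt{w_f}=\tfrac12\sqrt{w_f}$, we have
\[
D_\sigma B=\tfrac12 B_\sigma,
\]
where $B_\sigma$ is the same tree matrix with entries multiplied by the signs $\sigma_f$. Then $\|B_\sigma\|\le2\sqrt K$ by the same parent-decomposition argument. Writing $Q=(I+\tau B^2)^{-1}$, we get
\[
D_\sigma Q=-\tau Q\,D_\sigma(B^2)\,Q, \qquad D_\sigma(B^2)=\tfrac12(B_\sigma B+BB_\sigma),
\]
so $\|D_\sigma Q\|\le \tfrac94\cdot 4\cdot 4K=36K$.

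To avoid dividing by $R$, I would instead bound $D_\sigma$ of the product $R_{G,u}R_{G-u,v}$ directly, by the product rule with $|R|\le2$ and $|D_\sigma R|\le 36K$. Together with the $\sigma_e$ term from $\partial w_e$, this gives
\[
|D_\sigma\partial_eF_y|\le \tfrac94 e^{y_e}\bigl(4+2\cdot2\cdot36K\bigr)\le 400Ke^{y_e},
\]
using $K\ge1$. The remaining care is the constant bookkeeping and the subtlety in the first step about the resolvent diagonal entry representing $R$ on the whole disk.
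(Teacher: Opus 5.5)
Your proposal is correct and follows the paper's proof essentially step for step. The shared steps are the half-plane bound $\operatorname{Re}\tau(z)>-1/(8K)$ giving $\norm{(I+\tau(z)B^2)^{-1}}_2\le 2$, the identity $\partial_eF_y=\tau w_eR_{G,u}R_{G-u,v}$ with $|\tau|\le 9/4$, and the bound $|D_\sigma R|\le 36K$ from differentiating the resolvent, combined by the product rule. Your $B_\sigma$ is twice the paper's, which is harmless.

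In the nonvanishing step only your final sentence is needed, and it is exactly the paper's argument. The relation $Z_H(sw)\,[(I+sB^2)^{-1}]_{rr}=Z_{H-v}(sw)$ holds as rational functions, hence at $s=\tau(z)$ where the inverse exists, so $Z_{H-v}(\tau(z)w)\neq0$ forces $Z_H(\tau(z)w)\neq0$. You should drop the telescoping detour. You should also drop the inference from finiteness of $Z_H/Z_{H-v}$ to finiteness of $Z_{H-v}/Z_H$, which runs in the wrong direction.
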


\begin{proof}
For $|z|<1$,
\[
 \operatorname{Re}\frac{z}{1-z}
 =\frac12\left(\frac{1-|z|^2}{|1-z|^2}-1\right)>-\frac12,
\]
so $\operatorname{Re}\tau(z)>-1/(8K)$. Let $B$ be the matrix
associated with any path tree $\mathcal T(H,v)$.
By \cref{lem:path-tree-resolvent}, $\norm B_2\le2\sqrt K$,
so every eigenvalue $a$ of $B^2$ lies in $[0,4K]$. Hence
\[
 \operatorname{Re}(1+\tau(z)a)
 \ge1-\frac{a}{8K}\ge\frac12.
\]
Since $B^2$ is real symmetric,
\begin{equation*}
 \norm{(I+\tau(z)B^2)^{-1}}_2
 =\max_{a\in\operatorname{spec}(B^2)}
       \frac1{|1+\tau(z)a|}
 \le2
 \qquad(|z|<1).
\end{equation*}

We next prove that $Z_H(\tau(z)w)\ne0$ for every induced subgraph
$H$, by induction on $|V(H)|$. The empty graph has partition
function $1$. For nonempty $H$, choose $v\in V(H)$ and let $B$
be the matrix of $\mathcal T(H,v)$, with root $r$.
By \cref{eq:path-tree-resolvent},
\[
 Z_H(sw)\bigl[(I+sB^2)^{-1}\bigr]_{rr}=Z_{H-v}(sw)
\]
as an identity of rational functions in $s$. By the inverse bound above,
the inverse exists at $s=\tau(z)$, so the identity holds there.
Its right-hand side is nonzero by induction, and therefore
$Z_H(\tau(z)w)\ne0$.

The unit disk is simply connected, so $Z_G(\tau(z)w)$ admits a
holomorphic logarithm there. Since $Z_G(\tau(0)w)=1$, we may choose
it to satisfy $F_y(0)=0$. By continuity and compactness,
$Z_G(\tau(z)e^{y'})$ remains nonzero on a common disk containing
$|z|\le R$ for all $y'$ sufficiently close to $y$. On this disk,
choose each logarithm to satisfy $F_{y'}(0)=0$. These logarithms
depend smoothly on $y'$, so we may differentiate with respect
to $y$, using the usual rule
\[
 \partial_eF_y(z)
 =\frac{\partial_e Z_G(\tau(z)e^y)}{Z_G(\tau(z)e^y)}.
\]

For any induced subgraph $H$ and vertex $v\in V(H)$, the path-tree
representation \cref{eq:path-tree-resolvent} and the inverse norm bound
above give
\begin{equation*}
 |R_{H,v}(\tau(z)w)|
 =\left|\bigl[(I+\tau(z)B^2)^{-1}\bigr]_{rr}\right|
 \le2
 \qquad(|z|<1).
\end{equation*}
For $e=uv$, differentiating the partition function gives
\[
 \partial_eF_y(z)
 =\tau(z)w_e\frac{Z_{G-u-v}(\tau(z)w)}{Z_G(\tau(z)w)}
 =\tau(z)w_eR_{G,u}(\tau(z)w)R_{G-u,v}(\tau(z)w).
\]
Since $|\tau(z)|\le9/4$ on $|z|\le R$ by
\cref{eq:interpolation-contour-bounds}, it follows that
\[
 |\partial_eF_y(z)|
 \le\frac94w_e\cdot2\cdot2=9w_e
 \qquad(|z|\le R).
\]
This proves \cref{eq:analytic-gradient-bound}.

To bound $D_\sigma\partial_eF_y(z)$, we first differentiate the
matrix expression for $R_{H,v}$. Write $B_\sigma=D_\sigma B$,
where the derivative is taken entrywise. If the edge joining
$p$ and $p'$ corresponds to $f\in E(H)$, then
\[
 (B_\sigma)_{pp'}
 =D_\sigma(e^{y_f/2})
 =\frac{\sigma_f}{2}B_{pp'}.
\]
As in the path-tree proof, write $B=P+P^{\mathsf T}$, where
$P_{pp'}=B_{pp'}$ when $p$ is the parent of $p'$, and all other
entries of $P$ are zero. Let $P_\sigma=D_\sigma P$, so that
\[
 B_\sigma=P_\sigma+P_\sigma^{\mathsf T}.
\]
As before, $P_\sigma P_\sigma^{\mathsf T}$ is diagonal.
Its diagonal entries are one quarter of those of
$PP^{\mathsf T}$, because each entry of $P_\sigma$ has half
the absolute value of the corresponding entry of $P$. Hence
\[
 P_\sigma P_\sigma^{\mathsf T}
 =\frac14PP^{\mathsf T}\preceq\frac K4I.
\]
It follows that
\begin{equation*}
 \norm{B_\sigma}_2
 \le\norm{P_\sigma}_2+\norm{P_\sigma^{\mathsf T}}_2
 \le\sqrt K.
\end{equation*}

Differentiating the identity
$(I+\tau(z)B^2)(I+\tau(z)B^2)^{-1}=I$ gives
\[
 D_\sigma(I+\tau(z)B^2)^{-1}
 =-(I+\tau(z)B^2)^{-1}
   \tau(z)(B_\sigma B+BB_\sigma)
   (I+\tau(z)B^2)^{-1}.
\]
The ratio $R_{H,v}(\tau(z)w)$ is the $rr$ entry of this inverse.
Therefore, using the bounds on the inverse, $\tau(z)$, $B$,
and $B_\sigma$, we obtain
\begin{align*}
 |D_\sigma R_{H,v}(\tau(z)w)|
 &\le
 2|\tau(z)|\norm{(I+\tau(z)B^2)^{-1}}_2^2
   \norm B_2\norm{B_\sigma}_2\\
 &\le2\cdot\frac94\cdot2^2\cdot2\sqrt K\cdot\sqrt K
 =36K
 \qquad(|z|\le R).
\end{align*}

Finally, for $e=uv$, let
\[
 R_1=R_{G,u}(\tau(z)w),
 \qquad
 R_2=R_{G-u,v}(\tau(z)w).
\]
Since $\partial_eF_y(z)=\tau(z)w_eR_1R_2$ and
$D_\sigma w_e=\sigma_ew_e$, the product rule gives
\[
 D_\sigma\partial_eF_y(z)
 =\tau(z)w_e\bigl(
   \sigma_eR_1R_2+(D_\sigma R_1)R_2+R_1(D_\sigma R_2)
   \bigr).
\]
On $|z|\le R$, each ratio has absolute value at most $2$
by the bounds established above, and each of its signed derivatives
has absolute value at most $36K$ by
the derivative bound above. Thus,
\[
 |D_\sigma\partial_eF_y(z)|
 \le\frac94w_e\bigl(4+36K\cdot2+2\cdot36K\bigr)
 =(9+324K)w_e
 \le400K w_e,
\]
where the last inequality uses $K\ge1$. This proves
\cref{eq:analytic-signed-hessian-bound}.
\end{proof}

With these bounds in hand, we return to the approximation error.

\begin{proof}[Proof of \cref{lem:taylor-error-bounds}]
Write the Taylor expansion of $F_y$ as
\[
 F_y(z)=\sum_{\ell\ge0}f_\ell(y)z^\ell,
 \qquad f_0(y)=0,\quad f_\ell(y)=\lambda_\ell(e^y)\ (\ell\ge1).
\]
By Cauchy's integral formula,
\[
 f_\ell(y)
 =\frac{1}{2\pi i}\int_{|z|=R}
       \frac{F_y(z)}{z^{\ell+1}}\,dz.
\]
In the preceding proof, we showed that the logarithms $F_{y'}$
can be chosen on a common disk containing this contour for all
$y'$ sufficiently close to $y$, and that they depend smoothly
on $y'$. We may therefore differentiate under the integral:
\[
 \partial_e f_\ell(y)
 =\frac{1}{2\pi i}\int_{|z|=R}
       \frac{\partial_eF_y(z)}{z^{\ell+1}}\,dz.
\]
Thus $\partial_e f_\ell(y)$ is the coefficient of $z^\ell$
in the Taylor expansion of $\partial_eF_y(z)$. The same argument
applies to second derivatives.

The approximation uses the terms of $F_y$ through degree $m$,
evaluated at $z_\star$. Therefore,
\[
 |\partial_e\mathcal E_m(e^y)|
 =\left|\partial_eF_y(z_\star)
       -\sum_{\ell=0}^m\partial_e f_\ell(y)z_\star^\ell\right|
 \le\sum_{\ell>m}|\partial_e f_\ell(y)|z_\star^\ell.
\]
By \cref{eq:analytic-gradient-bound}, we have
$|\partial_eF_y(z)|\le9w_e$ on $|z|=R$. Applying this bound
in the contour integral gives
\[
 |\partial_e f_\ell(y)|\le9w_eR^{-\ell}.
\]
Recall from \cref{eq:interpolation-contour-bounds} that $z_\star/R=q<1$.
Hence,
\[
 |\partial_e\mathcal E_m(e^y)|
 \le9w_e\sum_{\ell>m}q^\ell
 =\frac{9q^{m+1}}{1-q}w_e
 \le\beta_mw_e.
\]
This proves \cref{eq:taylor-gradient-error}.

For the Hessian estimate, we first apply the same argument to
$D_\sigma\partial_eF_y(z)$. By
\cref{eq:analytic-signed-hessian-bound}, its absolute value on
$|z|=R$ is at most $400K w_e$. Hence
\[
 |D_\sigma\partial_e f_\ell(y)|
 \le400K w_eR^{-\ell}.
\]
It follows that, for every $\sigma\in\{-1,1\}^E$,
\[
 |D_\sigma\partial_e\mathcal E_m(e^y)|
 \le\sum_{\ell>m}|D_\sigma\partial_e f_\ell(y)|z_\star^\ell
 \le\frac{400Kq^{m+1}}{1-q}w_e
 =\beta_mw_e.
\]
We now choose the signs to turn this into a bound on a row of
the Hessian. Write
\[
 H=(H_{ef})=\nabla_y^2\mathcal E_m(e^y),
\]
which is a real symmetric matrix. Fix a row $e$ and choose
$\sigma_f\in\{-1,1\}$ so that $\sigma_fH_{ef}=|H_{ef}|$;
either sign may be chosen when $H_{ef}=0$. By the definition
of $D_\sigma$,
\[
 D_\sigma\partial_e\mathcal E_m(e^y)
 =\sum_f\sigma_fH_{ef}
 =\sum_f|H_{ef}|.
\]
The bound just proved therefore gives
\begin{equation*}
 \sum_f|H_{ef}|\le\beta_mw_e.
\end{equation*}
To deduce the required quadratic-form bound, let $a\in\R^E$.
Using $2|a_ea_f|\le a_e^2+a_f^2$, we obtain
\begin{align*}
 |a^{\mathsf T}Ha|
 &\le\sum_{e,f}|H_{ef}|\,|a_ea_f| \le\frac12\sum_{e,f}|H_{ef}|(a_e^2+a_f^2) =\sum_ea_e^2\sum_f|H_{ef}|.
\end{align*}
For the last equality, symmetry of $H$ shows that the terms
containing $a_f^2$ have the same sum as those containing $a_e^2$.
Applying this row-sum bound now gives
\[
 |a^{\mathsf T}Ha|
 \le\beta_m\sum_e w_ea_e^2.
\]
This proves the Hessian estimate.

Finally, we bound $|\mathcal E_m(w)|$ by integrating the derivative
along the segment from $0$ to $w$. For $0<s\le1$, let
\[
 y(s)=y+(\log s)\one,
 \qquad\text{so that}\qquad e^{y(s)}=sw.
\]
Scaling all weights by $s\le1$ preserves the defining inequalities
of $\cD_K$, so $y(s)\in\cD_K$. Since $y'_e(s)=1/s$ for every
edge $e$, the chain rule gives
\[
 \frac{d}{ds}\mathcal E_m(sw)
 =\frac1s\sum_e\partial_e\mathcal E_m(e^{y(s)}).
\]
The gradient bound \cref{eq:taylor-gradient-error}, applied at $y(s)$,
says that
\[
 |\partial_e\mathcal E_m(e^{y(s)})|
 \le \beta_m e^{y_e(s)}=\beta_msw_e.
\]
Hence,
\[
 \left|\frac{d}{ds}\mathcal E_m(sw)\right|
 \le\frac1s\sum_e \beta_msw_e
 =\beta_mW(w).
\]
Integrating from $\delta$ to $1$, for $0<\delta<1$, gives
\[
 |\mathcal E_m(w)-\mathcal E_m(\delta w)|
 \le(1-\delta)\beta_mW(w).
\]
Since $Z_G(0)=1$ and $T_m(0)=0$, continuity gives
$\mathcal E_m(\delta w)\to0$ as $\delta\downarrow0$.
Letting $\delta\downarrow0$, we conclude that
\[
 |\mathcal E_m(w)|\le \beta_mW(w),
\]
which proves \cref{eq:taylor-value-error}.
\end{proof}
\section{The algorithm}\label{sec:algorithm}

Our algorithm is described in \cref{alg:permanent}.
The subroutine \textsc{ScaleAndRound} scales and rounds $A$ to a matrix
$C$ whose entries have bit length polynomial in $n$ and $\log(1/\varepsilon)$.
The subroutine \textsc{MinimizeSurrogate} uses the polynomial from
\cref{sec:convex-approximation} to approximate $F_C(t)$ by convex minimization.

For $0<\varepsilon\le1$, choose $\alpha=2^{-s}$ for an integer $s$ with
$\varepsilon/8<\alpha\le\varepsilon/4$. The parameters used below are
\begin{equation}\label{eq:algorithm-parameters}
 \eta=\frac{\alpha}{16},\qquad \delta=\frac{\eta}{s+1},\qquad
 t=1-\delta,\qquad K=4\delta^{-2}.
\end{equation}

\begin{algorithm}[htbp]
\caption{Deterministic approximation of the nonnegative permanent}
\label{alg:permanent}
\begin{algorithmic}[1]
\Require $A\in\R_{\ge0}^{n\times n}$ and $0<\varepsilon\le1$.
\Ensure $Q$ satisfying $\per A\le Q\le(1+\varepsilon)^n\per A$.
\State If $n=0$, \Return $1$; if $n=1$, \Return $A_{11}$.

\State If the support graph has no perfect matching, \Return $0$.

\State Set $A_{uv}=0$ for every support edge $uv$ belonging to no perfect matching.

\State Choose $\alpha=2^{-s}$ with
$\varepsilon/8<\alpha\le\varepsilon/4$, and set
$t,\eta,K$ by~\cref{eq:algorithm-parameters}.

\State $(S,C,p)\gets\Call{ScaleAndRound}{A,\alpha}$
\Comment{\cref{lem:scale-and-round}}

\State Construct $U$ for the support of $C$ using
\cref{thm:surrogate} with parameters $K,\eta$.

\State $\widehat F\gets\Call{MinimizeSurrogate}{C,U,t,\eta,p,s}$
\Comment{error $2\eta n$, rounded up}

\State Compute a rational $Q_C$ with
$e^{\wh{F}/t}\le Q_C\le e^{\wh{F}/t}+\eta n^{-n}$.
\label{line:exponential}
\State \Return $Q_C/S$.

\end{algorithmic}
\end{algorithm}

\subsection{\textsc{ScaleAndRound}}\label{sec:scale-and-round}
The following lemma combines the preprocessing result of Linial,
Samorodnitsky, and Wigderson~\cite{LSW} with a simple rounding step.

\begin{lemma}\label{lem:scale-and-round}
Let $n\ge2$, let $A\in\R_{\ge0}^{n\times n}$ have positive permanent,
and let $\alpha=2^{-s}$ for an integer $s\ge1$.
There is an algorithm \emph{\textsc{ScaleAndRound}} that, given
$A$ and $\alpha$, returns $S>0$, an integer $p\ge0$, and a matrix
$C\in[0,1]^{n\times n}$ with the same support as $A$, satisfying
\begin{equation*}
 S\per A\le\per C\le(1+\alpha)S\per A,
 \qquad \per C\ge1.
\end{equation*}
The entries of $C$ are multiples of $2^{-p}$ for
$p=O(n\log n+s)$, and the running time is
$\operatorname{poly}(n,s)$.
\end{lemma}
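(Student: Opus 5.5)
The plan is to first use the LSW preprocessing to make $A$ nearly doubly stochastic up to controlled scaling, and then round the entries downward to a dyadic grid. \textbf{Step 1 (LSW preprocessing).} By~\cite{LSW}, in strongly polynomial time (a bounded number of rounds, $\operatorname{poly}(n)$ operations) one finds positive diagonal matrices $D_1,D_2$ so that $A'=D_1AD_2$ satisfies $n^{-O(n)}\le\per A'\le 1$. Concretely, we first use a maximum weight perfect matching (Hungarian algorithm on $\log A_{ij}$, implemented by comparisons) and a dual solution to scale so that the matched entries equal $1$ and all entries are at most $1$. Then we run a polynomial number of Sinkhorn iterations, or LSW's strongly polynomial variant, to reach near double stochasticity, so that van der Waerden's inequality gives $\per A'\ge n!/n^n\cdot(1-o(1))$.

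To keep everything exactly representable, we round each diagonal scaling factor to a power of two. This costs at most a factor $2$ per row and column, hence $4^{n}$ in the permanent, which is harmless. It also makes $S_0=\det D_1\det D_2$ a power of two computable exactly. Afterwards the matrix has entries in $[0,1]$, contains a perfect matching whose entries are all at least $1/4$ (or exactly $1$ after the matching normalization), and satisfies $\per A'\ge 4^{-n}$, say.

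\textbf{Step 2 (normalization so that $\per C\ge1$).} We multiply by $2^{-k}$ with $k=O(n\log n)$ chosen so that $\per(2^{k'} A')\ge 2$ while all entries remain at most $1$. This fails as stated, since multiplying entries by a scalar larger than $1$ breaks the constraint $C\le1$. Instead, we first ensure $\per A'\ge 1$ directly by choosing the Step 1 scaling so that a perfect matching has all entries exactly $1$; then $\per A'\ge 1$ immediately, because all terms are nonnegative. Entries off the matching may exceed $1$ after this scaling, so we invoke the LSW/Hungarian dual: with the dual potentials $u_i+v_j\ge\log A_{ij}$, and equality on the max-weight matching, scaling by $e^{-u_i},e^{-v_j}$ gives all entries $\le1$ and matching entries $=1$. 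The potentials are rounded down to multiples of $\log 2$; then entries are at most $1$ and matched entries lie in $[1/4,1]$, which weakens $\per\ge1$. To repair this, we compute the scaling in exact arithmetic with comparisons only, which is what LSW provide, and accept that $\per\ge 4^{-n}$. Then we set $C$ by multiplying by a further power-of-two factor per row, i.e. by $2^{2}$ on each row, before rounding. This is the step where I am least sure of the cleanest choice, and I would follow LSW's exact statement.

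\textbf{Step 3 (rounding).} We round each positive entry of the scaled matrix $A''$ down to the grid $2^{-p}\mathbb Z$, with $p=2n\log_2 n+s+O(n)$, but never below $2^{-p}$, so that the support is preserved; more precisely, we use upward rounding by at most $2^{-p}$, or a relative rounding. The relative error is what matters. Since every entry of every surviving support edge is at least $n^{-O(n)}$ after scaling (LSW bound the dynamic range on edges lying in perfect matchings), rounding to $p$ bits with $2^{-p}\le \alpha n^{-O(n)}/(4n)$ changes each entry by a factor in $[1,1+\alpha/(2n)]$. Hence the permanent changes by a factor in $[1,(1+\alpha/2n)^n]\subseteq[1,1+\alpha]$, which gives $S\per A\le\per C\le(1+\alpha)S\per A$ with $S$ the product of all scalings. \textbf{The main obstacle} is this entry lower bound: LSW's preprocessing must guarantee that every edge belonging to a perfect matching has scaled weight at least $n^{-O(n)}$. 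This follows by comparing with $\per$: such an edge $e$ lies in a perfect matching $P$, and $A''^P\le\per A''\le n!$ while the other entries of $P$ are at most $1$, so this only gives an upper bound. For the lower bound I would use doubly-stochastic near-optimality, namely that the entries of a doubly stochastic matrix on edges in perfect matchings are bounded below via the Birkhoff decomposition and the scaling potentials being bounded by $O(n\log n)$. I would verify this carefully, as it dictates $p=O(n\log n+s)$.
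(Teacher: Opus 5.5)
There is a genuine gap in Step~3. Step~2, as you finally state it, also does not produce a matrix with both $C\le1$ entrywise and $\per C\ge1$.

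The entry lower bound that you flag as the main obstacle is false, and no choice of scaling can rescue it. For two perfect matchings $P,P'$, the ratio $\prod_{e\in P}A_e/\prod_{e\in P'}A_e$ is invariant under row and column scaling. Take $A=\begin{pmatrix}1&\epsilon\\ \epsilon&1\end{pmatrix}$. Every scaling with unit diagonal has $B_{12}B_{21}=\epsilon^2$. The doubly stochastic scaling has off-diagonal entries $\epsilon/(1+\epsilon)$, even though the anti-diagonal is a perfect matching. So a per-entry relative error of $1+\alpha/(2n)$ with $p=O(n\log n+s)$ is impossible in general, and the Birkhoff-decomposition argument you sketch cannot work.

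The missing idea is that relative control of individual entries is not needed. You mention upward rounding in passing; that is the right operation, but the analysis must be additive:
\begin{itemize}
\item Round each entry of $B$ upward to the grid $2^{-p}\mathbb{Z}$. This preserves the support and keeps entries in $[0,1]$, because $0$ and $1$ are grid points.
\item All entries lie in $[0,1]$. Replacing the $n$ factors of a permutation term one at a time therefore shows that each term increases by at most $n2^{-p}$.
\item Summing over permutations gives $0\le\per C-\per B\le n\cdot n!\cdot 2^{-p}\le\alpha$ once $p\ge s+\log_2(n\cdot n!)$, which is $O(n\log n+s)$.
\item Then $\per B\ge1$ converts this additive error into $\per C\le(1+\alpha)\per B$.
\end{itemize}
A tiny entry may be perturbed by an enormous relative factor, but its effect on the permanent is absorbed by $\per B\ge1$.

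This argument needs $B\in[0,1]^{n\times n}$ and $\per B\ge1$ simultaneously, and the lemma demands the same of $C$. Your Step~2 loses this. Rounding the scalings to powers of two weakens $\per\ge1$ to $\per\ge4^{-n}$. Multiplying each row by $4$ to repair this then breaks $C\le1$.

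The power-of-two rounding, the Sinkhorn iterations, and van der Waerden's inequality are all unnecessary in the arithmetic model. Your intermediate idea is exactly right if you stop there. The LSW preprocessing works multiplicatively in exact arithmetic, so no exponentials or rounding are involved. It returns a positive diagonal $Y$ and a permutation $\sigma$ with $(AY)_{i\sigma(i)}\ge(AY)_{ij}$ for all $i,j$. Divide row $i$ of $AY$ by $(AY)_{i\sigma(i)}$, which is positive since $\per A>0$. This gives $B=D_1AD_2\in[0,1]^{n\times n}$ with $B_{i\sigma(i)}=1$, hence $\per B\ge1$. Finally, $S=\prod_i(D_1)_{ii}\prod_j(D_2)_{jj}$ is computed exactly and satisfies $\per B=S\per A$.
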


\begin{proof}
The LSW preprocessing procedure
\cite[Section~3.1 and Theorem~3.1]{LSW}
computes a positive diagonal matrix $Y$ and a permutation $\sigma$
such that $(AY)_{i\sigma(i)}\ge(AY)_{ij}$ for every $i,j$.
Since $\per A>0$, each selected entry is positive. Set
\[
 D_2=Y,\qquad (D_1)_{ii}=(AY)_{i\sigma(i)}^{-1},\qquad B=D_1AD_2.
\]
Then $B\in[0,1]^{n\times n}$ and $B_{i\sigma(i)}=1$ for every $i$,
so $\per B\ge1$.
Every term in the permanent contains one entry from each row
and each column, so
\[
 \per B=S\per A,
 \qquad
 S:=\left(\prod_{i=1}^n(D_1)_{ii}\right)
    \left(\prod_{j=1}^n(D_2)_{jj}\right).
\]

For an integer $p$ to be chosen later, round each entry of $B$ upward
to the nearest multiple of $2^{-p}$, obtaining $C$.
This preserves the support and keeps all entries in $[0,1]$.
Each summand in the definition of the permanent increases by at most $n2^{-p}$; to see this, replace each of the $n$ factors one at a time, using that all other factors
are at most one. Summing over the $n!$ permutations gives
\[
 0\le\per C-\per B\le (n 2^{-p}) n!.
\]
Choose the smallest nonnegative integer $p$ such that
$(n 2^{-p})n!\le\alpha$.
Then $p=O(n\log n+s)$, and since $\per B\ge1$,
\[
 \per B\le\per C\le\per B+\alpha
 \le(1+\alpha)\per B.
\]
Together with $\per B=S\per A$, this proves the lemma.

The preprocessing of LSW takes polynomial time.
It takes $O(n^2p)$ additional operations to (1) find $p$ by successive halving, and (2) round each entry of $B$ by binary
search.
For rational input, the intermediate bit lengths are polynomial
in the input length and $s$, by the LSW bound
\cite[proof of Theorem~3.4]{LSW} and the bound on $p$.
\end{proof}

\subsection{\textsc{MinimizeSurrogate}}
\label{sec:surrogate-minimization}

This subroutine approximates
$F_C(t)=\inf_\theta\Phi_{C,t}(\theta)$ for the matrix $C$
produced by \cref{alg:permanent}. Let $G=(V,E)$ be the support graph
of $C$, with $|V|=2n$, and let $p$ be the precision returned by
\textsc{ScaleAndRound}. We first bound the norm of a minimizer.

\begin{lemma}\label{lem:field-bound}
The infimum of $\Phi_{C,t}$ is attained at a vector $\theta^*$
satisfying
\begin{equation}\label{eq:field-radius}
 \norm{\theta^*}_\infty\le R_0:=\frac{2n^2(n+p)}{\delta}.
\end{equation}
\end{lemma}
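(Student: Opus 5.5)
The plan is to take any minimizer $\theta_t$ of $\Phi_{C,t}$, which exists by \cref{lem:balanced-gibbs}, bound its edge sums $b_{uv}=(\theta_t)_u+(\theta_t)_v$, and then fix the gauge freedom to bound the individual coordinates. Recall from the proof of \cref{lem:balanced-gibbs} that $\Phi_{C,t}$ depends on $\theta$ only through $(\theta_u+\theta_v)_{uv\in E}$. Consequently any $\theta^*$ with the same edge sums as $\theta_t$ is again a minimizer.

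\emph{Step 1 (edge sums).} I will apply \cref{eq:edge-coercivity} to $C$. The support of $C$ equals that of the preprocessed $A$, so every edge lies in a perfect matching. Every nonzero entry of $C$ lies in $[2^{-p},1]$, so $c_C=\max_M|\log C^M|\le np\log 2$. Since $t=1-\delta$ with $\delta\le1/2$, we have $\min\{t,1-t\}=\delta$. Minimality gives $\Phi_{C,t}(\theta_t)\le\Phi_{C,t}(0)=\log Z_G(C)$. The product bound from the end of the proof of \cref{lem:partition-bound}, with all entries at most $1$, gives $\log Z_G(C)\le n\log(1+n)$. Hence
\[
 \max_{uv\in E}|b_{uv}|\le\frac{n\log(n+1)+np\log2}{\delta}\le\frac{n(n+p)}{\delta}.
\]
Alternatively, the upper side $b_{uv}\le p\log2+2\log(1/\delta)$ follows from \cref{eq:degree-bound} and $C_{uv}\ge2^{-p}$. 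The coercivity bound already handles both signs, so it is not needed.

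\emph{Step 2 (gauge fixing).} Each connected component $\Gamma$ of $G$ is bipartite, with classes $L_\Gamma,R_\Gamma$. Adding a constant $c$ to $\theta$ on $L_\Gamma$ and $-c$ on $R_\Gamma$ preserves every edge sum. In each component, choose a base vertex $v_\Gamma$ and apply this shift so that $\theta^*_{v_\Gamma}=0$; the shifts in different components are independent. For any other vertex $x\in\Gamma$, take a path $v_\Gamma=x_0,x_1,\ldots,x_k=x$ with $k\le 2n-1$. Along it, $\theta^*_{x_{i}}=b_{x_{i-1}x_i}-\theta^*_{x_{i-1}}$, so $|\theta^*_x|\le k\max_e|b_e|$. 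Combined with Step 1, this gives $\norm{\theta^*}_\infty\le 2n\cdot n(n+p)/\delta=R_0$.

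The only delicate point is Step 1, namely making sure the constants in the coercivity estimate, $c_C$ and $\Phi_{C,t}(0)$, are controlled purely by $n$ and $p$. This works because the rounding in \cref{lem:scale-and-round} guarantees that nonzero entries are at least $2^{-p}$ and at most $1$. Step 2 is routine.
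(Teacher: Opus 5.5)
Your proposal is correct and is essentially the paper's proof. It uses existence from \cref{lem:balanced-gibbs} and the comparison $\Phi_{C,t}(\theta^*)\le\Phi_{C,t}(0)=\log Z_G(C)\le n\log(n+1)$, together with $|\log C^M|\le np\log 2$ in \cref{eq:edge-coercivity}, to bound the edge sums by $n(n+p)/\delta$, and then applies a per-component gauge shift and sums the edge-sum bounds along a simple path of length less than $2n$; the only cosmetic difference is that the paper bounds $Z_G(C)$ by counting at most $(n+1)^n$ matchings rather than invoking the product bound from \cref{lem:partition-bound}.
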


\begin{proof}
Every edge in the support of $C$ belongs to a perfect matching,
so \cref{lem:balanced-gibbs} guarantees that the infimum is attained
at some $\theta^*$.
Comparing with $\theta=0$ gives
\[
 \Phi_{C,t}(\theta^*)\le\Phi_{C,t}(0)
 =\log Z_G(C)\le n\log(n+1),
\]
where the last inequality uses $C_e\le1$ and the fact that
$G$ has at most $(n+1)^n$ matchings.
Each matching has at most $n$ edges, so
$2^{-np}\le C^M\le1$ for every matching $M$, and hence,
$|\log C^M|\le np\log2$.
Since $t=1-\delta\ge1/2$, \cref{eq:edge-coercivity} gives
\[
 \Phi_{C,t}(\theta)
 \ge \delta\max_{uv\in E}|\theta_u+\theta_v|-np\log2.
\]
Applying this at $\theta^*$ and using
$\Phi_{C,t}(\theta^*)\le n\log(n+1)$, we obtain
\begin{equation}\label{eq:edge-sum-bound}
 \max_{uv\in E}|\theta_u^*+\theta_v^*|
 \le\frac{n\log(n+1)+np\log2}{\delta}=:H.
\end{equation}

We now bound the coordinates $\theta^*_u$.
Since $G$ has a perfect matching, every connected component
contains equally many left and right vertices.
In each component $L'\sqcup R'$, choose $u_0\in L'$ arbitrarily.
The transformation
\[
 \theta_u^*\mapsto\theta_u^*-\theta^*_{u_0} \quad(u\in L'),
 \qquad
 \theta_v^*\mapsto\theta_v^*+\theta^*_{u_0} \quad(v\in R')
\]
leaves $\Phi_{C,t}$ unchanged and makes $\theta^*_{u_0}=0$.
Along any simple path $u_0,u_1,\ldots,u_k$,
\cref{eq:edge-sum-bound} gives
\[
 |\theta^*_{u_i}|
 \le |\theta^*_{u_{i-1}}|
    +|\theta^*_{u_{i-1}}+\theta^*_{u_i}|
 \le |\theta^*_{u_{i-1}}|+H.
\]
Thus $|\theta^*_{u_k}|\le kH<2nH$.
Every vertex can be reached by such a path.
Using $\log(n+1)\le n$ and $\log2\le1$, we conclude that
\[
 \norm{\theta^*}_\infty<2nH
 \le\frac{2n^2(n+p)}{\delta}=R_0. \qedhere
\]
\end{proof}

Write $d_v(\theta)=\sum_{e\ni v}w_C(\theta)_e$ for the total
weight of the edges incident to $v$.
Recall from~\cref{eq:degree-bound} that every minimizer
$\theta^*$ of $\Phi_{C,t}$ satisfies
\[
 d_v(\theta^*)\le\frac{t}{(1-t)^2}
 =\frac{t}{\delta^2}<\frac K2
 \qquad\text{for every }v.
\]
The minimizer from \cref{lem:field-bound} also satisfies
$\norm{\theta^*}_\infty\le R_0$.
We may therefore restrict the minimization to
\begin{equation*}
 \mathcal{D}=\{\theta\in\R^{2n}:\norm{\theta}_\infty\le R_0,\ 
                  d_v(\theta)\le K/2\text{ for every }v\}.
\end{equation*}

We now replace $\log Z_G$ by the polynomial $U$ constructed in
\cref{sec:convex-approximation}, with $m$ chosen by~\cref{eq:surrogate-degree},
and minimize
\begin{equation*}
 f(\theta)=U(w_C(\theta))-t\sum_v\theta_v.
\end{equation*}
For $uv\in E$, set
$y_C(\theta)_{uv}=\log C_{uv}+\theta_u+\theta_v$.
This map is affine, $w_C(\theta)=e^{y_C(\theta)}$, and
$y_C(\mathcal D)\subseteq\cD_{K/2}\subseteq\cD_K$.
Moreover, $\mathcal D$ is convex, being the intersection of a box
with the affine preimage of $\cD_{K/2}$.
Thus \cref{thm:surrogate}, applied with $N=2n$, shows that $f$ is
convex on $\mathcal D$ and satisfies
$\Phi_{C,t}\le f\le\Phi_{C,t}+\eta n$ there.
Taking minima over $\mathcal{D}$ gives
\begin{equation}\label{eq:surrogate-minimum}
 F_C(t)\le\min_{\theta\in\mathcal{D}}f(\theta)\le F_C(t)+\eta n.
\end{equation}

The following lemma shows that this minimum can be efficiently
approximated with an additional error of at most $\eta n$.

\begin{lemma}\label{lem:optimization}
There is an algorithm \emph{\textsc{MinimizeSurrogate}} that,
given $C,U,t,\eta,p,s$ as above, returns a rational number
$\widehat F$ satisfying
\begin{equation*}
 F_C(t)\le\widehat F\le F_C(t)+2\eta n.
\end{equation*}
The running time, including the construction of $U$, is
$n^{O(m)}\operatorname{poly}(p,m)$. The output $\widehat F$ has bit length $\ell(\widehat F)=\mathrm{poly}(n,p,m)$.
\end{lemma}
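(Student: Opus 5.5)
We will minimize the convex function $f$ over the convex body $\mathcal D$ using a deterministic polynomial-time convex minimization method, namely the ellipsoid method with a separation oracle. Any method needing only function values, first derivatives, and explicit constraint evaluations would also do. By \cref{eq:surrogate-minimum} it suffices to find a point $\hat\theta\in\mathcal D$ with $f(\hat\theta)\le\min_{\mathcal D}f+\eta n/2$. We then output a rational upper rounding $\widehat F$ of $f(\hat\theta)+\eta n/2$. This gives $F_C(t)\le\min_{\mathcal D} f\le\widehat F\le F_C(t)+2\eta n$.

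\textbf{Oracles.} The separation oracle works as follows. For $\theta$ outside the box, a coordinate hyperplane separates. If $d_v(\theta)>K/2$ for some $v$, the gradient of the convex function $d_v$ separates. Otherwise $\theta\in\mathcal D$, and we use the gradient of $f$, computed from \cref{thm:surrogate} in $n^{O(m)}$ operations. Since exponentials $e^{\theta_v}$ are not rational, every evaluation is done with rational approximations to a precision of $\mathrm{poly}(n,p,m)$ bits. The gradient error bound \cref{eq:taylor-gradient-error}, together with $|\partial_e\log Z|\le1$, bounds $\norm{\nabla f}_\infty$ on $\mathcal D$ by $O(nK)$. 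This Lipschitz bound controls how the rounding errors propagate.

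\textbf{Running the ellipsoid method.} We start from the ball of radius $\sqrt{2n}R_0$ containing the box. The ellipsoid method needs a lower bound on the volume of the set of near-minimizers, and this is the expected main obstacle, because $\mathcal D$ may be thin. The plan is to use $\theta^*$ from \cref{lem:field-bound}, which satisfies the strict inequalities $d_v(\theta^*)\le t/\delta^2<K/2$ with slack at least $1/\delta^2$. Since $d_v$ has gradient of size at most $2d_v\le K$ in the $\ell_\infty$ sense, a box around $\theta^*$ of side $r=\Omega(1/(nK))$ stays inside the region $d_v\le K/2$. By the Lipschitz bound, a smaller box of side $\Omega(\eta/(nK))$ lies in $\mathcal D$ and consists of $\eta n/4$-approximate minimizers. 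The box constraint might cut this region, since $\theta^*$ may lie on the boundary of the box. We avoid this by enlarging the box in the definition of $\mathcal D$ to radius $R_0+1$; the bound \cref{eq:surrogate-minimum} is unaffected.

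\textbf{Complexity and bit lengths.} The ratio of the initial to the final radius is $\mathrm{poly}(n,p)/\eta$. So $O(n^2\log(nR_0K/\eta))=\mathrm{poly}(n,p,m)$ iterations suffice, each costing $n^{O(m)}$ operations. Rounding the ellipsoid parameters to $\mathrm{poly}(n,p,m)$ bits, as in the standard finite-precision ellipsoid analysis, keeps all numbers, and hence $\widehat F$, of polynomial bit length. The construction of $U$ costs $N^{O(m)}=n^{O(m)}$ by \cref{thm:surrogate}. Its coefficients are built from $\rho=1/(4K)$ and binomials with $O(m\log K)$ bits, so their bit length is also polynomial.
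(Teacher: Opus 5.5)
Your proposal is correct, but it takes a different route from the paper. The paper does not use the ellipsoid method. It runs an inexact projected subgradient method on the box $X=[-R_0,R_0]^{2n}$ with fixed step $h=\eta/(64n^2)$. At each iterate it steps along $\nabla\log d_v$ if some estimated degree exceeds $3K/8$; otherwise it steps along $\nabla f$ and records an upper approximation $\widehat f(\theta_k)$. The analysis is a distance potential. Every degree step, and every $\nabla f$-step at a point with $f(\theta_k)>f(\theta^*)+\eta/2$, decreases $\norm{\theta_k-\theta^*}_2^2$ by $h\eta/2$. This uses only the convexity of $\log d_v$, the convexity of $f$ on $\mathcal D$, $\theta^*\in X$, and the slack $d_v(\theta^*)<K/4$.

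The paper's route needs no volume or interior-box argument and no enlargement of the box. Its iterates also keep a fixed common denominator $n^2 2^{O(s+P)}$, so the bit-length accounting is elementary: no square roots and no rounding of ellipsoid matrices. The cost is $T=O(n^3R_0^2/\eta^2)$ iterations, polynomial rather than logarithmic in $R_0$ and $1/\eta$. This is affordable only because \textsc{ScaleAndRound} and \cref{lem:field-bound} make $R_0$ polynomial.

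Your ellipsoid route gives $O(n^2\log(nR_0K/\eta))$ iterations, but it requires exactly the extra ingredients you supply. The first is a box of near-minimizers around $\theta^*$. Your radius estimates are valid but conservative: perturbing $\theta$ by $r$ in $\ell_\infty$ multiplies $d_v$ by at most $e^{2r}$, and $|\partial_{\theta_v}f|\le1+2\beta_m d_v\le2$ on $\mathcal D$. So a box of side $\Omega(\eta)$, independent of $n$ and $K$, already works. The second is the enlarged box in $\mathcal D$. The third is a finite-precision ellipsoid implementation, which you only cite.

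With approximate oracles you also need the kind of slack the paper builds in. This means a degree threshold strictly between $d_v(\theta^*)$ and $K/2$, and gradient-direction errors times the diameter kept below a constant. Only then can you be sure that inexact feasibility and objective cuts never remove the box of near-minimizers. Cutting along $\nabla d_v$ versus $\nabla\log d_v$ is immaterial for you, since only the direction matters.

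Finally, your output step slightly overshoots. Rounding $f(\hat\theta)+\eta n/2$ upward gives $F_C(t)+2\eta n$ plus the rounding error. This is fixed by targeting $f(\hat\theta)\le\min_{\mathcal D}f+\eta n/4$ instead.
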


We prove \cref{lem:optimization} in \cref{app:numerical-minimization}.

\subsection{Proof of the main theorem}

\begin{proof}[Proof of \cref{thm:main}]
If $n\le1$ or the support graph has no perfect matching,
the algorithm returns $\per A$.
We may therefore assume $n\ge2$ and $\per A>0$.
Deleting edges that belong to no perfect matching preserves $\per A$.

By \cref{thm:entropy,lem:optimization},
\[
 \per C
 \le e^{F_C(t)/t}
 \le e^{\widehat F/t}
 \le e^{\alpha n}\per C,
\]
where the last inequality follows from
\begin{align*}
    \widehat F/t\leq F_C(t)/t+2\eta n/t\leq \log\per C+\frac{2nh(t)+2\eta n}{t}
\end{align*}
and the fact that
$(2h(t)+2\eta)/t\le\alpha$. Indeed, $s\ge2$ and
\[
 \log(e/\delta)=1+(s+4)\log2+\log(s+1)\le3(s+1),
\]
so $h(t)=h(\delta)\le\delta\log(e/\delta)\le3\alpha/16$.
Together with $\eta=\alpha/16$ and $t\ge1/2$, this gives the claim.

Since $1\le\per C\le e^{\widehat F/t}$, rounding
$e^{\widehat F/t}$ upward by at most $\eta n^{-n}$ in \cref{alg:permanent} gives
\[
 e^{\widehat F/t}\le Q_C
 \le e^{\widehat F/t}+\eta n^{-n}
 \le(1+\eta)e^{\widehat F/t}.
\]
Combining this with \cref{lem:scale-and-round}, the returned value
$Q=Q_C/S$ satisfies
\[
 \per A
 \le Q
 \le(1+\alpha)(1+\eta)e^{\alpha n}\per A
 \le e^{2\alpha n}\per A
 \le(1+\varepsilon)^n\per A.
\]
The last two inequalities use $\eta\le\alpha$, $n\ge2$, which gives
\begin{align*}
(1+\alpha)(1+\eta)e^{\alpha n}\leq e^{\alpha}e^{\eta}e^{\alpha n}\leq e^{\alpha n+2\alpha}\leq e^{2\alpha n},
\end{align*}
and
$2\alpha\le\varepsilon/2\le\log(1+\varepsilon)$.

\smallskip
\noindent\emph{Running time.}
We first test whether the support graph has a perfect matching.
An edge $uv$ belongs to a perfect matching precisely when deleting
$u$ and $v$ leaves a graph with a perfect matching.
Thus the preprocessing uses $O(n^2)$ bipartite perfect matching tests,
followed by \textsc{ScaleAndRound}, and takes
$\operatorname{poly}(n,s)$ time.
By \cref{lem:optimization}, constructing $U$ and computing
$\widehat F$ take $n^{O(m)}\operatorname{poly}(p,m)$ time, where
$p=O(n\log n+\log(2/\varepsilon))$ and
\begin{equation*}
 m=O\!\left(K\log\frac{2K}{\eta}\right)
  =O\!\left(\varepsilon^{-2}\log^3\frac2\varepsilon\right).
\end{equation*}
Since also $|\widehat F/t|=O(n\log n)$ by the bounds above,
evaluating $e^{\widehat F/t}$ to additive accuracy
$\eta n^{-n}$ takes $\operatorname{poly}(n,p,m)$ time,
by \cref{lem:exponential-enclosures}.
This gives the claimed running time.

For rational input, the implementations in
\cref{lem:scale-and-round,lem:optimization,app:numerical-minimization}
and the final division by $S$ also give the claimed
bound on intermediate bit length.
\end{proof}

\bibliographystyle{amsplain0}
\bibliography{main}

\appendix
\section{Numerical optimization}\label{app:numerical-minimization}

We prove \cref{lem:optimization} using gradient steps followed by
projection onto a box. At each step, we use the gradient of $\log d_v$ if the estimated
degree of some vertex $v$ is too large, and the gradient of $f$
otherwise.
This is an inexact version of the constrained subgradient method of
\cite[Sections~6--7]{BoydSubgradient}; we give the argument below,
including the errors from numerical evaluation.

Let $\theta^*$ be the minimizer from \cref{lem:field-bound}, and put
$X=[-R_0,R_0]^{2n}$. Since $K=4\delta^{-2}$, the bounds in
\cref{eq:degree-bound,eq:field-radius,eq:surrogate-minimum}  give
\[
 \theta^*\in X,\qquad
 d_v(\theta^*)\le\frac{t}{\delta^2}<\frac K4,
 \qquad
 f(\theta^*)\le F_C(t)+\eta n.
\]
Recall that we are given the polynomial $U$ as in \cref{thm:surrogate}, with degree  $m=O(K\log\frac{2K}{\eta})$. 
Let $\Pi_X:\R^{2n}\to X$ denote coordinatewise clipping,
$\Pi_X(\theta)_v=\min\{R_0,\max\{-R_0,\theta_v\}\}$.

The following procedure returns a number $\widehat F$ such that $F_C(t)\leq \widehat F\leq F_C(t)+2\eta n$. Recall the choice of parameters in \cref{eq:algorithm-parameters,eq:field-radius}.

\begin{algorithm}[htbp]
\caption{\textsc{MinimizeSurrogate}}
\label{alg:minimize-surrogate}
\begin{algorithmic}[1]
\Require $C,U,t,\eta,p,s$
\Ensure An approximation $\widehat F$ to $F_C(t)$

\State Set $h=\frac{\eta}{64n^2}$ and  $T=1+\left\lceil\frac{4nR_0^2}{h\eta}\right\rceil$.
\State Set $\theta_0=0$ and $\mathcal I=\emptyset$.

\For{$k=0,1,\dots,T-1$}
    \State For each $v\in V$, compute an estimate
    $\widehat d_v(\theta_k)$ satisfying
    $|\widehat d_v(\theta_k)-d_v(\theta_k)|\le \frac K{16}$. \label{line:dv}

    \If{there exists $v$ such that
    $\widehat d_v(\theta_k)>3K/8$} \label{line:case-a}
        \State Pick any such $v$ and set
        $g_k=\nabla\log d_v(\theta_k)$.
    \Else \label{line:case-b}
        \State Set
        $g_k=\nabla f(\theta_k)$,
        and add $k$ to $\mathcal I$. 
        \State Compute and record an  approximation
        $\widehat f(\theta_k)$ satisfying
        $f(\theta_k)
        \le \widehat f(\theta_k)
        \le f(\theta_k)+\frac{\eta}{2}$. \label{line:tildef}
    \EndIf
    \State Compute a dyadic vector $\widetilde g_k$ such that $\|\widetilde g_k-g_k\|_2\leq\frac{\eta}{16nR_0}$. \label{line:gk}
    \State Set $\theta_{k+1}=\Pi_X(\theta_k-h\widetilde g_k)$. \label{line:thetak+1}
\EndFor
\State Return  $\widehat F=\min_{k\in\mathcal I}\widehat f(\theta_k)$\label{line:final}
\end{algorithmic}
\end{algorithm}

\medskip

We remark that in the above procedure, we never compute $g_k$ exactly; instead, we always use the dyadic approximation $\widetilde g_k$.

\newcommand{\lineref}[1]{Line~\ref{#1}}

\subsection{Correctness} In this subsection, we show that \cref{alg:minimize-surrogate} returns the desired $\widehat F$. Observe that, for each round $k$:

\begin{itemize}
    \item If we enter \lineref{line:case-a}, then for the chosen $v$ with $\widehat d_v(\theta_k)>3K/8$, we have
    $d_v(\theta_k)>3K/8-K/16=5K/16>K/4>d_v(\theta^*)$.
    In particular, $\frac{d_v(\theta_k)}{d_v(\theta^*)}>5/4$.
    \item If we enter \lineref{line:case-b}, then
all vertices $v$ have $\widehat d_v(\theta_k)\leq 3K/8$, and thus $ d_v(\theta_k)\leq 3K/8+K/16<K/2$.
    In particular, since the projection \lineref{line:thetak+1} ensures that $\|\theta_k\|_\infty\leq R_0$ for all $k$, we get that $\theta_k\in \mathcal D$.
\end{itemize}

\begin{lemma}\label{lem:gk2}
For all $k$, we have
    $\|g_k\|_2\leq 2\sqrt{2n}$.
\end{lemma}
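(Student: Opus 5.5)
We split into the two cases of the algorithm and bound each gradient coordinatewise in $\theta$.

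\emph{Case (a): $g_k=\nabla\log d_v(\theta_k)$.} Write $d_v(\theta)=\sum_{u\sim v}C_{uv}e^{\theta_u+\theta_v}$. Then
\[
\partial_{\theta_v}\log d_v=1,
\qquad
\partial_{\theta_u}\log d_v=\frac{C_{uv}e^{\theta_u+\theta_v}}{d_v}\quad(u\sim v),
\]
and all other coordinates vanish. The latter coordinates are nonnegative and sum to $1$, so their squared sum is at most $1$. Hence $\norm{g_k}_2^2\le 2$, and $\norm{g_k}_2\le\sqrt2\le2\sqrt{2n}$.

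\emph{Case (b): $g_k=\nabla f(\theta_k)$ with $\theta_k\in\mathcal D$.} Using $y=y_C(\theta)$, where $y_{uv}=\log C_{uv}+\theta_u+\theta_v$, the chain rule gives
\[
\partial_{\theta_v}f=\sum_{e\ni v}\partial_e[U(e^y)]-t .
\]
Write $U(e^y)=L(y)-\mathcal E_m(e^y)+\beta_mW(e^y)$, with $L(y)=\log Z_G(e^y)$. The three pieces contribute as follows.
\begin{itemize}
\item From $L$: $\sum_{e\ni v}\partial_eL=\PP_{\mu}(v\text{ matched})\in[0,1]$.
\item From $-\mathcal E_m$: by \cref{eq:taylor-gradient-error} this term is at most $\beta_m\sum_{e\ni v}e^{y_e}\le\beta_m K$ in absolute value.
\item From $\beta_mW$: this term equals $\beta_m d_v\le\beta_m K$.
\end{itemize}
For \cref{eq:taylor-gradient-error} we need $y\in\cD_K$, which holds because $\theta_k\in\mathcal D$ implies $y_C(\theta_k)\in\cD_{K/2}\subseteq\cD_K$.

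By the choice \cref{eq:surrogate-degree}, $2\beta_mK\le\eta\le1$. Therefore each coordinate satisfies
\[
\bigl|\partial_{\theta_v}f\bigr|\le \max\{1+2\beta_mK-t,\;t\}\le 2 .
\]
Since there are $2n$ coordinates, $\norm{g_k}_2\le 2\sqrt{2n}$.

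The only delicate point is that Case (b) requires $\theta_k\in\mathcal D$, so that the gradient bound of \cref{lem:taylor-error-bounds} applies. This was established in the observation preceding the lemma. Everything else is a direct computation.
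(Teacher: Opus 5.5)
Your proof is correct and follows the paper's argument. In case (a) the gradient is the unit vector at $v$ plus a probability vector on the neighbours of $v$; the paper phrases this as a convex combination of the vectors $\one_{\{u,v\}}$ of norm $\sqrt2$. In case (b) you combine the vertex-marginal identity, the gradient error bound \cref{eq:taylor-gradient-error} (valid since $\theta_k\in\mathcal D$) and $2\beta_mK\le\eta$, exactly as the paper does, except that you use $d_v\le K$ where the paper uses the slightly sharper $d_v\le K/2$, which changes nothing here.
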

\begin{proof}
For each edge $e=uv$, let $b_e=\mathbf 1_{\{u,v\}}\in\R^{2n}$.
If we enter \lineref{line:case-a},
\[
 \nabla\log d_v(\theta_k)
 =\sum_{e\ni v}\frac{w_e(\theta_k)}{d_v(\theta_k)}b_e
\]
is a convex combination of vectors of norm $\sqrt2$, so
$\|g_k\|_2\le\sqrt2$. If we enter \lineref{line:case-b}, $\theta_k\in\mathcal D$. By
\cref{eq:vertex-marginal,eq:taylor-gradient-error,eq:surrogate-degree},
using $0<t<1$ and
$\partial_{\theta_v}W(w_C(\theta))=d_v(\theta)$, we have
\[
 |\partial_{\theta_v}f(\theta_k)|
 \le1+2\beta_md_v(\theta_k)
 \le1+\eta/2\le2.
\]
Thus $\|g_k\|_2\le2\sqrt{2n}$ in either case.
\end{proof}

\begin{lemma}\label{obs:decrease}
Fix $k$. Suppose in round $k$ we enter \lineref{line:case-a}, or we enter \lineref{line:case-b} with $f(\theta_k)>f(\theta^*)+\eta/2$. Then we have $\|\theta_{k+1}-\theta^*\|_2^2\leq \|\theta_{k}-\theta^*\|_2^2-\frac{h\eta}{2}$.
\end{lemma}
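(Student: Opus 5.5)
The argument is the standard one-step analysis of projected subgradient descent, with numerical errors added in. Projection onto the box $X$ is nonexpansive and $\theta^*\in X$, so
\[
 \|\theta_{k+1}-\theta^*\|_2^2\le\|\theta_k-h\widetilde g_k-\theta^*\|_2^2
 =\|\theta_k-\theta^*\|_2^2-2h\langle\widetilde g_k,\theta_k-\theta^*\rangle+h^2\|\widetilde g_k\|_2^2 .
\]
It therefore suffices to show that $\langle g_k,\theta_k-\theta^*\rangle$ is bounded below by a fixed multiple of $\eta$, and that the error terms are small.

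First, the error from replacing $g_k$ by $\widetilde g_k$. Both $\theta_k$ and $\theta^*$ lie in $X$, so $\|\theta_k-\theta^*\|_2\le 2R_0\sqrt{2n}$. Hence
\[
 |\langle\widetilde g_k-g_k,\theta_k-\theta^*\rangle|\le\frac{\eta}{16nR_0}\cdot 2\sqrt{2n}R_0\le\frac{\eta}{4}.
\]
\cref{lem:gk2} gives $\|\widetilde g_k\|_2^2\le(2\sqrt{2n}+1)^2\le 16n$. The quadratic term is then $h^2\cdot 16n$, and with $h=\eta/(64n^2)$ this is $h\cdot\eta/(4n)\le h\eta/8$.

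Case (a). I use that $\theta\mapsto\log d_v(\theta)$ is convex, being a log-sum-exp of affine functions of $\theta$. Convexity gives
\[
 \langle g_k,\theta_k-\theta^*\rangle\ge\log d_v(\theta_k)-\log d_v(\theta^*)>\log(5/4)>1/5.
\]
Since $\eta\le1$, this is at least $\eta/5$. Write $c$ for the resulting lower bound on $\langle\widetilde g_k,\theta_k-\theta^*\rangle$ (after subtracting the $\eta/4$ error). Then the decrease is at least $2hc-h\eta/8$. To reach $h\eta/2$ I need $c\ge 5\eta/16$, i.e.\ the bound $\langle g_k,\theta_k-\theta^*\rangle\ge\eta/4+5\eta/16$. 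The bound $\log(5/4)\approx0.223$ gives this whenever $\eta\le 0.39$. This holds because $\eta=\alpha/16\le1/64$.

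Case (b). Here $\theta_k\in\mathcal D$, and $f$ is convex on the convex set $\mathcal D$, which also contains $\theta^*$. The gradient inequality holds for a differentiable convex function restricted to a convex set, since we only compare along the segment $[\theta^*,\theta_k]\subseteq\mathcal D$. This gives
\[
 \langle g_k,\theta_k-\theta^*\rangle\ge f(\theta_k)-f(\theta^*)>\eta/2 .
\]
Subtracting the $\eta/4$ error, the decrease is at least $2h(\eta/4)-h\eta/8=3h\eta/8$. This is short of $h\eta/2$, so the constants must be checked more carefully; this is the step I expect to be the main obstacle. I would tighten the gradient error estimate, using $\|\theta_k-\theta^*\|_2\le 2R_0\sqrt{2n}$ more sharply. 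The inner product error is then at most $\frac{\eta}{16nR_0}\cdot 2R_0\sqrt{2n}=\frac{\eta\sqrt2}{8\sqrt n}\le\eta/8$ for $n\ge2$. With this bound the decrease is at least $2h(\eta/2-\eta/8)-h\eta/8=5h\eta/8\ge h\eta/2$. The same sharper estimate makes case (a) easier as well.
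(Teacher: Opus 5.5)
Your argument is correct and is essentially the paper's proof. You use the projected-step inequality, the convexity lower bound $\langle g_k,\theta_k-\theta^*\rangle>\eta/2$ in both cases (via $\log(5/4)>\eta/2$ in case (a)), and the gradient-error bound $\eta/8$ from $\|\theta_k-\theta^*\|_2\le 2\sqrt{2n}R_0$ with $n\ge2$. The only differences are that you bound the quadratic term by $16nh^2\le h\eta/8$ instead of the paper's $16n^2h^2=h\eta/4$, and that your initial $\eta/4$ error estimate becomes unnecessary once you use the sharper $\eta/8$ bound in both cases.
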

\begin{proof}
If we enter \lineref{line:case-a},  convexity of $ \log d_v$ gives
$\langle \nabla \log d_v(\theta_k),\theta_k-\theta^*\rangle\geq \log d_v(\theta_k)-\log d_v(\theta^*)>\log(5/4)>\eta/2$;
if we enter \lineref{line:case-b}, convexity of $f$ on $\mathcal D$, which contains
$\theta_k$ and $\theta^*$, gives the lower bound
$\langle \nabla f(\theta_k),\theta_k-\theta^*\rangle\geq f(\theta_k)-f(\theta^*)>\eta/2$. In either case, we have $\langle  g_k,\theta_k-\theta^*\rangle>\eta/2$.
Since $\|\theta_k-\theta^*\|_2\le2\sqrt{2n}R_0\le2nR_0$,
\[
 |\langle\widetilde g_k-g_k,\theta_k-\theta^*\rangle|
 \le\frac{\eta}{16nR_0}\,2nR_0=\eta/8.
\]
Hence $\langle\widetilde g_k,\theta_k-\theta^*\rangle>3\eta/8$.
The preceding \cref{lem:gk2} also gives
$\|\widetilde g_k\|_2\le2\sqrt{2n}+\eta/(16nR_0)\le4n$.

Projection onto $X$ cannot increase distance from $\theta^*\in X$.
The standard projected-step inequality
\cite[Section~6]{BoydSubgradient} therefore gives
\[
\begin{aligned}
 \|\theta_{k+1}-\theta^*\|_2^2
 &\le \|\theta_k-h\widetilde g_k-\theta^*\|_2^2= \|\theta_k-\theta^*\|_2^2
   -2h\langle\widetilde g_k,\theta_k-\theta^*\rangle
   +h^2\|\widetilde g_k\|_2^2\\
 &\le\|\theta_k-\theta^*\|_2^2-3h\eta/4+16n^2h^2
  =\|\theta_k-\theta^*\|_2^2-h\eta/2,
\end{aligned}
\]
where the last equality uses $h=\eta/(64n^2)$.
\end{proof}

We can now show the correctness of \cref{alg:minimize-surrogate}.

\begin{proof}[Proof of correctness]
Since $\|\theta_0-\theta^*\|_2^2\le2nR_0^2$, the decrease in
\cref{obs:decrease} cannot occur in all $T$ iterations.
Thus some $k_1\in\mathcal I$ satisfies
$f(\theta_{k_1})\le f(\theta^*)+\eta/2$.
Every recorded value is at least $F_C(t)$ by
\cref{eq:surrogate-minimum}, since its iterate lies in $\mathcal D$.
Consequently,
\[
 F_C(t)\le\widehat F\le\widehat f(\theta_{k_1})
 \le f(\theta^*)+\eta
 \le F_C(t)+\eta n+\eta\le F_C(t)+2\eta n.
\]
\end{proof}

\subsection{Running time and bit length of output}

We now show that \cref{alg:minimize-surrogate} finishes in $n^{O(m)}\text{poly}(p,m)$ time, outputting $\widehat F$ with $\ell(\widehat F)=\text{poly}(n,p,m)$.
We will use the following two estimates.

\begin{lemma}\label{lem:exponential-enclosures}
Let $x$ be a rational number of bit length $\ell$, let $B\ge1$ be an
integer with $|x|\le B$, and let $Q\ge0$ be an integer. In
$\operatorname{poly}(B,Q,\ell)$ bit operations, one can compute a
nonnegative dyadic number $L$ such that $ L\le e^x\le L+2^{-Q}$.
\end{lemma}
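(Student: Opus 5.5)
To prove \cref{lem:exponential-enclosures}, I would reduce to computing $e^{y}$ for a small argument $y$ via a truncated Taylor series, and then recover $e^x$ by repeated squaring, tracking the error carefully.

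\textbf{Step 1: reduce to nonnegative arguments and a small base.} If $x<0$, I will compute an enclosure for $e^{-x}$ with $e^{-x}\in[L',L'+2^{-Q'}]$ and $L'\ge1$, and then take a dyadic lower bound for $1/(L'+2^{-Q'})$. Since $L'\ge 1$, this reciprocal is within $2^{-Q'}$ of $1/L'$, so with $Q'=Q+2$ the total error stays below $2^{-Q}$.

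Now assume $0\le x\le B$. Choose an integer $k=O(\log B)$ with $2^k\ge B$, and set $y=x/2^k\in[0,1]$. Then $e^x=(e^y)^{2^k}$, and $e^x\le e^B$.

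\textbf{Step 2: evaluate $e^y$ with enough precision.} Let $Q_1=Q+2^{k+1}\cdot 2B+k+O(1)$, which is $\operatorname{poly}(B,Q)$. I will compute $S=\sum_{j=0}^{J}y^j/j!$ with $J=O(Q_1)$, chosen so that the tail satisfies $\sum_{j>J}1/j!\le 2^{-Q_1-1}$. Each term is a rational of bit length $\operatorname{poly}(J,\ell)$, since $y$ has bit length $\ell+k$. Rounding $S$ down to a multiple of $2^{-Q_1-1}$ gives a dyadic $a$ with $a\le e^y\le a+2^{-Q_1}$ and $1\le a\le e$.

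\textbf{Step 3: square $k$ times, keeping lower bounds.} Set $a_0=a$, and let $a_{i+1}$ be $a_i^2$ rounded down to a multiple of $2^{-Q_1}$. By induction, $a_i\le e^{y2^i}$. For the gap, write $\epsilon_i=e^{y2^i}-a_i$. Then
\[
\epsilon_{i+1}\le 2e^{y2^i}\epsilon_i+\epsilon_i^2+2^{-Q_1}.
\]
Because $e^{y2^i}\le e^B$, this gives $\epsilon_k\le (3e^B)^k2^{-Q_1}\cdot 2$, and the choice of $Q_1$ makes this at most $2^{-Q}$. I set $L=a_k$. Every number involved is a dyadic with denominator $2^{Q_1}$ and magnitude at most $e^{B}+1$, so bit lengths stay $\operatorname{poly}(B,Q,\ell)$. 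Over the $O(J+k)$ arithmetic operations, the total cost is therefore $\operatorname{poly}(B,Q,\ell)$ bit operations.

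\textbf{Main obstacle.} The delicate part is the error propagation in the squaring phase. The error is amplified by factors of order $e^{B}$ at each of the $k$ steps, so $Q_1$ must absorb $O(kB)$ extra bits. This remains polynomial only because $B$ is an integer bound entering linearly, rather than $\log B$ entering exponentially.

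With that in place, the reciprocal step for negative $x$ is routine, since dividing by a number at least $1$ contracts errors.
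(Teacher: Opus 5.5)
Your proof is correct, but it takes a different route from the paper's.

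\textbf{The paper's route.} The paper does no range reduction. It sums the Taylor series of $e^x$ directly with $N=8B+Q+4$ terms. It uses the remainder bound $e^BB^{N+1}/(N+1)!\le 2^{2B-(N+1)}\le 2^{-(Q+2)}$, which holds for either sign of $x$, so no separate case for $x<0$ is needed. It then subtracts this error radius, clips at $0$, and rounds down to a multiple of $2^{-(Q+1)}$. What remains is bit-length bookkeeping: every partial sum has denominator dividing $b_0^NN!$ and absolute value at most $e^B$.

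\textbf{Your route.} You halve the argument into $[0,1]$, sum the series there, square $k$ times with downward rounding, and take a reciprocal when $x<0$. This needs two ingredients the paper avoids: the error-propagation analysis through the squaring phase, and the sign reduction (including the check that $L'\ge1$, which does hold because every rounding preserves values $\ge1$).

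\textbf{What each buys.} Your version is well conditioned and would survive fixed-point arithmetic. By contrast, the paper's direct sum at $x\approx -B$ has exponentially large partial sums that cancel down to $e^{-B}$. That cancellation is harmless only because the paper computes in exact rationals. The output already has $\Theta(B+Q)$ bits, so neither method can beat $\operatorname{poly}(B,Q,\ell)$. In the exact-rational model used here, the paper's single remainder estimate is the more economical argument.

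\textbf{A remark on your ``main obstacle''.} The error is not amplified by $e^B$ at each of the $k$ steps. The linear amplification over the whole squaring phase is at most
\[
\prod_{i<k}2e^{y2^i}\le 2^ke^{y(2^k-1)}\le 2^ke^B,
\]
not $(3e^B)^k$. So $Q_1=Q+O(B)$ already suffices. Your larger choice of $Q_1$ remains valid, just wasteful.
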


\begin{proof}
Set $N=8B+Q+4$, $r=2^{-(Q+2)}$, and compute $ A=\sum_{j=0}^{N}\frac{x^j}{j!}$.
Taylor's remainder bound and $k!\ge(k/e)^k$ give
\[
 |e^x-A|
 \le \frac{e^B B^{N+1}}{(N+1)!}
 \le e^B\left(\frac{eB}{N+1}\right)^{N+1}
 \le 2^{2B-(N+1)}
 \le r,
\]
where we used $e<4$ and $N+1\ge8B$.
Put $a=\max\{0,A-r\}$ and round $a$ downward to the nearest
multiple $L$ of $2^{-(Q+1)}$. Then $0\le L\le e^x$, and
\[
 e^x-L\le 2r+2^{-(Q+1)}=2^{-Q}.
\]
There are $N+1=O(B+Q)$ Taylor terms. If $x=a_0/b_0$, their
denominators divide $b_0^N N!$, whose bit length is
$O(N\ell+N\log(N+1))$. Every partial sum has absolute value
at most $e^B$, so its numerator requires only $O(B)$
additional bits with this denominator. Thus the rational
arithmetic and final rounding take
$\operatorname{poly}(B,Q,\ell)$ bit operations.
\end{proof}

We now introduce the following notation on the polynomial $U$.

\begin{definition}
    Write $U$ as a sum of its distinct nonzero monomials,
\[
 U(w)=\sum_{j=1}^S u_jw^{\gamma_j}
     =\sum_{j=1}^S u_j\prod_{e\in E}w_e^{(\gamma_j)_e},
\]
where each $u_j\in\Q$ and
$\gamma_j\in\mathbb Z_{\ge0}^E$.
Substituting $w_C(\theta)_e=C_ee^{\theta_u+\theta_v}$ and setting
$a_j=u_j\prod_{e\in E}C_e^{(\gamma_j)_e}$ and
$(b_j)_v=\sum_{e\ni v}(\gamma_j)_e$, we get
\[
 f(\theta)=\sum_{j=1}^S a_je^{b_j\cdot\theta}-t\sum_v\theta_v,
 \qquad
 \frac{\partial f}{\partial\theta_v}(\theta)
 =\sum_{j=1}^S a_j(b_j)_ve^{b_j\cdot\theta}-t.
\]
\end{definition}

Furthermore, let $H_U=m(2(n^2+1))^m$ and $D_U=m!(4K)^m(4K+1)^m(8K+1)^m$. 

\begin{lemma}\label{obs:U-coeffs} 
We have
\[
 1\le S\le(n^2+1)^m,\qquad
 |a_j|\le H_U,\qquad
 0\le(b_j)_v\le m,\qquad
 |b_j\cdot\theta|\le2mR_0\quad(\theta\in X).
\]
The denominators of the coefficients $u_j$ divide $D_U=m!(4K)^m(4K+1)^m(8K+1)^m$,
and those of the coefficients $a_j$ divide $D_U2^{pm}$.
\end{lemma}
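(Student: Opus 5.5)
The plan is to read off each claim from the explicit construction $U=T_m+\beta_mW$ with $T_m=\sum_{\ell=1}^m z_\star^\ell\lambda_\ell$, where the $\lambda_\ell$ come from the recurrence \cref{eq:log-coefficient-recurrence} in \cref{lem:taylor-construction}. Throughout I treat $K=4\delta^{-2}$ as an integer. If it is only rational, its numerator and denominator would have to be absorbed into $D_U$.

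\emph{Monomial count and exponent vectors.} By \cref{lem:taylor-construction}, $U$ has degree at most $m$ in the $|E|\le n^2$ variables $w_e$. The number of monomials of degree at most $m$ is $\binom{|E|+m}{m}\le(n^2+1)^m$, which gives $S\le(n^2+1)^m$. Since $E\neq\varnothing$ and $T_m$ has no constant term, the monomials $w_e$ carry the coefficient $\beta_m>0$ plus a linear coefficient from $\lambda_1$. To guarantee $S\ge1$, I will check that at least one coefficient survives; for instance the linear coefficient is $z_\star\rho+\beta_m>0$. Each $(\gamma_j)_e\ge0$, so $(b_j)_v\ge0$. Moreover $\sum_v(b_j)_v=2|\gamma_j|\le2m$. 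This gives $(b_j)_v\le m$, because only the edges at $v$ contribute and each edge contributes once, so $(b_j)_v\le|\gamma_j|$. It also gives $|b_j\cdot\theta|\le\norm{\theta}_\infty\sum_v(b_j)_v\le2mR_0$ on $X$.

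\emph{Coefficient size.} This is the main step. I will use a majorant argument. Each $M_j(w)$ is a sum of distinct monomials with coefficient $1$, and $M_j(\one)\le\binom{n^2}{j}$. The coefficients $\rho^j\binom{\ell-1}{j-1}$ of $s_\ell$ are nonnegative, and $\rho\le1$. So the sum of absolute values of the coefficients of $s_\ell$ is at most $\sum_j\binom{\ell-1}{j-1}\binom{n^2}{j}\le\binom{n^2+\ell-1}{\ell}\cdot$ (a small factor), and I aim for a bound of the form $(n^2+1)^\ell$. Write $\|\cdot\|$ for the sum of absolute values of coefficients. Then $\|\lambda_\ell\|$ is at most the $z^\ell$ coefficient of $-\log(1-\sum_\ell\|s_\ell\|z^\ell)$. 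This holds because $\log(1+x)=\sum(-1)^{k+1}x^k/k$ and the triangle inequality applies coefficientwise. The majorant $\sum_\ell(n^2+1)^\ell z^\ell=y/(1-y)$, with $y=(n^2+1)z$, turns this into $-\log\frac{1-2y}{1-y}$, whose $\ell$-th coefficient is at most $(2(n^2+1))^\ell$.

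Next, $z_\star\le1$, and $\beta_m\le\eta/(2K)\le1$ by \cref{eq:surrogate-degree}. Summing over $\ell\le m$ gives $|u_j|\le m(2(n^2+1))^m=H_U$. Finally $|a_j|\le|u_j|$ because $0\le C_e\le1$.

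I expect the constant bookkeeping in the $s_\ell$ majorant to be the main obstacle, specifically matching exactly $n^2+1$ rather than a slightly larger base. If it does not fit, I would bound $\|s_\ell\|$ via the monomial count, which is at most $(n^2+1)^\ell$ times the maximal single coefficient, and adjust accordingly.

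\emph{Denominators.} I will track denominators by induction on the recurrence. The coefficients of $s_\ell$ have denominators dividing $(4K)^\ell$. Dividing by $\ell$ in \cref{eq:log-coefficient-recurrence} inductively gives denominators of $\lambda_\ell$ dividing $\ell!(4K)^\ell$. Multiplying by $z_\star^\ell=(4K)^\ell/(4K+1)^\ell$ contributes $(4K+1)^\ell$. The correction term $\beta_m=400K(8K+1)\bigl(8K/(8K+1)\bigr)^{m+1}$ has denominator dividing $(8K+1)^m$. Together these give denominators dividing $D_U$. For $a_j$, the factor $\prod_e C_e^{(\gamma_j)_e}$ has denominator dividing $2^{p|\gamma_j|}\mid2^{pm}$, since each $C_e$ is a multiple of $2^{-p}$ by \cref{lem:scale-and-round}.
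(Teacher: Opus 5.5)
Your proposal is correct and follows essentially the paper's route: the same monomial count for $S$, the same incidence argument for $b_j$, a coefficient-norm majorant for the formal logarithm, and the same denominator tracking through the recurrence and the explicit forms of $z_\star$ and $\beta_m$; your generating function $\log(1-y)-\log(1-2y)$ is exactly the paper's composition count $\sum_r\frac1r\binom{\ell-1}{r-1}$. Three small fixes: the step you flagged as the main obstacle closes exactly, since Vandermonde gives $\sum_j\binom{\ell-1}{j-1}\binom{n^2}{j}=\binom{n^2+\ell-1}{\ell}\le(n^2+1)^\ell$; the linear coefficient of $w_e$ is $\rho\sum_{\ell=1}^m z_\star^\ell+\beta_m$ rather than $z_\star\rho+\beta_m$, because every $\lambda_\ell$ has linear part $\rho W$, though it is still positive; and $K=(s+1)^2 2^{2s+10}$ is indeed an integer.
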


\begin{proof}
Put $d=|E|\le n^2$, and let $\|P\|_{\mathrm{coeff}}$ denote
the sum of the absolute values of the coefficients of a
polynomial $P$. Recall from the construction of $U$ that
\[
 s_\ell=\sum_{j=1}^{\min\{\ell,\lfloor N/2\rfloor\}}
 \rho^j\binom{\ell-1}{j-1}M_j,\qquad \rho=(4K)^{-1}.
\]
Since $M_j$ has at most $d^j$ monomials, each with coefficient
$1$, and $\rho\le1$, we have
$\|s_\ell\|_{\mathrm{coeff}}
 \le d(1+d)^{\ell-1}\le(1+d)^\ell$.
Expanding the formal logarithm gives
\[
 \lambda_\ell=
 \sum_{r=1}^{\ell}\frac{(-1)^{r+1}}r
 \sum_{\substack{\ell_1+\cdots+\ell_r=\ell\\\ell_i\ge1}}
 s_{\ell_1}\cdots s_{\ell_r}.
\]
There are $\binom{\ell-1}{r-1}$ terms in the inner sum.
Since the coefficient norm is submultiplicative,
\[
 \|\lambda_\ell\|_{\mathrm{coeff}}
 \le(1+d)^\ell
       \sum_{r=1}^{\ell}\frac1r\binom{\ell-1}{r-1}
 \le\frac{(2(1+d))^\ell}{2}.
\]
Now
$U=\sum_{\ell=1}^m z_\star^\ell\lambda_\ell+\beta_mW$,
where $z_\star<1$ and $\beta_m\le1$. Thus
\[
 \|U\|_{\mathrm{coeff}}
 \le\frac m2(2(n^2+1))^m+d
 \le m(2(n^2+1))^m=H_U.
\]

We have chosen $K$ to be an integer. The coefficient
denominators of $s_\ell$ divide $(4K)^\ell$, so the
formal logarithm formula shows that those of $\lambda_\ell$
divide $\ell!(4K)^\ell$. Since
\[
 z_\star=\frac{4K}{4K+1},\qquad
 \beta_m=\frac{400K(8K)^{m+1}}{(8K+1)^m},
\]
the coefficient denominators of $U$ divide $D_U$.
Consequently, every coefficient of $U$ has bit length
\[
 O\!\left(m\log(m+1)+m\log(K+1)+m\log(n+1)\right).
\]
Every $C_e$ is a multiple of $2^{-p}$, and each monomial
has degree at most $m$, giving the stated denominator
bound for $a_j$. Also, $|a_j|\le|u_j|\le H_U$ because
$0\le C_e\le1$.

There are at most $(d+1)^m$ monomials of degree at most $m$, as
each can be represented by $m$ factors chosen from
$\{1\}\cup\{w_e:e\in E\}$.
The graph has an edge, and the linear coefficient of each
$w_e$ in $U$ is
$\rho\sum_{\ell=1}^m z_\star^\ell+\beta_m>0$, so $S\ge1$.
Finally,
\[
 (b_j)_v\le\sum_e(\gamma_j)_e\le m,\qquad
 \sum_v(b_j)_v=2\sum_e(\gamma_j)_e\le2m.
\]
The last identity gives $|b_j\cdot\theta|\le2mR_0$ on $X$.
\end{proof}

We now introduce the following accuracy parameters that will be used in the execution of \cref{alg:minimize-surrogate}.

\begin{definition}\label{def:PQdQf}
    Let 
\[
\epsilon_0:=\frac{\eta}{64n^2R_0},\qquad \omega:=\frac{K\epsilon_0}{64n},\qquad \omega_2:=\frac{\epsilon_0}{2mSH_U}.
\]
Put
\[
 P=\left\lceil\log_2(1/\epsilon_0)\right\rceil,\qquad
 Q_d=\max\left\{0,\left\lceil\log_2(1/\omega)\right\rceil\right\},
 \qquad
 Q_f=\max\left\{0,\left\lceil\log_2(1/\omega_2)\right\rceil\right\}.
\]
In particular, we have $2^{-P}\leq\epsilon_0$,
$2^{-Q_d}\leq\omega$, and $2^{-Q_f}\leq\omega_2$. 
 
\end{definition}

Further,  let $Y:=\max_{k,v}\ell((\theta_k)_v)$ throughout \cref{alg:minimize-surrogate}. We will eventually give an upper bound on $Y$.

\begin{proof}[Proof of upper bounds on running time and $\ell(\widehat F)$] We know from \cref{thm:surrogate} that $U$ can be constructed in $n^{O(m)}$ time.
    We now describe more concretely how one can execute each computational step in \cref{alg:minimize-surrogate}.

Recall that for every $k=0,1,\dots,T-1$, we perform a subset of the following.
\begin{enumerate}[leftmargin=*]
    \item \lineref{line:dv}. In this step, we compute $\widehat d_v(\theta_k)$ satisfying
$|\widehat d_v(\theta_k)-d_v(\theta_k)|\le K/16$, as follows:
\begin{itemize}[leftmargin=*]
    \item For every $e=uv$, find a nonnegative dyadic number
$\widehat x_{uv}\in 2^{-Q_d-1}\mathbb Z$ with
$|\widehat x_{uv}-e^{(\theta_k)_u+(\theta_k)_v}|\le  2^{-Q_d}\leq \omega$.

\textit{Running time.} Since $\ell((\theta_k)_v)\leq Y$,
By \cref{lem:exponential-enclosures},  this takes $\text{poly}(Y,R_0,n,Q_d)$ time. Also, the resulting $\ell(\widehat x_{uv})=\text{poly}(Q_d,R_0)$.

\item For every $v$, take $ \widehat d_v(\theta_k)=\sum_{u\sim v}C_{uv}\widehat x_{uv}$. 

\textit{Running time.} Since $\ell(C_{uv})=\text{poly}(p)$ and $\ell(\widehat x_{uv})=\text{poly}(Q_d,R_0)$, this takes $\text{poly}(n,p,Q_d,R_0)$ time. Also, the resulting $\ell(\widehat d_v(\theta_k))=\text{poly}(\log n,p,Q_d,R_0)$. 
\end{itemize}

\noindent \textit{Correctness.} The above procedure gives $$|\widehat d_v(\theta_k)- d_v(\theta_k)|\leq \sum_{u\sim v}C_{uv}|\widehat x_{uv}-e^{(\theta_k)_u+(\theta_k)_v}|\leq  n\omega=\frac{K\epsilon_0}{64}\leq K/16.$$
    \item \lineref{line:gk} after \lineref{line:case-a}. In this step, we choose a dyadic  vector $\widetilde g_k$ such that $\norm{\widetilde g_k-\nabla\log d_v(\theta_k)}_2\le\frac{\eta}{16nR_0}$, as follows:
    \begin{itemize}[leftmargin=*]
        \item Compute $ \overline g_k=\sum_{u\sim v}
       \frac{C_{uv}\widehat x_{uv}}{\widehat d_v(\theta_k)}b_{uv}$,
where we recall that $b_{uv}=\mathbf 1_{\{u,v\}}$.

\textit{Running time.} Combining the bit length bounds in (1), we get that this takes $\text{poly}(n,p,Q_d,R_0)$ time. The resulting $\overline g_k$ has $\ell(\overline g_k(v))\leq \text{poly}(\log n, p,Q_d,R_0)$ pointwise.

\item Round $\overline g_k$ to the nearest dyadic vector $\widetilde g_k\in (2^{-P-1}\mathbb Z)^{2n}$, with $|\widetilde g_k(v)-\overline g_k(v)|\leq 2^{-P}\leq\epsilon_0$.

\textit{Running time.} This takes $\text{poly}(n,p,Q_d,R_0,P)$ time.

    \end{itemize}

\noindent \textit{Correctness.} 
Recall that
$d_v(\theta_k)>5K/16$ and
$\widehat d_v(\theta_k)\ge d_v(\theta_k)-n\omega>K/4$.
The coordinate at $v$ equals $1$ both in $\widetilde g_k$ and in $\nabla\log d_v(\theta_k)$, and the coordinates outside $\{v\}\cup\{u:u\sim v\}$ equal $0$ in both vectors.
For every  $u\sim v$, we have
\begin{align*}
 \left|\overline g_k(u)-
       (\nabla\log d_v(\theta_k))_u\right|
 &=
 \left|\frac{C_{uv}\widehat x_{uv}}{\widehat d_v(\theta_k)}
       -\frac{w_{uv}(\theta_k)}{d_v(\theta_k)}\right|\\
 &\le
 \frac{|C_{uv}\widehat x_{uv}-w_{uv}(\theta_k)|}
      {\widehat d_v(\theta_k)}
 +\frac{w_{uv}(\theta_k)}{d_v(\theta_k)}
  \frac{|d_v(\theta_k)-\widehat d_v(\theta_k)|}
       {\widehat d_v(\theta_k)}\\
 &\le\frac{(n+1)\omega}{d_v(\theta_k)-n\omega}
 \le\frac{(n+1)\epsilon_0}{16n}
 \le\frac{\epsilon_0}{8}.
\end{align*}
Hence $\|\overline g_k-\nabla\log d_v(\theta_k)\|_2\leq \sqrt{n}\cdot \frac{\epsilon_0}{8}$. Together with  $\|\widetilde g_k-\overline g_k\|_2\leq \sqrt{2n}\cdot\epsilon_0$, this gives $\|\widetilde g_k-\nabla\log d_v(\theta_k)\|_2\leq 2\sqrt{n}\epsilon_0\leq\frac{\eta}{16nR_0}$.

    \item \lineref{line:gk} after \lineref{line:case-b}. In this step, we choose a dyadic vector $\widetilde g_k$ such that $\norm{\widetilde g_k-\nabla f(\theta_k)}_2\le\frac{\eta}{16nR_0}$, as follows:
    \begin{itemize}[leftmargin=*]
        \item For each $1\le j\le S$, compute a dyadic number $E_j\in 2^{-Q_f-1}\mathbb Z$ with
$|E_j-e^{b_j\cdot\theta_k}|\le 2^{-Q_f}\leq \omega_2$.

\textit{Running time.} Since $0\leq (b_j)_v\leq m$ are integers, $\ell((\theta_k)_v)\leq Y$, and $|b_j\cdot \theta_k|\leq 2mR_0$, we know from \cref{lem:exponential-enclosures} that this takes $\text{poly}(S,m,R_0,Y,Q_f)$ time. Furthermore, we have $\ell(E_j)=\text{poly}(Q_f,m,R_0)$.

\item For every $v$, compute the rational number
 $\overline g_k (v)=\sum_{j=1}^S a_j(b_j)_vE_j-t$.

\textit{Running time.} Recall from \cref{obs:U-coeffs} that $\ell(a_j)\leq \text{poly}(\log H_U,\log D_U,p,m)$ and  $\ell((b_j)_v) = O(\log(m+1))$. Also recall that $\ell(t)=O(s)$. Combining with the previous bound on $\ell(E_j)$, we know that this takes $$\text{poly}(S,\log H_U,\log D_U,p,m,Q_f,R_0,s)$$ time. 
 Also, the resulting $\overline g_k$  has $\ell(\overline g_k(v))=\text{poly}(\log S,\log H_U,\log D_U,p,m,Q_f,R_0,s)$ pointwise.

\item As before, round $\overline g_k$ to the nearest dyadic vector $\widetilde g_k\in (2^{-P-1}\mathbb Z)^{2n}$, with $|\widetilde g_k(v)-\overline g_k(v)|\leq 2^{-P}\leq\epsilon_0$.

\textit{Running time.} This takes $\text{poly}(\log S,\log H_U,\log D_U,p,m,Q_f,R_0,s,P)$ time.
    \end{itemize}

\noindent \textit{Correctness.} Since $|a_j|\le H_U$,  $(b_j)_v\le m$, and $|E_j-e^{b_j\cdot\theta_k}|\leq\omega_2$, we have
\[
 \left|\overline g_k(v)-\frac{\partial f}{\partial\theta_v}(\theta_k)\right|
 \le mSH_U\omega_2=\epsilon_0/2,
\]
which gives $\|\overline g_k-\nabla f(\theta_k)\|_2\leq \sqrt{2n}\cdot \epsilon_0/2$. Together with  $\|\widetilde g_k-\overline g_k\|_2\leq \sqrt{2n}\cdot\epsilon_0$, this gives $\|\widetilde g_k-\nabla f(\theta_k)\|_2\leq 4\sqrt{n}\epsilon_0\leq\frac{\eta}{16nR_0}$.

    \item \lineref{line:tildef} (after \lineref{line:case-b}). In this step,  we compute an   approximation 
$\widehat f(\theta_k)$  such that $f(\theta_k)\leq \widehat f(\theta_k)\leq f(\theta_k)+\eta/2$, as follows:
\begin{itemize}[leftmargin=*]
    \item With $E_j$ as in (3), let
    \[
 \widetilde f(\theta_k)=\sum_{j=1}^S a_jE_j-t\sum_v(\theta_k)_v,
 \qquad
 \widehat f(\theta_k)=\widetilde f(\theta_k)+\eta/4.
\]

\textit{Running time.} Since  $\ell(\eta)=O(s)$, combining with the previous bounds, this takes $$\text{poly}(S,\log H_U,\log D_U,p,m,Q_f,R_0,s,n,Y)$$ time. Also, the resulting $\widehat f(\theta_k)$ has bit length $\text{poly}(\log S,\log H_U,\log D_U,p,m,Q_f,R_0,s,\log n,Y)$. This follows from fact that (i) the denominators of every $a_jE_j$ divides $D_U2^{pm+Q_f+1}$, (ii)  $|\sum_{j=1}^Sa_jE_j|\leq SH_U(e^{2mR_0}+1)$, (iii) $\ell(\eta)=O(s)$, and (iv) each $\ell((\theta_k)_v)\leq Y$.

\end{itemize}

\noindent \textit{Correctness.} We have 
\begin{align*}
    |\widetilde f(\theta_k)-f(\theta_k)|\leq SH_U\omega_2=\frac{\epsilon_0}{2m}\leq \eta/4,
\end{align*}
and thus $f(\theta_k)\leq  \widehat  f(\theta_k)\leq f(\theta_k)+\eta/2$.

\item \lineref{line:thetak+1}. In this step, we take $\theta_{k+1}=\Pi_X(\theta_k-h\widetilde g_k)$. Recall that $R_0\in\mathbb Z$, $h=\frac{1}{2^{s+10}n^2}$, $\theta_0=0$, and all  $\widetilde g_k\in  (2^{-P-1}\mathbb Z)^{2n}$. Thus, we can deduce inductively that every coordinate of every $\theta_{k}$ has denominator bit length $O(s+\log n+P)$, and thus,
\[
Y=\max_{k,v}\ell((\theta_k)_v)=\text{poly}(\log R_0,P,s,\log n).
\]
And thus, \lineref{line:thetak+1} takes
\[
\text{poly}(n,s,S,\log H_U,\log D_U,p,m,Q_d,Q_f,R_0,P)
\]
time.
\end{enumerate}

We iterate the above loop for  $T$ times. After that, we go into the final step   \lineref{line:final}  that outputs $\min_{k\in\mathcal I}\widehat f(\theta_k)$. 
Since  each $\ell(\widehat f(\theta_k))=\text{poly}(\log S,\log H_U,\log D_U,p,m,Q_f,R_0,s,\log n,Y)$ and taking the minimum requires at most $T$ comparisons, we know that \lineref{line:final}
takes $$\text{poly}(S,\log H_U,\log D_U,p,m,Q_f,R_0,\log n,s,T,P)$$ time.

Finally, we  know   from \cref{eq:algorithm-parameters}, \cref{eq:field-radius},  \cref{thm:surrogate}, \cref{lem:scale-and-round}, \cref{obs:U-coeffs}, and \cref{def:PQdQf} that
\begin{align*}
&K=4\delta^{-2}\leq m,\qquad R_0=n^2(n+p)\sqrt{K}, \qquad \eta^{-1}=\frac{\sqrt{K}}{2(s+1)}\leq\frac{\sqrt{m}}{2},\qquad
T=1+\left\lceil\frac{256n^3R_0^2}{\eta^2}\right\rceil,\\
&s  = O(\log(\eta^{-1})),\qquad 
\log H_U  = O\bigl(\log m+m\log n\bigr),\qquad 
\log D_U  = O\bigl(m\log m+m\log K\bigr),\\
&\log S  = O(m\log n),\qquad
P  = O(\log n+\log R_0+\eta^{-1}),\\
&Q_d  = O(P+\log n),\qquad
Q_f  = O(P+\log m+\log S+\log H_U).
\end{align*}
This gives
\[
R_0,n,Q_d,p,m,Q_f,\log H_U,\log D_U,s,T,P=\text{poly}(n,p,m),\qquad   S= n^{O(m)}.
\]
Thus, the above analysis shows that \cref{alg:minimize-surrogate} finishes in $n^{O(m)}\text{poly}(p,m)$ time, with the final output $\widehat F=\min_{k\in\mathcal I}\widehat f(\theta_k)$ satisfying $\ell(\widehat F)=\text{poly}(n,p,m)$.
\end{proof}

\end{document}